\theoremstyle{plain} 
\newtheorem{thm}{Theorem}
\newtheorem{cor}{Corollary}
\newtheorem{prop}{Proposition}
\newtheorem{lem}{Lemma}
\theoremstyle{definition}
\newtheorem{defn}{Definition}
\newtheorem*{prinrep}{Principle of Representation}
\newtheorem*{prinexp}{Principle of Expressiveness}
\newtheorem*{prinpl}{Principle of Plausibility}
\newtheorem*{princom}{Principle of Minimal Complexity}
\newtheorem{ex}{Example}
\newtheorem{remark}{Remark}
\newtheorem*{question}{Question}
\theoremstyle{remark}
\newcommand{\prob}{\mathsf{P}} 
\newcommand{\E}{\mathsf{E}}
\newcommand{\bin}{{\sf Bin}}
\newcommand{\unif}{{\sf Unif}}
\newcommand{\nm}{{\sf N}}
\newcommand{\expo}{{\sf Exp}}
\newcommand{\gam}{{\sf Gamma}}
\newcommand{\chisq}{{\sf ChiSq}}
\newcommand{\mult}{{\sf Mult}}
\newcommand{\bet}{{\sf Beta}}
\newcommand{\cau}{{\sf Cauchy}}
\newcommand{\ber}{{\sf Ber}}
\newcommand{\RR}{\mathbb{R}}
\newcommand{\LL}{\mathbb{L}}
\newcommand{\XX}{\mathbb{X}}
\newcommand{\YY}{\mathbb{Y}}
\newcommand{\ZZ}{\mathbb{Z}}
\newcommand{\TT}{\mathbb{T}}
\renewcommand{\SS}{\mathbb{S}}
\newcommand{\iid}{\overset{\text{\tiny iid}}{\,\sim\,}}
\newcommand{\ind}{\overset{\text{\tiny ind}}{\,\sim\,}}
\newcommand{\model}{\mathscr{P}}
\newcommand{\prior}{\mathsf{Q}}
\newcommand{\credal}{\mathscr{Q}}
\newcommand{\cred}{\mathscr{C}}
\newcommand{\lPi}{\underline{\Pi}}
\newcommand{\uPi}{\overline{\Pi}}
\newcommand{\uprior}{\overline{\mathsf{Q}}}
\newcommand{\lOmega}{\underline{\Omega}}
\newcommand{\uOmega}{\overline{\Omega}}
\newcommand{\uGamma}{\overline{\Gamma}}
\newcommand{\lGamma}{\underline{\Gamma}}
\newcommand{\lprob}{\underline{\mathsf{P}}}
\newcommand{\uprob}{\overline{\mathsf{P}}}
\title{Valid and efficient imprecise-probabilistic inference with partial priors, II. General framework}
\author{Ryan Martin\footnote{Department of Statistics, North Carolina State University, {\tt rgmarti3@ncsu.edu}}}
\date{\today}
\begin{document}

\maketitle 

\begin{abstract}
Bayesian inference requires specification of a single, precise prior distribution, whereas frequentist inference only accommodates a vacuous prior.  Since virtually every real-world application falls somewhere in between these two extremes, a new approach is needed.  This series of papers develops a new framework that provides valid and efficient statistical inference, prediction, etc., while accommodating partial prior information and imprecisely-specified models more generally.  This paper fleshes out a general inferential model construction that not only yields tests, confidence intervals, etc.~with desirable error rate control guarantees, but also facilitates valid probabilistic reasoning with  de~Finetti-style no-sure-loss guarantees.  The key technical novelty here is a so-called outer consonant approximation of a general imprecise probability which returns a data- and partial prior-dependent possibility measure to be used for inference and prediction.  Despite some potentially unfamiliar imprecise-probabilistic concepts in the development, the result is an intuitive, likelihood-driven framework that will, as expected, agree with the familiar Bayesian and frequentist solutions in the respective extreme cases.  More importantly, the proposed framework accommodates partial prior information where available and, therefore, leads to new solutions that were previously out of reach for both Bayesians and frequentists.  Details are presented here for a wide range of examples, with more practical details to come in later installments.


\smallskip

\emph{Keywords and phrases:} Bayesian; Choquet integral; coherent; consonant; credal set; fiducial; frequentist; inferential model; likelihood; possibility measure.
\end{abstract}

\vfill

\pagebreak

\tableofcontents

\pagebreak

\section{Introduction}
\label{S:intro}

\begin{quote}
{\em As Bayes perceived, the concept of Mathematical Probability affords a means, in some cases, of expressing inferences from observational data, involving a degree of uncertainty, and of expressing them rigorously, in that the nature and extent of the uncertainty is specified with exactitude, yet it is by no means axiomatic that the appropriate inferences...
should always be rigorously expressible in terms of this same concept.} \citep[][p.~40]{fisher1973}
\end{quote}

As stated in Part~I \citep{martin.partial}, the overarching goal of this series is to develop a new and unified ``probabilistic'' framework in which valid and efficient statistical inference, prediction, decision-making, model assessment, etc., can be carried out.  This framework contains the familiar frequentist and Bayesian schools as (extreme) special cases, in addition to providing new solutions to problems that these existing frameworks can't solve.  This very optimistic goal will be achieved by leveraging the previously-untapped (by statisticians at least) power of {\em imprecise probability}.  In its simplest form, when genuine partial prior information is available---too much to justify ignoring and carrying out a frequentist analysis but too little to justify a single prior distribution and corresponding Bayesian analysis---this new framework accommodates exactly the information available in the form of an imprecise prior distribution.  This creates opportunities for efficiency gain compared to a frequentist solution without the risk of bias resulting from the strong assumptions that a Bayesian solution requires.  Part~I of the series laid the groundwork, so the goal here in Part~II is to show specifically {\em how} the aforementioned opportunities can be realized, that is, specifically how can partial prior information be combined with a posited statistical model and observable data in a way that simultaneously makes intuitive sense and guarantees statistical inference, etc.~is valid and efficient.  


To set the scene, suppose a statistical model---mathematically, a collection of probability distributions---is posited to describe the sampling variability in the observable data $Y \in \YY$.  These probability distributions are denoted by $\prob_{Y|\theta}$, depending on a parameter $\theta \in \TT$ that's {\em uncertain}.  Throughout this work I'll use an upper-case $\Theta$ to denote the uncertain variable, and a lower case $\theta$ to denote generic values of it.  Then $\prob_{Y|\theta}$ is effectively the conditional distribution of $Y$, given the uncertain variable $\Theta$ takes on value $\theta$, but no (precise) joint distribution is required. The goal is to quantify uncertainty about $\Theta$ after the data $Y=y$ has been observed.  More formally, the goal is to construct what I call an {\em inferential model} or IM, which is a mapping that takes data $y \in \YY$, model assumptions, and perhaps partial prior information as input, and returns an {\em imprecise probability distribution} $(\lPi_y, \uPi_y)$ on $\TT$ that can be used for uncertainty quantification and inference; see Section~\ref{S:background} below for a recap of notation and terminology from Part~I, in particular, what I mean by ``imprecise probability distribution'' and ``partial prior.''  From this imprecise-probabilistic uncertainty quantification, one can easily derive hypothesis tests, confidence regions, and other statistical decision procedures if desired, and the validity and efficiency of the IM will carry over to the derived procedures.

The IM definition is very flexible, allowing for familiar kinds of precise probabilities, such as Bayesian \citep{bernardo.smith.book, ghosh-etal-book}, fiducial \citep{fisher1935a, fisher1973, zabell1992, seidenfeld1992}, generalized fiducial \citep{hannig.review, williams.intro.gfi}, confidence distributions \citep{schweder.hjort.book, xie.singh.2012, thornton.xie.cd}, etc., and some less familiar imprecise probabilities, such as belief functions in the spirit of \citet{dempster1967, dempster1968a, dempster1968b, dempster2008}, \citet{shafer1976, shafer1982}, and others, including \citet{denoeux2014}, \citet{denoeux.li.2018}, and \citet{wasserman1990b}, and generalized Bayes as in \citet{walley1991, walley2002}, \citet{augustin.etal.bookchapter}, and elsewhere.  The point of my hyper-inclusive definition is to set a baseline objective and to remove any bias or stigma\footnote{For example, many associate fiducial inference with ``Fisher's biggest blunder'' \citep{efron1998} and would dismiss it offhand based on that comment alone. But Efron goes on to say that ``something like fiducial inference will play an important role'' in the 21st century, so he's not suggesting that it be dismissed. In fact, ``There is still more to be learned from Fisher, even when he seems clearly wrong, than from any other contributor to statistical thinking'' \citep{dawid1991}.} associated with their names.  If all of these ideas fall under the same IM umbrella and are aiming to achieve the same general objectives, then the discussion can focus, first, on what kind of properties IMs should have and, second, on determining which, if any, of the available candidates have those desirable properties.  

Most of the previous work in this area has focused on the case where the available prior information about $\Theta$ is vacuous, i.e., literally nothing is known about $\Theta$ other than $\Theta \in \TT$.  Even Bayesians focus on, e.g., credible interval coverage probabilities, asymptotic posterior concentration, etc.~in the vacuous prior setting.  Despite being a relatively extreme situation in the big picture, the vacuous prior case is of practical importance because researchers want to work on new problems in which they have little experience.  In \citet{imbasics, imbook} and, more recently, \citet{martin.nonadditive, imchar}, I argued that IMs ought to satisfy a so-called {\em validity property} which, in the case of a vacuous prior, takes the form \eqref{eq:valid.vacuous} below.  On the one hand, validity differs from the usual frequentist considerations because it takes as primitive that the IM output will be used for ``probabilistic reasoning''---e.g., small (upper) probability assigned to a hypothesis $A \subset \TT$ about $\Theta$ suggests that there's little support for its truthfulness in the data---and ensures that the associated logic is reliable.  On the other hand, validity completely aligns with frequentism because it implies that statistical procedures derived from the IM achieve the usual error rate control properties.  So, one can't criticize validity as being irrelevant or too weak.  The criticism begins only after the validity condition is enforced and we see which IMs pass the test.  \citet{balch.martin.ferson.2017} showed that {\em validity fails} for any IM whose output is an ordinary countably additive probability distribution for $\Theta$, which includes (subjective, empirical, or default prior) Bayes, (generalized) fiducial, confidence distributions, etc.; see, also, \citet{prsa.response}.  While these additive IMs can be used to construct statistical procedures with frequentist error rate control properties, they can't provide reliable probabilistic reasoning.\footnote{In retrospect, this isn't surprising. A probability distribution must carry an enormous amount of information in order to provide any summary requested of it with complete precision, so it's unrealistic to expect that all this can be done reliably using only data, model, and thin air.  As \citet{fraser.copss} said ``[Bayes's formula] does not create real probabilities from hypothetical probabilities.'' This limitation can be overcome by lowering our expectations of how informative the IM's output should be.}  Thus the controversy: to achieve valid probabilistic reasoning, one {\em must} abandon the comfort and familiarity of ordinary/precise probability.  This is scary, I agree. Fortunately, what's needed to achieve valid probabilistic reasoning isn't very deep into the unfamiliar territory of imprecise probability. 

The fact is, imprecise probabilities are everywhere in statistics.  For example, p-values correspond to imprecise probabilities \citep[e.g.,][]{impval} and make up an important part of what I referred to above as valid probabilistic reasoning.  More generally, {\em every statistical procedure} that provably controls error rates at the nominal level corresponds to a valid IM and an imprecise probability \citep{imchar}.  So, I'm not asking folks to give up their go-to methods, e.g., t-tests and analysis of variance, only to see them through an imprecise-probabilistic lens.  This change of perspective will prove to be generally valuable moving forward.  For one thing, it's well-documented that our obsession with ``null hypothesis significance testing,'' and statistical decision procedures more generally, has and is negatively affecting scientific progress \citep[e.g.,][]{mayo.book.2018, lash2017}, so a new perspective is needed.  Also, the perspective that I'm suggesting both simplifies and clarifies the interpretation of the aforementioned go-to methods: p-values measure the {\em plausibility} of the null hypothesis and confidence intervals contain all those parameter values that are sufficiently {\em plausible}.  These concepts are simple, it's the textbook-style mantra---``if the method is repeatedly applied to many hypothetical data sets, then...''---that creates confusion, along with other fallacies such as
\begin{quote}
{\em the [p-value] fails to provide the probability of the null hypothesis, which is needed to provide a strong case for rejecting it.} \citep{pvalue.ban}
\end{quote}
\citet[][Ch.~3]{fisher1973} readily acknowledges that p-values and other forms of quantitative inference aren't sufficiently informative to justify {\em probabilistic} statements, and he argues that this isn't a shortcoming.  In a similar vein, my claim is that p-values, etc.~are informative enough to justify {\em imprecise-probabilistic}, or {\em possibilistic} statements, which is all that's needed for sound statistical inference; see Section~\ref{S:background}.   

The situation described above, with vacuous prior information about $\Theta$, is common in the frequentist literature, but it misses almost all real-world applications.  The same can be said for the other case common in the Bayesian literature, where a precise prior distribution for $\Theta$ is singled out.  The fact is, the range of problems that naturally fit under the umbrellas associated with the Bayesian or frequentist schools is quite narrow.  Virtually every real problem falls somewhere between these two extremes: there is some genuine prior information about $\Theta$ available that shouldn't be ignored, as it would be by frequentists, but not enough to justify adoption of a complete prior probability distribution for $\Theta$ as Bayesians would require.  To fill this gap, Part~I considers representing {\em a priori} uncertainty about $\Theta$ as a {\em partial prior distribution} that takes the mathematical form of an imprecise probability defined on $\TT$.  The idea behind this partial prior is to capture exactly those cases between the two extremes where some information about $\Theta$ is available, but not enough to identify and wholeheartedly support a complete prior probability distribution.  In what real problem would a subject-matter expert not at least be able to provide information of the form ``I'm 100$x$\% certain that $\Theta$ is in $K \subset \TT$''?  Currently, information of this form is either ignored or artificially embellished upon, thereby pushing the application to one of the two extremes where it doesn't belong.  But this is exactly the situation imprecise probability is designed to accommodate.  

Previous efforts focused on the construction of valid IMs when the prior information is vacuous, but the availability of partial prior information changes things dramatically, and rightfully so.  Aside from the obvious practical question of how to incorporate this partial prior information into the IM construction, there's a more basic---and arguably more important---question of what properties should a partial-prior-dependent IM satisfy.  It's this latter question that was the primary focus of Part~I of this series.  In particular, Part~I proposed a new version of the validity property (see Definition~\ref{def:validity} below), generalizing that previously considered for the vacuous prior case, and explored its statistical and behavioral consequences; more on both of these below.  Part~I also explored a few relatively simple strategies for constructing a partial-prior-dependent IM and, while some of those constructions (e.g., Walley's generalized Bayes) achieved this new validity property, none of them did so efficiently.  Therefore, a more effective construction of valid, partial-prior IMs is needed and the goal here in Part~II is to meet this need.  

Unlike in the vacuous prior case, when partial prior information is available, the validity condition described above need not imply that statistical procedures derived from the valid IM satisfy all the desired properties.  For example, confidence regions---see Equation~\eqref{eq:conf.region} below---derived from an IM that's only valid need not achieve the nominal coverage.  Fortunately, there's a stronger condition, namely {\em strong validity} (Definition~\ref{def:strong.validity} below), from which all the relevant statistical properties are consequences; see Theorem~\ref{thm:valid} below.  Certain behavioral properties, in the spirit of de~Finetti's no-sure-loss, are also consequences of strong validity; see Theorem~\ref{thm:coherent} below.  Another relevant take-away from Part~I is that strong validity can effectively only be achieved by IMs that possess a particular mathematical structure, namely, that of a necessity--possibility measure pair \citep[e.g.,][]{dubois.prade.book} or, equivalently, a consonant belief--plausibility function pair \citep[e.g.,][Ch.~10]{shafer1976}, which are among the simplest of imprecise probability models.  Thanks to the new (complete-class-like) result in Section~\ref{S:faq} below, 
{\em I focus exclusively here on IMs whose output is consonant}.  This means the simplicity and strong validity properties enjoyed by consonant IMs are effectively without loss of generality.  

For the construction of consonant IMs, the chief technical novelty here is the use of a so-called {\em outer consonant approximation} \citep[e.g.,][]{dubois.prade.1990, dubois.etal.2004, hose2022thesis}.  I'll start with a very general setup in which the data analyst posits an imprecise joint distribution for the pair $(Y,\Theta)$.  This generalizes the situation described above where there's an imprecise prior for $\Theta$ and a precise probability distribution for $Y$, given $\Theta=\theta$.  That this model specification is allowed to be imprecise is crucial: it'd be too much of a burden on the data analyst to require that he specify a precise joint distribution.  The present framework is flexible and allows the data analyst to specify only what he's able to justify, so it's fully expected that this specification would be imprecise.  From this imprecise joint distribution for $(Y,\Theta)$, the proposal here is, roughly, to construct a consonant IM by simply approximating the joint imprecise probability distribution (from above) by a simpler, consonant belief--plausibility function pair.  In the literature, this kind of approximation is most commonly used to simplify computation, so its use here to achieve certain statistical objectives is new to my knowledge.  Unbeknownst to me at the time, this consonant approximation was the workhorse behind what I referred to in Part~I as {\em validification}.  Details are presented in Section~\ref{S:blocks}. 

The above description of my proposed IM construction---just find an outer consonant approximation of the imprecise joint distribution for $(Y,\Theta)$---accurately captures the intuition but overlooks two nontrivial challenges.  
\begin{itemize}
\item There's unavoidable ambiguity in the construction of this outer consonant approximation, due to the choice of ``plausibility order.''  The justification for my choice of ordering in Section~\ref{SS:minimal}, which closely follows \citet{hose2022thesis}, is based largely on the following basic principle:
\begin{equation}
\label{eq:dubois.quote}
\text{{\em what is probable must be plausible}. \citep[][p.~121]{dubois.prade.book}} 
\end{equation}
\item The proposed outer approximation approximates the imprecise joint distribution of $(Y,\Theta)$, but the goal is quantifying uncertainty about $\Theta$ relative to the observed value of $Y$, so a choice has to be made between {\em conditioning} or {\em focusing} on $Y=y$.  In Section~\ref{SS:construction}, I argue that the latter is the right choice for the present context, but this raises some technical follow-up questions that need to be addressed.  
\end{itemize} 

The power of this proposed IM construction is revealed in Section~\ref{SS:properties}, where strong validity follows automatically from general results on outer consonant approximations.  Strong validity itself implies that the IM-based procedures, e.g., tests and confidence regions, control frequentist-style error rates, which settles most of the relevant statistical questions.  De~Finetti-style no-sure-loss results can also be established for this new consonant IM, and the proof here is informative in that it relies heavily on the properties of the consonant approximation and how it relates to the posited imprecise joint distribution for $(Y,\Theta)$.  Section~\ref{SS:efficiency} gives a high-level discussion of the efficiency of this new consonant IM.  There I put forward a general {\em Principle of Minimal Complexity} as a guideline to assist in the construction of an efficient IM.  

Starting in Section~\ref{S:precise}, the focus is on specializing the general IM construction to some common statistical settings.  The first is that most-familiar special case of a precise statistical model and imprecise prior.  Looking at the new IM construction in this context reveals some interesting structure in the construction itself.  In particular, one can see that the imprecision in the prior enters into the IM itself it two ways: a {\em calibration} step that's responsible for (strong) validity and a {\em regularization} step that helps this IM be more efficient than one that assumes prior information is vacuous.  These are important insights to understanding how and why the IM works.  There the Principle of Minimal Complexity is related to the familiar dimension-reduction steps based on sufficient statistics and conditioning on ancillaries.  I also address the case where the prior information is complete, i.e., there's just a single prior for $\Theta$ as in the classical Bayesian setting.  There the Principle of Minimal Complexity leads to a consonant IM that depends only on the Bayesian posterior distribution but retains strong validity, etc.  

Section~\ref{SS:examples} presents a number of examples, with numerical illustrations, to show how the proposed solution can be put into practice and to give some concrete examples of what partial prior information might look like and how it can be incorporated.  The short Section~\ref{S:imprecise} aims to address the case in which the statistical model itself is imprecise, and there I focus on cases for dealing with missing or otherwise coarse data.  Deeper investigation into specific practical problems of interest will be considered elsewhere.  Section~\ref{S:discuss} wraps up Part~II of the series and lists a number of (what I think are) interesting open questions and promising future directions to pursue.

\section{Background}
\label{S:background}

As above, I'll consider here a pair of uncertain variables $(Y,\Theta) \in \YY \times \TT$.  The most common situation in the statistical literature is that where $\Theta$ represents the parameter of a precise statistical model for $Y$, i.e., where $(Y \mid \Theta=\theta) \sim \prob_{Y|\theta}$, with $\prob_{Y|\theta}$ an ordinary or precise probability distribution on $\YY$.  Here I'll focus (almost) exclusively on this classical case; but see Section~\ref{S:imprecise} for a brief discussion of the case where the statistical model is imprecise in one or more ways.  It can happen, however, that the inference problem doesn't originate with a parametric statistical model.  In fact, most machine learning applications are of this type---where the to-be-inferred unknown is defined as the minimizer of some expected loss function.  Fortunately, the framework developed here can be extended to such cases, but these details will be presented in a later installment in the series.


Regardless of how $\Theta$ is defined, when the conditional distribution of $Y$, given $\Theta=\theta$, is precise, then an imprecise joint distribution for $(Y,\Theta)$ is determined by an imprecise or partial prior distribution for $\Theta$.  This corresponds to a coherent  lower/upper probability pair $(\lprob_\Theta, \uprob_\Theta)$ on $\TT$.  Coherence ensures that there's a corresponding (closed and convex) collection of precise probability distributions on $\TT$ such that $\lprob_\Theta$ and $\uprob_\Theta$ are the lower and upper envelopes, respectively.  That is, there's a prior {\em credal set} 
\[ \cred(\uprob_\Theta) = \{\prob_\Theta: \prob_\Theta(A) \leq \uprob_\Theta(A) \text{ for all $A \subseteq \TT$}\}, \]
and the lower/upper envelope connection is given by 
\[ \lprob_\Theta(A) = \inf_{\prob_\Theta \in \cred(\uprob_\Theta)} \prob_\Theta(A) \quad \text{and} \quad \uprob_\Theta(A) = \sup_{\prob_\Theta \in \cred(\uprob_\Theta)} \prob_\Theta(A), \quad A \subseteq \TT. \]

Compared to Part~I, here the starting point is a bit more general.  I'll assume that the data analyst specified an ``imprecise joint distribution'' for $(Y,\Theta)$ in the form of a coherent lower/upper probability pair $(\lprob_{Y,\Theta}, \uprob_{Y,\Theta})$.  As above, coherence ensures that there's a corresponding credal set of precise joint distributions on $\YY \times \TT$, such that $\lprob_{Y,\Theta}$ and $\uprob_{Y,\Theta}$ are the lower and upper envelopes, respectively.  That is, 
\[ \cred(\uprob_{Y,\Theta}) = \{\prob_{Y,\Theta}: \prob_{Y,\Theta}(E) \leq \uprob_{Y,\Theta}(E) \text{ for all $E \subseteq \YY \times \TT$}\}, \]
and 
\[ \lprob_{Y,\Theta}(E) = \inf_{\prob_{Y,\Theta} \in \cred(\uprob_{Y,\Theta})} \prob_{Y,\Theta}(E) \quad \text{and} \quad \uprob_{Y,\Theta}(E) = \sup_{\prob_{Y,\Theta} \in \cred(\uprob_{Y,\Theta})} \prob_{Y,\Theta}(E). \]
Virtually all of the imprecise probability models used in applications are coherent, so there is no loss of generality in taking this as the basic setup here. 

The rationale behind my focus on imprecision is two-fold.  First, since an imprecise probability is, by definition, less specific than a precise probability, the former ought to be {\em easier} for the data analyst to specify than the latter in the sense that a less specific model puts less of a burden on the data analyst.  Second, modern efforts in statistics and machine learning focus on developing methods that are ``model agnostic,'' i.e., their construction doesn't require specification of a model and/or the methods' properties are relatively insensitive to variations in the posited model.  This is motivated by the
\begin{quote}
{\em Law of Decreasing Credibility.} The credibility of inference decreases with the strength of the assumptions maintained. \citep[][p.~1]{manski.book}
\end{quote}
The model agnostic perspective is overly conservative---it's rare the data analyst wouldn't be able to narrow down at all the class of plausible models in their application.  Manski's law suggests that there's a spectrum of model assumptions over which the credibility of inferences varies.  By allowing the model specification to be imprecise, I'm giving data analysts the flexibility that the all-or-nothing, precise-or-agnostic perspective lacks.  

The goal is to leverage the connection between $Y$ and $\Theta$ defined by the (imprecise) model to make inference about $\Theta$ based on the observed $Y=y$.  As described in Part~I and the references therein, an {\em inferential model} (IM) is a mapping from $y \in \YY$ to a coherent lower and upper probability pair $(\lPi_y, \uPi_y)$ depending explicitly on the observed $y$ and implicitly on the model assumptions, etc.  Coherence implies that these are genuine lower and upper probabilities in the sense that they're lower and upper envelopes, respectively, of the class of compatible (precise) probability distributions on $\TT$, i.e., 
\[ \lPi_y(A) = \inf_{\Pi \in \cred(\uPi_y)} \Pi(A) \quad \text{and} \quad \uPi_y(A) = \sup_{\Pi \in \cred(\uPi_y)} \Pi(A), \quad A \subseteq \TT. \]
Again, asking for the IM output to be coherent in this sense is no practical restriction because most, if not all, of the imprecise probability models discussed in the literature are coherent, e.g., $K$-alternating upper probabilities. 

The IM output being coherent for each fixed $y$ is important for its interpretation; see Section~2.2 in Part~I.  But coherence for each fixed $y$ isn't sufficient for my purposes here.  To ensure that the IM-based probabilistic reasoning to be reliable, it's necessary to enforce a certain calibration between the IM's output and the posited model.  By ``probabilistic reasoning,'' I mean making inferences about assertions concerning the unknown $\Theta$ based on the magnitudes of the IM's imprecise probability output.  Roughly speaking, if $\uPi_y(A)$ is small, then I'd infer $A^c$ in the sense that I'd be inclined to believe that ``$\Theta \not\in A$.''  While that's an intuitively reasonable approach, if $\uPi_Y(A)$ tends to be small, as a function of $Y$, even when $\Theta \in A$, then this type of inference wouldn't be reliable.  There's a dual to the above based on lower probabilities, but I'll not discuss this here.
To avoid this lack of reliability, I introduced in Part~I the following notion of {\em validity}.  

\begin{defn}
\label{def:validity}
An IM $y \mapsto (\lPi_y, \uPi_y)$ is {\em valid} with respect to $\uprob_{Y,\Theta}$ if 
\begin{equation}
\label{eq:validity}
\uprob_{Y,\Theta}\{ \uPi_Y(A) \leq \alpha, \, \Theta \in A \} \leq \alpha, \quad \text{for all $A \subseteq \TT$ and all $\alpha \in [0,1]$}. 
\end{equation}
There's an equivalent condition in terms of the lower probability component of the IM output, but this won't be needed below. 
\end{defn}

The intuition behind this definition is that the joint event, ``$\uPi_Y(A) \leq \alpha, \Theta \in A$,'' is one where there's risk of erroneous inference, so the bound \eqref{eq:validity} ensures that the data analyst can control the $\uprob_{Y,\Theta}$-probability of this undesirable event.  For the vacuous-prior cases considered in \citet{imchar} and earlier references, the validity property in \eqref{eq:validity} reduces to the following:
\begin{equation}
\label{eq:valid.vacuous}
\sup_{\theta \in A} \prob_{Y|\theta}\{ \uPi_Y(A) \leq \alpha \} \leq \alpha, \quad \text{for all $A \subseteq \TT$ and all $\alpha \in [0,1]$}. 
\end{equation}
The condition in the above display implies that the IM-based procedures satisfy all the desired statistical properties.  In the more general case considered in Part~I, a stronger notion of validity was needed to establish some of the desired statistical properties.  In this paper, I focus exclusively on strongly valid IMs. 

\begin{defn}
\label{def:strong.validity}
An IM $y \mapsto (\lPi_y, \uPi_y)$, with contour function $\pi_y(\theta) = \uPi_y(\{\theta\})$, is {\em strongly valid} with respect to $\uprob_{Y,\Theta}$ if
\begin{equation}
\label{eq:strong.validity}
\uprob_{Y,\Theta}\{ \pi_Y(\Theta) \leq \alpha \} \leq \alpha, \quad \alpha \in [0,1].
\end{equation}
\end{defn}

For an explanation of why Definition~\ref{def:strong.validity} is stronger than Definition~\ref{def:validity}, see Part~I and \eqref{eq:uniform} below.  As indicated above, my focus here will be on IMs whose output $(\lPi_y, \uPi_y)$ is {\em consonant} \citep[e.g.,][Ch.~10]{shafer1976}.  That is, there exists a function $\pi_y$, called the {\em contour}, such that $\sup_{\theta \in \TT} \pi_y(\theta)=1$ for each $y \in \YY$ and the upper probability 
\[ \uPi_y(A) = \sup_{\theta \in A} \pi_y(\theta), \quad A \subseteq \TT; \]
the lower probability is determined by conjugacy.  The resulting upper probability is $\infty$-alternating---hence, coherent---and, moreover, has the special structure of a possibility measure \citep[e.g.,][]{dubois.prade.book, hose2022thesis, imposs}.

It's worth saying a few words here about the relative simplicity of possibility measures and their connections to statistical inference.  As is clear from the above display, a possibility measure, $\uPi_y$, is fully determined by its contour function, $\pi_y$.  The latter is an ordinary map from $\TT$ to $[0,1]$, which is much simpler than the former which is a map from $2^\TT$ to $[0,1]$.  This is analogous to the simplification that occurs in ordinary/precise probability when calculations can be carried out using a probability mass or density function.  Possibility measures are the only coherent imprecise probabilities with such a simple characterization.  Beyond their simplicity, some similarities between possibility measures and p-values are apparent---if not from the discussion above, then this will become more clear in the following pages.  In particular, possibilities measures have deep connections to both p-values and confidence regions; see Remark~\ref{re:inclusion} below. Fisher makes several remarks about the distinction between p-values/significance tests and probability, e.g., 
\begin{quote}
{\em It is more primitive, or elemental than, and does not justify, any exact probability statement about the proposition.} \citep[][p.~46]{fisher1973}
\end{quote}
But Fisher doesn't offer a concrete alternative to probability as a means for quantifying uncertainty for inference.  Later,\footnote{The quote from Fisher is from the 1973 edition of {\em Statistical Methods for Scientific Inference}, but his ideas and arguments were developed more than 20 years earlier.} Shackle, who was among the first to consider formal alternatives to probability for uncertain reasoning in general, writes  
\begin{quote}
{\em Possibility is an entirely different idea from probability, and it is sometimes, we maintain, a more efficient and powerful uncertainty variable, able to perform semantic tasks which the other cannot.} \citep[][p.~103]{shackle1961}
\end{quote}
How might the state of the foundations of statistics be different had the Fisher--Shackle connection been made earlier?  Better late than never.  There are even important connections between possibility theory and Fisher's fiducial inference (see the brief discussion on page \pageref{page:fiducial.in.credal} below) but I'll save these details for another occasion.

\section{Background-related FAQs}
\label{S:faq}

In this section, I address two ``frequently asked questions'' related to the basic setup in the previous sections.  Specifically, the two questions I address here concern my choice to focus on IMs that are (a)~strongly valid and (b)~consonant.

\subsection{Why strong validity?}

The strong validity property is admittedly very specific and, as will be shown below, leads to some restriction on the mathematical form of the IMs under consideration. So it's perfectly natural to ask for my rationale behind this choice.  In the vacuous prior case, validity and strong validity are equivalent so there was no need to place any special emphasis on the strong version.  In Part~I of this series, which was my first serious consideration of the effects of incorporating partial prior information, I wasn't advocating for one property over another, so again there was no need to justify an emphasis on the strong version.  But it's important that I address this question here since I'm restricting my attention to strongly valid IMs.  

Validity itself ensures that probabilistic reasoning, i.e., basing inferences on the magnitudes of the IM's lower and upper probabilities, can be done in a simple, natural, and reliable way.  But there are other summaries that one would surely want to extract from the IM, and these summaries should be reliable too.  In particular, if valid IMs don't readily provide reliable confidence regions, then users will inevitably look for alternative frameworks.  In the vacuous-prior case, where validity and strong validity are equivalent, I've already established a characterization \citep[][Theorem~6]{imchar} of confidence regions in terms of (strongly) valid IMs, so, in that case, there's nowhere else for users to look.  In the more general case considered in Part~I, where the two notions of validity aren't equivalent, there are challenges that arise with the construction of confidence regions based on IMs that are valid but not strongly valid, which I explain next.  This is my motivation for focusing on strong validity.

Consider an IM whose upper probability output is $\uGamma_y$, and suppose that it's valid but not strongly valid.  Next, for each $y \in \YY$, consider a collection of {\em credible regions}, $\{C_\alpha(y): \alpha \in [0,1]\} \subseteq 2^\TT$, derived from $\uGamma_y$, with the following basic structure:
\begin{itemize} 
\item it's nested in the sense that $C_\alpha(y) \subseteq C_\beta(y)$ for all $\alpha,\beta \in [0,1]$ with $\alpha \geq \beta$; 
\vspace{-2mm}
\item and $\uGamma_y\{C_\alpha(y)^c\} \leq \alpha$ for all $\alpha \in [0,1]$. 
\end{itemize} 
These are very reasonable properties, satisfied by, say, the standard highest-posterior-density credible regions commonly used in (precise-prior) Bayesian analysis.  Beyond these basic structural properties, it's expected that the credible regions are also confidence regions in the sense that 
\begin{equation}
\label{eq:non.coverage}
\uprob_{Y,\Theta}\{ C_\alpha(Y) \not\ni \Theta\} \leq \alpha, \quad \text{for all $\alpha \in [0,1]$}. 
\end{equation}
When the credible regions satisfy \eqref{eq:non.coverage} I'll say that they're {\em credibility--confidence calibrated}.  This calibration between the IM's output and the posited model $\uprob_{Y,\Theta}$ is not automatic.  But this is roughly the kind of calibration that validity aims to achieve, so there's good reason to expect that the confidence property is possible.

For concreteness, consider the generalized Bayes IM \citep[e.g.,][Ch.~6.4]{walley1991}, where $\uGamma_y$ is the upper envelope of the collection of (precise) Bayesian posterior probabilities corresponding to the credal set of (precise) priors.  This was shown in Part~I (Corollary~3) to be valid relative to the posited partial-prior Bayes model.  To my knowledge, there's no general guarantee that a credible region extracted from a valid IM would satisfy \eqref{eq:non.coverage}, but this can be established for credible regions extracted from the generalized Bayes IM; see the discussion following Corollary~3 in Part~I.  

The above credibility--confidence calibration is a desirable property but, unfortunately, it's not at all clear how to determine the generalized Bayes IM's credible regions.  Intuitively, for each $(\alpha,y)$, the set $C_\alpha(y)$ would be defined as the ``smallest'' $A$ such that $\lGamma_y(A) \geq 1-\alpha$.  In the single-prior Bayes case, the highest-posterior-density regions are known to be size-optimal, but there's no analogous result for the generalized Bayes case.  For strongly valid IMs, the extraction of credible regions is straightforward and the credibility--confidence calibration is automatic, as I'll show in Section~\ref{SSS:statistical}.  

Continuing here with my analysis, suppose we have a valid IM $\uGamma_y$ with credible regions, $\{C_\alpha(y): \alpha \in [0,1]\}$, that satisfy \eqref{eq:non.coverage}.  Define the function
\[ \pi_y(\theta) = \sup\{\alpha \in [0,1]: C_\alpha(y) \ni \theta\}, \quad \theta \in \TT. \]
Since the credible regions are nested, it follows that $\sup_\theta \pi_y(\theta) = 1$ for each $y$.  In that case, I can define a (consonant) IM via the rule 
\[ \uPi_y(A) = \sup_{\theta \in A} \pi_y(\theta), \quad A \subseteq \TT. \]
This IM is strongly valid because 
\[ \uprob_{Y,\Theta}\{ \pi_Y(\Theta) \leq \alpha \} = \uprob_{Y,\Theta}\{ C_\alpha(Y) \not\ni \Theta \} \leq \alpha, \]
where the inequality follows from the assumed credibility--confidence calibration \eqref{eq:non.coverage}.  Moreover, the inferences drawn by the derived strongly valid IM $\uPi_y$ are similar to those drawn by the primitive valid IM $\uGamma_y$, e.g., 
\begin{itemize}
\item if $A \subseteq C_\alpha(y)^c$, then both $\uGamma_y(A) \leq \alpha$ and $\uPi_y(A) \leq \alpha$, and 
\vspace{-2mm}
\item if $A \supseteq C_\alpha(y)$, then both $\lGamma_y(A) \geq 1-\alpha$ and $\lPi_y(A) \geq 1-\alpha$.
\end{itemize}
The point is that for many relevant assertions concerning $\Theta$, the two IMs won't differ substantially in their conclusions.  So, given a valid IM with credible regions that satisfy \eqref{eq:non.coverage}, one can construct a strongly valid IM that has the same credible regions and no substantial difference in inferences.  From this observation, and the fact that strongly valid IMs lead to a more direct construction of confidence regions, there's no obvious reason not to restrict attention to strongly valid IMs.  

There is one potential concern about deviating from the generalized Bayes IM that's valid but not strongly valid: coherence in the sense of \citet[][Sec.~6.5]{walley1991}, which concerns the preservation of internal rationality when updating prior to posterior beliefs, is (basically) only achieved by generalized Bayes.  As I'll discuss in Section~\ref{SSS:behavioral} below, it turns out that {\em virtually} all of the coherence properties achieved by generalized Bayes can also be achieved by a strongly valid IM.  In addition to strong validity and virtually all the same coherence properties, the IMs that I'm proposing here are considerably more efficient than those that can be built using generalized Bayes; see, e.g., Figure~\ref{fig:walley.binom}.

\subsection{Why consonance?}

It's imperative that I make the following point clear.  My choice to focus on consonant IMs is not arbitrary, not motivated by simplicity, not because \citet[][Ch.~10]{shafer1976} said it's reasonable, and not a restriction in any way.  As I show below, the only efficient way to achieve the desired strong validity property is with a consonant IM.  

Recall the strong validity property in \eqref{eq:strong.validity}, where $\pi_y(\theta) = \uPi_y(\{\theta\})$ is the contour function.  An interpretation of strong validity is that the random variable $\pi_Y(\Theta)$ is stochastically no smaller than $\unif(0,1)$ under any joint distribution $\prob$ for $(Y,\Theta)$ in the credal set $\cred(\uprob)$.  This requires that the function $(y,\theta) \mapsto \pi_y(\theta)$ can take values arbitrarily close to 1, but non-consonant capacities would typically have contour functions that are bounded away from unity.  For example, the IM construction in \citet{walley2002} is ingenious, but he can't rule out the possibility that his contour function is bounded away from 1,\footnote{The example in his Section~4.4 is the easiest one to see this. That same example also shows that Walley's confidence intervals can suffer from a loss of efficiency compared to classical intervals, though he argues there's good reason for this; see Section~\ref{SS:vacuous} below.} so the result he proves isn't quite as strong as strong validity. 

As the previous paragraph indicates, while most capacities have contours bounded away from unity, there are some non-consonant upper probabilities for which the contour function can take values arbitrarily close to 1. These correspond to capacities determined by what \citet{dubois.prade.1990} call {\em consistent} random sets, i.e., those whose focal elements have non-empty intersection.  It turns out, however, that I can safely dismiss those IMs with the structure of a capacity induced by a consistent random set because there always exists a consonant IM that's at least as efficient.  

The following is a variation on Proposition~1 in \citet{dubois.prade.1990}; see, also, \citet{dubois.prade.1986.set}.  It also closely resembles Theorem~4.3 in \citet{imbook} which, in the context of random sets, shows that there's no loss of generality or efficiency in focusing on those with nested focal elements. It can be viewed as a sort of {\em complete-class theorem} that says for any strongly valid IM there exists another that's consonant and no less efficient in the sense that the latter's contour is pointwise smaller than the former.  Equivalently, the latter is no less specific than the former, in the sense of, e.g., \citet{dubois.prade.1986}, so would be preferred; see, also, the Principle of Expressiveness in Section~\ref{S:blocks} below.  

\begin{lem}
\label{lem:complete.class}
For any strongly valid IM with upper probability $\uGamma_y$, there exists a consonant IM with upper probability $\uPi_y$ such that $\uPi_y(A) \leq \uGamma_y(A)$ for all $A \subseteq \TT$. 
\end{lem}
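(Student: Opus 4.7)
The plan is to construct $\uPi_y$ as the consonant upper probability sharing the singleton contour of $\uGamma_y$. Concretely, set $\pi_y(\theta) := \uGamma_y(\{\theta\})$ and define $\uPi_y(A) := \sup_{\theta \in A} \pi_y(\theta)$ for each $A \subseteq \TT$. This is the canonical candidate: among all possibility measures that agree with $\uGamma_y$ on singletons, this one is the unique consonant object determined by that contour, and it is manifestly no larger than $\uGamma_y$ on each $A$.

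The dominance $\uPi_y(A) \leq \uGamma_y(A)$ is essentially one line of monotonicity: for any $\theta \in A$, the inclusion $\{\theta\} \subseteq A$ gives $\pi_y(\theta) = \uGamma_y(\{\theta\}) \leq \uGamma_y(A)$, and taking the supremum over $\theta \in A$ preserves the bound. The more delicate step is verifying that $\uPi_y$ is genuinely consonant in the strict sense used in the paper, namely $\sup_\theta \pi_y(\theta) = 1$ for every $y$. Here I would invoke the structural observation made just before the lemma: strong validity forces $\pi_Y(\Theta)$ to take values arbitrarily close to unity, and the only non-consonant capacities admitting contour values arbitrarily close to $1$ are those induced by consistent random sets, whose focal elements have non-empty common intersection. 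Choosing any $\theta^\star$ in that intersection yields $\pi_y(\theta^\star) = 1$, so the contour is properly normalized. Strong validity of the derived $\uPi_y$ is then automatic, since it inherits its contour from $\uGamma_y$ and the strong validity condition depends only on the contour.

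The main obstacle is making the normalization $\sup_\theta \pi_y(\theta) = 1$ hold pointwise in $y$ rather than only in the stochastic sense that strong validity supplies directly. Since \eqref{eq:strong.validity} is a statement about the joint distribution of $\pi_Y(\Theta)$, it leaves open the possibility of an exceptional set of $y$ on which the contour sup falls short of unity. On any such set one can harmlessly redefine $\pi_y$, for example by setting $\pi_y(\theta_\star) = 1$ at a single arbitrarily chosen $\theta_\star \in \TT$, thereby restoring strict consonance without disturbing either the dominance inequality or the strong validity guarantee. The argument overall is a close adaptation of Proposition~1 of \citet{dubois.prade.1990} to the $y$-indexed family of capacities that appears in the IM framework.
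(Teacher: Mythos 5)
Your proof is correct and follows essentially the same route as the paper's: take $\pi_y(\theta) = \uGamma_y(\{\theta\})$ as the contour, get dominance $\sup_{\theta\in A}\pi_y(\theta) \leq \uGamma_y(A)$ from monotonicity of the capacity, and obtain the normalization $\sup_\theta \pi_y(\theta)=1$ from the consistent-random-set observation preceding the lemma. Your extra remark about patching the contour on a possible exceptional set of $y$ is a careful refinement of a point the paper's proof passes over silently, but it does not change the argument.
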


\begin{proof}
For a given $\uGamma_y$, note that its contour function $\gamma_y$ satisfies $\sup_{\theta \in \TT} \gamma_y(\theta) = 1$ because it's induced by a consistent random set.  Moreover, for general capacities, with $\gamma_y(\theta) = \uGamma_y(\{\theta\})$, monotonicity implies that 
\[ \sup_{\theta \in A} \gamma_y(\theta) \leq \uGamma_y(A), \quad \text{for all $A \subseteq \TT$}. \]
Define $\uPi_y$ to be consonant with contour $\pi_y = \gamma_y$. Then the above display implies $\uPi_y(A) \leq \uGamma_y(A)$ for all $A$, just as the theorem claims.
\end{proof}

The above result implies that it suffices to focus on strongly valid IMs that are consonant.  It also shouldn't be ignored that consonant IMs are among the simplest computationally.  So, there's a ``best of both worlds'' conclusion that can be made here about consonant IMs, in other words, {\em consonance is king}.  The remainder of the paper, therefore, focuses on strategies for constructing strongly valid, consonant IMs. 

A related ``frequently asked question'' that I've heard \citep[e.g.,][]{cui.hannig.im} concerns a unique feature of possibility measures, namely, that the lower probability assigned to $A$ or to $A^c$ must be 0 for all $A$.  In other words, the lower probability being positive implies the upper probability is 1 and, in the other direction, the upper probability being less than 1 implies the lower probability is 0.  This property is well-known and is discussed in \citet[][p.~221]{shafer1976} and even in \citet{shackle1961}.  Then the question I've been asked is this: if $(\lPi_y, \uPi_y)$ is a consonant IM, then 
\[ \text{{\em isn't the gap between $\lPi_y(A)$ and $\uPi_y(A)$ too wide?}} \]
My response is ``too wide'' compared to what?  The width of this gap characterizes the IM's degree of imprecision, which is indirectly related to its efficiency.  The goal is to be as precise/efficient as possible while maintaining validity, but it's not clear at all how much imprecision this requires.  So, how wide is ``too wide''?  This gap makes people uncomfortable because they're imagining that there's a ``true precise posterior probability'' that the IM is trying to capture between $\lPi_y$ and $\uPi_y$.  If that were the case, then a wide gap between the IM's lower and upper probabilities would be a concern.  But that's not what's going on here---{\em there's no ``true precise posterior probability'' that the IM is approximating/estimating}.  So, while I can interpret the IM output as a set of precise probabilities tucked between the lower and upper bounds, which is valuable for probabilistic reasoning, the question of whether and how tightly those bounds contain a ``true precise posterior probability'' isn't applicable.  As will be shown below, there are cases where a consonant IM, one whose gap is allegedly ``too wide,'' yields inference which is maximally efficient, so it must be that its width is {\em just right} and, therefore, this criticism doesn't hold up.  To demonstrate that the gap between the consonant IM's lower and upper bounds is ``too wide'' would require finding a non-consonant IM, with tighter bounds, for which even just validity can be proved.  In my experience, validity of non-consonant IMs is difficult to prove and, in the cases where it can be proved, the inference tends to be inefficient, suggesting that, in fact, it's the valid, non-consonant IM's gap that's demonstrably ``too wide.''

\section{Building blocks}
\label{S:blocks}

\subsection{Outer consonant approximations}

In this section, I'll consider a generic uncertain variable $X$ taking values in $\XX$.  For now the reader can think of $X$ as just a simplified notation for $(Y,\Theta)$.  But it should be clear, at least intuitively (see Section~\ref{SS:construction} for more details), that the roles played by $Y$ and by $\Theta$ are very different, so I can't capture the nuance of the statistical inference problem if I replace $(Y,\Theta)$ by $X$.  The details I want to present here, however, aren't specific to the statistical inference problem, so I think it helps to make the notation as simple as possible.  Moreover, seeing how the statistical inference case differs from this general case is potentially helpful for understanding and appreciating what follows. 

Let $\uprob=\uprob_X$ be a coherent upper probability, as described in Section~\ref{S:background} above, defined on $\XX$; there's a corresponding lower probability $\lprob$ but this won't be needed here. If $\uprob$ itself is consonant, then the details in the remainder of this section are trivial.  However, for the statistical inference applications I have in mind, the corresponding $\uprob$ will virtually never be consonant (see Section~\ref{S:general}), so the non-consonant case is of primary relevance.  Henceforth, I'll assume that $\uprob$ is not consonant.  

It will be of interest below to approximate the non-consonant $\uprob$ by another upper probability that is consonant.  My particular motivation for this is unique, but this question of approximating one imprecise probability by another simpler one is common in the literature.  Since possibility measures are consonant, and are the simplest of the imprecise probabilities, these are often the models that serve as the approximant; see, e.g., \citet{dubois.prade.1990}, \citet{dubois.etal.2004}, \citet{baroni2004}, and \citet{hose.hanss.2021}. To fix notation, let $\uOmega$ denote a generic consonant upper probability on $\XX$, intended to serve as an approximation of $\uprob$.  There are a variety of ways to formulate the approximation of $\uprob$ by a consonant $\uOmega$, but the one I specifically have in mind is motivated by core principles often appealed to in the imprecise probability literature.  My presentation of these core principles and related results---here and in the next subsection---closely follows that in Dominik Hose's PhD thesis \citep[][Sec.~2.3.2.1--2.3.2.2]{hose2022thesis}; I'm showing most of the details since few readers will have immediate access to Dr.~Hose's thesis.


The first principle is what Hose calls the {\em Principle of Representation}, based on a slight tweak of the fundamental premise \eqref{eq:dubois.quote}, i.e., 
\[ \text{``what is probable must be [possible].''} \]  
I'll state the principle first and then explain its meaning and implications. 

\begin{prinrep}
Choose $\uOmega$ such that $\cred(\uprob) \subseteq \cred(\uOmega)$.
\end{prinrep} 

A possibility measure $\uOmega$ that satisfies the above condition is called an {\em outer consonant approximation} of $\uprob$.  The reason for this name is that the statement ``$\cred(\uprob) \subseteq \cred(\uOmega)$,'' which of course means that any $\prob$ in $\cred(\uprob)$ is also in $\cred(\uOmega)$, implies that if $\prob \leq \uprob$, then $\prob \leq \uOmega$.  So $\uOmega$ is like an upper bound on $\uprob$.  The implication is, as the fundamental premise says, if $\uOmega$ is interpreted as a measure of possibility, then an event concerning the value of $X$ can be no less possible than it is probable.  Since $\uprob$ is presumably a description of what is known/believed about $X$, then a connection like the one in the Principle of Representation is needed in order for the uncertainty quantification based on (the subjective) $\uOmega$ to be relevant in real-world applications.  

If the reader accepts the Principle of Representation, so that the goal is to construct an outer approximation $\uOmega$ of $\uprob$, then the next obvious question is: how to do the construction?  The construction I adopt here, referred to as the {\em imprecise-probability-to-possibility transform} in \citet{hose.hanss.2020, hose.hanss.2021}, proceeds as follows.  Take $h: \XX \to \RR$ to be a measurable function and define the upper probability 
\[ \uOmega_h(E) = \uprob\{ h(X) \leq \textstyle\sup_E h \}, \quad E \subseteq \XX. \]
The corresponding lower probability, $\lOmega_h$, is defined via conjugacy, but won't be needed here. The claim is that $\uOmega_h$ is an outer consonant approximation of $\uprob$.  

\begin{prop}
\label{prop:outer}
For any mapping $h$ as described above, the upper probability $\uOmega_h$ is an outer consonant approximation of $\uprob$ in the sense that $\cred(\uOmega_h) \supseteq \cred(\uprob)$. 
\end{prop}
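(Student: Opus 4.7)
The plan is to split the proposition into two assertions — consonance of $\uOmega_h$ (implicit in the wording ``consonant approximation'') and the credal-set inclusion $\cred(\uprob) \subseteq \cred(\uOmega_h)$ that is stated explicitly — and handle each separately.

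For consonance, I would identify the candidate contour $\omega_h(x) := \uprob\{h(X) \leq h(x)\}$ and verify the possibility-measure identity $\uOmega_h(E) = \sup_{x \in E} \omega_h(x)$ for every $E \subseteq \XX$. Since the map $t \mapsto \uprob\{h(X) \leq t\}$ is non-decreasing by monotonicity of $\uprob$, this reduces to an interchange of $\sup$ and $\uprob$ along a nested family of events: that interchange is automatic when $\sup_E h$ is attained on $E$, and otherwise follows from continuity of $\uprob$ along nested sets, which is available in the coherent / $\infty$-alternating framework emphasized in Section~\ref{S:background}. Normalization $\sup_x \omega_h(x) = 1$ then follows from $\uprob(\XX) = 1$ by the same reasoning.

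For the credal-set inclusion, the approach rests on a single set-theoretic observation. Fix $\prob \in \cred(\uprob)$ and an arbitrary $E \subseteq \XX$; by the very definition of supremum, every $x \in E$ satisfies $h(x) \leq \sup_E h$, so $E \subseteq \{x \in \XX : h(x) \leq \sup_E h\}$. Monotonicity of $\prob$ together with the defining inequality $\prob \leq \uprob$ on $\cred(\uprob)$ then yields
\begin{equation*}
\prob(E) \;\leq\; \prob\{h(X) \leq \textstyle\sup_E h\} \;\leq\; \uprob\{h(X) \leq \textstyle\sup_E h\} \;=\; \uOmega_h(E).
\end{equation*}
Since $E$ was arbitrary, $\prob \in \cred(\uOmega_h)$, and hence $\cred(\uprob) \subseteq \cred(\uOmega_h)$.

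The main obstacle, as far as I see one, is the consonance step in the degenerate case where $\sup_E h$ is not attained on $E$; resolving it requires invoking continuity of $\uprob$ on nested families, which is standard in the coherent setting adopted here. Once consonance is secured, the outer-approximation half is essentially transparent from the one-line set containment $E \subseteq \{h \leq \sup_E h\}$ — the whole construction is tailor-made so that this containment does all the work.
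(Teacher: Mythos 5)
Your proposal is correct and follows essentially the same route as the paper's proof: the outer-approximation half rests on exactly the same containment $E \subseteq \{x : h(x) \leq \sup_E h\}$ plus monotonicity (the paper phrases it as $\uprob(E) \leq \uOmega_h(E)$ for all $E$, which is equivalent to your pointwise check over $\prob \in \cred(\uprob)$), and the consonance half likewise reduces to normalizing the contour $\omega_h$. The only difference is that you are slightly more careful than the paper about the identity $\uOmega_h(E) = \sup_{x \in E}\omega_h(x)$ when $\sup_E h$ is not attained, a degenerate case the paper passes over in silence.
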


\begin{proof}
The first step is to show that $\uOmega_h$ is consonant.  This is more-or-less immediate just by looking at the contour function corresponding to $\uOmega_h$, i.e., 
\begin{equation}
\label{eq:contour}
\omega_h(x) := \uOmega_h(\{x\}) = \uprob\{ h(X) \leq h(x)\}, \quad x \in \XX. 
\end{equation}
Clearly $\omega_h(x)$ is in $[0,1]$ for all $x$; it remains to show that $\sup_{x \in \XX} \omega_h(x) = 1$.  That this supremum is $\leq 1$ is obvious.  To show the opposite inequality, and hence equality, take a particular $\prob$ in $\cred(\uprob)$ and notice that 
\begin{align*}
\sup_{x \in \XX} \omega_h(x) \geq \sup_{x \in \XX} \prob\{h(X) \leq h(x)\} = \prob\Bigl\{ h(X) \leq \sup_{x \in \XX} h(x) \Bigr\},
\end{align*}
where the last equality is due to monotonicity of $\prob$.  Since the right-hand side is clearly equal to 1, it follows that $\uOmega_h$ is consonant.  

To prove that $\uOmega_h$ is an outer approximation, it suffices to show that 
\begin{equation}
\label{eq:outer}
\uprob(E) \leq \uOmega_h(E), \quad \text{for all $E \subseteq \XX$}.
\end{equation}
Of course, $x \in E$ implies $h(x) \leq \sup_E h$ which, in turn, implies $E \subseteq \{x: h(x) \leq \sup_E h\}$.  Then \eqref{eq:outer} follows from the definition of $\Omega_h$ and monotonicity of $\uprob$. 
\end{proof}

Since $\uOmega_h$ is consonant, it's determined by its corresponding possibility contour, $\omega_h$, defined in \eqref{eq:contour} above, through the relation
\[ \uOmega_h(E) = \sup_{x \in E} \omega_h(x), \quad E \subseteq \XX. \]
It'll often be convenient to work directly with the contour function, $\omega_h$.  In particular, the following elementary property of the contour is of fundamental importance.  In fact, this property is the reason why consonance so important to the IM framework.  

\begin{prop}
\label{prop:valid}
If $\omega_h$ as defined above is measurable, then 
\begin{equation}
\label{eq:X.valid}
\uprob\{ \omega_h(X) \leq \alpha \} \leq \alpha, \quad \text{for all $\alpha \in [0,1]$}. 
\end{equation}
\end{prop}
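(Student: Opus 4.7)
The plan is to reduce the statement to the familiar sub-uniform ``probability integral transform'' property for ordinary probability measures, by exploiting the representation of $\uprob$ as an upper envelope of its credal set. Since coherence gives
\[ \uprob\{\omega_h(X) \leq \alpha\} = \sup_{\prob \in \cred(\uprob)} \prob\{\omega_h(X) \leq \alpha\}, \]
it will suffice to show that $\prob\{\omega_h(X) \leq \alpha\} \leq \alpha$ for every $\prob \in \cred(\uprob)$ and then take the supremum over $\prob$.

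For the per-$\prob$ bound, fix $\prob \in \cred(\uprob)$ and let $G_\prob(t) = \prob\{h(X) \leq t\}$ denote the ordinary (one-dimensional) distribution function of the real-valued random variable $h(X)$ under $\prob$. Since $\prob(A) \leq \uprob(A)$ for every $A$, we have the pointwise domination
\[ \omega_h(x) = \uprob\{h(X) \leq h(x)\} \geq G_\prob(h(x)), \quad x \in \XX. \]
Hence $\{x : \omega_h(x) \leq \alpha\} \subseteq \{x : G_\prob(h(x)) \leq \alpha\}$, and monotonicity of $\prob$ yields
\[ \prob\{\omega_h(X) \leq \alpha\} \leq \prob\{G_\prob(h(X)) \leq \alpha\}. \]
Now invoke the classical fact that, for any real random variable $Z = h(X)$ under a precise probability $\prob$ with distribution function $G_\prob$, one has $\prob\{G_\prob(Z) \leq \alpha\} \leq \alpha$ for all $\alpha \in [0,1]$.

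The main obstacle, such as it is, lies in this last classical fact rather than in any genuinely imprecise-probabilistic manipulation. The distribution function $G_\prob$ need not be continuous, so one cannot simply write $G_\prob(Z) \sim \unif(0,1)$; instead one argues with the generalized inverse $G_\prob^{-1}(\alpha) = \sup\{t : G_\prob(t) \leq \alpha\}$, checking separately the cases where $G_\prob$ is continuous at the threshold (equality holds) and where it jumps (strict inequality). This is a standard and brief calculation. Putting the three steps together and taking the supremum over $\prob \in \cred(\uprob)$ completes the proof. It is worth noting that this route sidesteps any need to assume continuity or $2$-alternation of $\uprob$ itself, since all of the needed regularity is borrowed from the precise measures in the credal set, and the measurability hypothesis on $\omega_h$ is used only to ensure that $\{\omega_h(X) \leq \alpha\}$ is an event to which $\prob$ (and hence $\uprob$) can be applied.
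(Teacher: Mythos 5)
Your proposal is correct and follows essentially the same route as the paper's own direct proof: both reduce to a fixed $\prob \in \cred(\uprob)$ via the upper-envelope representation, replace the supremum defining $\omega_h$ by the single measure $\prob$ (your pointwise domination $\omega_h(x) \geq G_\prob(h(x))$ is exactly the paper's step of substituting $\prob' \equiv \prob$), and then invoke the classical sub-uniformity of the probability integral transform. The only difference is presentational, in that you spell out the generalized-inverse argument for possibly discontinuous $G_\prob$ where the paper simply cites the standard result.
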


\begin{proof}
A short indirect proof of this result is given in Remark~\ref{re:inclusion} below.  For the sake of completeness, however, I'll also give a direct proof.  By coherence of $\uprob$, I can write
\[ \omega_h(x) = \uprob\{h(X) \leq h(x)\} = \sup_{\prob' \in \cred(\uprob)} \prob'\{h(X) \leq h(x)\}. \]
Then 
\begin{align*}
\uprob\{\omega_h(X) \leq \alpha\} & = \sup_{\prob \in \cred(\uprob)} \prob_X\Bigl[ \sup_{\prob' \in \cred(\uprob)} \prob_{X'}'\{h(X') \leq h(X)\} \leq \alpha \Bigr] \\
& \leq \sup_{\prob \in \cred(\uprob)} \prob_X\bigl[ \prob_{X'}\{h(X') \leq h(X)\} \leq \alpha \bigr],
\end{align*}
where the subscripts ``$X$'' and ``$X'$'' indicate which random variable the probability calculation is with respect to.  The inequality follows by replacing the supremum over $\prob'$ with $\prob' \equiv \prob$.  The $\prob$-probability on the last line is $\leq \alpha$ by standard results from probability theory, i.e., since the distribution function of a random variable applied to itself is stochastically no smaller than $\unif(0,1)$, so the supremum over $\prob$ is $\leq \alpha$ too.  
\end{proof}

\begin{remark}
\label{re:inclusion}
It turns out that Propositions~\ref{prop:outer} and \ref{prop:valid} are equivalent, thanks to a well-known characterization of $\cred(\uOmega_h)$ in terms of sub-level sets of the contour $\omega_h$ \citep[e.g.,][]{cuoso.etal.2001, dubois.etal.2004}.  Indeed, these characterizations state that  
\[ \prob \in \cred(\uOmega_h) \iff \prob\{\omega_h(X) \leq \alpha\} \leq \alpha \quad \text{for all $\alpha \in [0,1]$}. \]
So, if $\cred(\uprob) \subseteq \cred(\uOmega_h)$, then \eqref{eq:X.valid} holds and, hence, Proposition~\ref{prop:valid}.  Conversely, if \eqref{eq:X.valid} holds, then every $\prob$ in $\cred(\uprob)$ must also be in $\cred(\uOmega_h)$, which implies Proposition~\ref{prop:outer}.  
\end{remark}

The reader surely sees a connection between the property in Proposition~\ref{prop:valid} and the strong validity property in Definition~\ref{def:strong.validity} above.  That this connection is natural and immediate is largely the reason why consonance in general and this construction in particular is fundamental to the IM formulation.  This also hints at  similarities between the so-called {\em Validity Principle} in \citet{imbook, sts.discuss.2014} and the Principle of Representation.  But the direct connection between the result in Proposition~\ref{prop:valid} and the IM's strong validity property is not immediate---this is related to the nuance that's missed when $(Y,\Theta)$ is replaced by $X$---so some effort is needed to tie everything together; see Section~\ref{S:general}.

\subsection{Minimal outer consonant approximations}
\label{SS:minimal}

Since the goal is approximating $\uprob$ by a simpler, consonant $\uOmega$, it only makes sense to aim for the ``best possible'' approximation, in some sense.  There are too many degrees of freedom in the specification of an outer consonant approximation so formulating this as an optimization problem that can be readily solved is out of reach.  Instead, I'll again proceed by following some core principles.  

The next principle is what Hose calls the {\em Principle of Expressiveness}, derived from another small tweak of the fundamental premise \eqref{eq:dubois.quote}, i.e., 
\[ \text{``what is [less] probable must be [less possible].''} \]
This is akin to the {\em Principle of Maximum Specificity}  \citep[e.g.,][]{dubois.prade.1986}. The point is that the outer approximation should be as tight as possible.  In particular, a vacuous or near-vacuous $\uOmega$ that ignores all or most of the structure in $\uprob$ is no good. 

\begin{prinexp}
If $\uOmega_1$ and $\uOmega_2$ are two outer consonant approximations of $\uprob$ with $\cred(\uOmega_1) \subseteq \cred(\uOmega_2)$, then $\uOmega_1$ is preferred to $\uOmega_2$. 
\end{prinexp}

Consider the class of outer approximations $\uOmega_h$ indexed by the function $h$.  The role played by $h$ is to determine a (partial) ordering on the domain $\XX$, what I refer to here as a {\em plausibility order}.  It's easy to see \citep[e.g.,][Prop.~12]{hose2022thesis} that the contour $\omega_h$ of $\uOmega_h$ has the same plausibility order as $h$, that is, 
\[ h(x) > h(x') \implies \omega_h(x) > \omega_h(x'). \]
There are many consonant $\uOmega$'s whose contour has the same plausibility order determined by $h$; denote the collection of those consonant upper probabilities as $\mathbb{O}_h$, where ``O'' stands for ``Omega.''  For example, if $g$ equals $h$ composed with a monotone transformation, then $\uOmega_g \in \mathbb{O}_h$, but there are many others as well.  

\begin{prop}
\label{prop:order}
Fix $h$.  If $\uOmega \in \mathbb{O}_h$, then $\cred(\uOmega_h) \subseteq \cred(\uOmega)$.
\end{prop}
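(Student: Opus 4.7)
The plan is to reduce the credal-set containment $\cred(\uOmega_h) \subseteq \cred(\uOmega)$ to a pointwise comparison of contours. Both $\uOmega_h$ and $\uOmega$ are consonant with contours $\omega_h$ and $\omega$, so each upper probability is recovered as the supremum of its contour on any given subset; consequently, $\omega_h \leq \omega$ pointwise on $\XX$ implies $\uOmega_h \leq \uOmega$ as set functions on $2^\XX$, which in turn gives $\cred(\uOmega_h) \subseteq \cred(\uOmega)$. The entire argument therefore hinges on establishing the pointwise contour inequality $\omega_h(x) \leq \omega(x)$.

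The key step is to identify a single set that computes both contours at a common point. Fix $x \in \XX$ and set $B_x = \{x' \in \XX : h(x') \leq h(x)\}$. By the very definition of $\omega_h$ in \eqref{eq:contour}, one has $\omega_h(x) = \uprob(B_x)$. Because $\uOmega \in \mathbb{O}_h$ has contour $\omega$ ordered compatibly with $h$, the set $B_x$ coincides with the $\omega$-sub-level set $\{x' : \omega(x') \leq \omega(x)\}$, so consonance of $\uOmega$ yields $\uOmega(B_x) = \sup_{x' \in B_x} \omega(x') = \omega(x)$, the supremum being attained at $x$ itself.

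To close the argument, I invoke the outer-approximation property implicit in the definition of $\mathbb{O}_h$, namely $\cred(\uprob) \subseteq \cred(\uOmega)$, equivalently $\uprob \leq \uOmega$ pointwise on $2^\XX$ (the envelope interpretation from Section~\ref{S:background}). Applying this to $B_x$ delivers $\omega_h(x) = \uprob(B_x) \leq \uOmega(B_x) = \omega(x)$, which is the required contour bound; the Principle of Representation then upgrades this to the claimed credal-set inclusion.

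The main obstacle I expect is the treatment of ties in the plausibility order: the condition as stated only constrains strict inequalities ($h(x) > h(x') \Rightarrow \omega(x) > \omega(x')$), so $h(x') = h(x)$ does not force $\omega(x') = \omega(x)$, and a priori $B_x$ may properly contain the $\omega$-sub-level set $\{\omega \leq \omega(x)\}$. This is resolvable either by strengthening the plausibility-order condition to preserve equalities of $h$-values, or by a limiting argument along the $h$-order using that $t \mapsto \uprob\{h(X) \leq t\}$ is monotone; either refinement leaves the main chain of equalities and inequalities above intact.
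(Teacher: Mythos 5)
Your argument is correct and is essentially the paper's own proof: both reduce the credal-set inclusion to the pointwise bound $\omega_h \leq \omega$, both rest on the inclusion $\{x': h(x') \leq h(x)\} \subseteq \{x': \omega(x') \leq \omega(x)\}$ supplied by order-compatibility, and your finishing step $\uprob(B_x) \leq \uOmega(B_x) = \sup_{x' \in B_x}\omega(x') = \omega(x)$ is just a repackaging of the paper's chain $\uprob\{h(X) \leq h(x')\} \leq \uprob(S_\alpha) \leq \alpha$, since $\uprob(S_\alpha) \leq \alpha$ is itself obtained from $\cred(\uprob) \subseteq \cred(\uOmega)$. The tie issue you flag is present in the paper's proof as well and is resolved there by reading ``same plausibility order'' in the definition of $\mathbb{O}_h$ as the two-sided condition $h(x) \geq h(x') \iff \omega(x) \geq \omega(x')$, under which your set $B_x$ really does coincide with the $\omega$-sub-level set.
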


\begin{proof}
Take $\uOmega \in \mathbb{O}_h$ and let $\omega$ be its contour function.  Fix any $x' \in \XX$, and set $\alpha = \omega(x') \in [0,1]$.  Define the sub-level sets of $\omega$:
\[ S_\beta = \{x: \omega(x) \leq \beta\}, \quad \beta \in [0,1]. \]
By the well-known results described in Remark~\ref{re:inclusion}, it follows that $\uprob(S_\alpha) \leq \alpha$.  By definition, $\omega$ has the same plausibility order as $h$, so that 
\[ \{x: h(x) \leq h(x')\} \subseteq \{x: \omega(x) \leq \omega(x')\} = S_\alpha. \]
Therefore, by monotonicity of $\uprob$, 
\[ \omega_n(x') := \uprob\{h(X) \leq h(x')\} \leq \uprob(S_\alpha) \leq \alpha = \omega(x'). \]
It follows that, for all $\uOmega \in \mathbb{O}_h$, with contour function $\omega$, the pointwise inequality holds: $\omega_h \leq \omega$.  This implies $\uOmega_h$ is no less specific than $\uOmega$ and, therefore, $\cred(\uOmega_h) \subseteq \cred(\uOmega)$. 
\end{proof}

Therefore, if the plausibility order $h$ is given, then the minimal outer approximation as called for by the Principle of Expressiveness is given by $\uOmega_h$.  More generally, Proposition~\ref{prop:order} implies that it's enough to consider a choice of $\uOmega$ from the class $\{\uOmega_h\}$ indexed by $h$.  The downside is that, in general, there's no $h$ that leads to the ``best'' $\uOmega_h$.  To resolve this ambiguity concerning the choice of $h$, again I'll appeal to some core principles.

The third principle, what Hose calls the {\em Principle of Plausibility}, is exactly the fundamental premise \eqref{eq:dubois.quote}: ``what is probable must be plausible.''  The point is that the plausibility order in $\uOmega$ should represent that of $\uprob$.  

\begin{prinpl}
If $\eta: \XX \to \RR$ represents the plausibility order inherent in $\uprob$, then $\uOmega_\eta$ is preferred to any other $\uOmega_h$.
\end{prinpl}

If $\uprob$ were an ordinary/precise probability distribution $\prob$, then one would typically take the function $\eta$ above to be the probability mass or density function of $\prob$.  Then the $\uOmega_h$ obtained by following the Principle of Plausibility, with $h=\eta$, would be the\footnote{That it's ``the'' unique minimal outer consonant approximation requires that $p$ be unimodal; otherwise, there may be other choices of $h$ for which the minimality holds.} {\em minimal outer approximation} of $\prob$ as shown in \citet{dubois.etal.2004}.  Intuitively, the choice of $h$ as the density/mass function as suggested above is not surprising.  It's closely related to the familiar result in probability theory that the smallest sets (in terms of counting/Lebesgue measure) having a fixed $\prob$-probability content are the mass/density function level sets.  And for the statistical applications I have in mind here, this sheds light on the fundamental role played by the likelihood function.  

More generally, I propose to define the plausibility order of $\uprob$ by
\begin{equation}
\label{eq:eta}
\eta(x) =  \uprob(\{x\}), \quad x \in \XX. 
\end{equation}
Aside from reducing to the precise-probability case described above (at least for discrete $\prob$), this choice can be further justified as follows.  Consider the case where $\uprob$ itself is consonant.  This case isn't practically relevant, but it's the one case where the consonant approximation could be exact and, therefore, makes a good test case: if the proposed strategy is any good, then the approximation $\uOmega_\eta$ ought to match $\uprob$ exactly.  

\begin{prop}
\label{prop:exact}
If $\uprob$ is consonant with contour $\eta(x) = \uprob(\{x\})$, then the approximation $\uOmega_\eta$ is exact in the sense that $\omega_\eta(x) = \eta(x)$ for all $x$.
\end{prop}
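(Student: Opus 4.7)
The plan is to unfold the definitions on both sides and then reduce the claim to a two–line supremum manipulation. Since $\uprob$ is assumed consonant with contour $\eta(x)=\uprob(\{x\})$, by the defining property of a consonant upper probability from Section~\ref{S:background},
\[
\uprob(E) \;=\; \sup_{x' \in E} \eta(x'), \quad E \subseteq \XX.
\]
Applying this with $E = \{x' : \eta(x') \leq \eta(x)\}$, the definition of $\omega_\eta$ from \eqref{eq:contour} yields
\[
\omega_\eta(x) \;=\; \uprob\{X' : \eta(X') \leq \eta(x)\} \;=\; \sup_{x' : \eta(x') \leq \eta(x)} \eta(x').
\]

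From here the result will follow from matching two inequalities. First I would observe that every $x'$ in the indexing set satisfies $\eta(x') \leq \eta(x)$ by construction, so the supremum on the right is at most $\eta(x)$; this gives $\omega_\eta(x) \leq \eta(x)$. Second, $x$ itself belongs to the indexing set (trivially $\eta(x)\leq\eta(x)$), so the supremum is at least $\eta(x)$; this gives $\omega_\eta(x) \geq \eta(x)$. Combining the two yields $\omega_\eta(x) = \eta(x)$ for every $x \in \XX$, which is exactly the claim.

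There is essentially no hard step: the only thing one has to be careful about is not to conflate the role of $\eta$ as both the plausibility-ordering function $h$ in the transform and the contour of $\uprob$. Once those two roles are explicitly tracked, the computation above is mechanical, and no appeal to Propositions~\ref{prop:outer} or \ref{prop:valid} is needed. I expect the write-up to be just a few lines; the value of the proposition lies less in its proof than in confirming that the proposed imprecise-probability-to-possibility transform behaves as a genuine identity on the class of consonant upper probabilities, validating the choice of $\eta$ in \eqref{eq:eta} as the right plausibility order.
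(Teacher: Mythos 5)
Your proof is correct, and the upper-bound half ($\omega_\eta(x) \leq \eta(x)$ via the sub-level-set representation $\omega_\eta(x) = \sup_{x': \eta(x') \leq \eta(x)} \eta(x')$) is exactly the paper's first step. Where you diverge is the lower bound: you get $\omega_\eta(x) \geq \eta(x)$ for free from the observation that $x$ is a member of its own sub-level set, so the supremum defining $\omega_\eta(x)$ runs over a set containing $x$. The paper instead obtains this direction from coherence, writing $\eta(x) = \uprob(\{x\}) = \sup_{\prob \in \cred(\uprob)} \prob(X=x) \leq \sup_{\prob \in \cred(\uprob)} \prob\{\eta(X) \leq \eta(x)\} = \omega_\eta(x)$, which is really just monotonicity of $\uprob$ applied to $\{x\} \subseteq \{x': \eta(x') \leq \eta(x)\}$ and does not use consonance at all — it is the same mechanism that makes $\uOmega_\eta$ an outer approximation in general (Proposition~\ref{prop:outer}). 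Your version is slightly more economical and self-contained, since both inequalities fall out of a single displayed supremum; the paper's version makes visible that only the \emph{upper} bound is special to the consonant case, while the lower bound $\eta \leq \omega_\eta$ holds for any coherent $\uprob$. Either way the conclusion and the level of rigor are the same, and your care in distinguishing the two roles of $\eta$ (as the ordering $h$ and as the contour of $\uprob$) is exactly the point that needs tracking.
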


\begin{proof}
First, since $\uprob$ is a possibility measure, if $\eta$ is defined as in \eqref{eq:eta}, then it's the possibility contour of $\uprob$, so the associated sub-level sets are 
\[ S_\alpha = \{x: \eta(x) \leq \alpha\}, \quad \alpha \in [0,1]. \]
Now it's immediately clear that
\[ \omega_\eta(x) = \uprob( S_{\eta(x)} ) = \sup_{x' \in S_{\eta(x)}} \eta(x') \leq \eta(x). \]
Second, since $\uprob$ is coherent, it follows that
\[ \eta(x) = \uprob(\{x\}) = \sup_{\prob \in \cred(\uprob)} \prob(X=x) \leq \sup_{\prob \in \cred(\uprob)} \prob\{ \eta(X) \leq \eta(x) \} = \omega_\eta(x). \]
Therefore, $\omega_\eta(x) = \eta(x)$, so the approximation is exact.
\end{proof}

Again, the case where $\uprob$ itself is consonant isn't of practical interest.  I'm making the general proposal to use the outer consonant approximation 
\[ \uOmega_\eta, \quad \text{with} \quad \eta(\cdot) = \uprob(\{\cdot\}), \]
regardless of what structure $\uprob$ has.  Proposition~\ref{prop:exact} helps to justify this proposal in the following sense: the only case in which a consonant approximation could be exact is when $\uprob$ is itself consonant and, in such cases, my proposal is exact.  Besides being based on some basic, fundamental principles, I honestly don't see any other viable alternatives.

\section{General IM framework}
\label{S:general}

\subsection{Construction}
\label{SS:construction}

Consider a general imprecise probability model $\uprob_{Y,\Theta}$ for $(Y,\Theta)$, where $Y$ is observable and $\Theta$ is the quantity of interest.  The IM construction and strong validity result that follows is based on having a ``joint distribution'' for $(Y,\Theta)$ that's consonant, and I emphasized before that $\uprob_{Y,\Theta}$ itself would virtually never be consonant.  To see why this is so, consider the typical case---discussed in detail in Section~\ref{S:precise}---where $\uprob_{Y,\Theta}$ corresponds to an ordinary or precise probability distribution for $Y$, given $\Theta=\theta$, and an imprecise partial prior for $\Theta$.  For simplicity, also suppose that $Y$ is discrete.  Then the contour for $(Y,\Theta)$ would be the conditional mass function for $Y$ times the prior contour and it's (virtually) impossible for this product to equal 1 for some pair $(y,\theta)$.  Therefore, the outer consonant approximations described in Section~\ref{S:blocks} would be relevant here.  

That outer consonant approximation developed above can be applied directly in this particular case by interpreting $(Y,\Theta)$ here as the $X$ there.  That is, get an outer consonant approximation $\uOmega_h$ of $\uprob_{Y,\Theta}$ on $\YY \times \TT$ by first defining the contour function 
\begin{equation}
\label{eq:omega}
\omega_h(y,\theta) = \uprob_{Y,\Theta}\{ h(Y,\Theta) \leq h(y,\theta)\}, \quad (y,\theta) \in \YY \times \TT, 
\end{equation}
and then setting 
\[ \uOmega_h(E) = \sup_{(y,\theta) \in E} \omega_h(y,\theta), \quad E \subseteq \YY \times \TT, \]
where $h: \YY \times \TT \to \RR$ is a measurable function.  Guidance on the choice of $h$ was given above, but I'll hold off on applying that guidance for now.  This yields a good/principled outer consonant approximation of the joint distribution of $(Y,\Theta)$.  

Remember that the goal is to make inference on $\Theta$ for fixed $Y=y$, so the joint distribution of $(Y,\Theta)$ isn't of primary interest.  This is the nuance that's lost as a result of treating $(Y,\Theta)$ as a generic $X$.  From the ability to construct an outer approximation of the joint distribution, there are at least two ways to proceed with the construction of a consonant IM for $\Theta$ given $Y=y$:
\begin{itemize}
\item {\em Conditioning.} Carry out the construction of $\uOmega_h$ as described above, with the $h$ determined by those motivating principles, and then apply a suitable conditioning operation to transform that ``joint possibility distribution'' for $(Y,\Theta)$ to a ``conditional possibility distribution'' for $\Theta$, given the observed $Y=y$. 
\vspace{-2mm}
\item {\em Focusing.} Carry out the construction of $\uOmega_h$ as described above, but with $h$ chosen to satisfy those motivating principles {\em as a function of $\theta$ for each fixed $y$}.  This determines a collection of outer approximations indexed by $y$, and then I just choose---or focus on---the one corresponding to the observed $Y=y$. 
\end{itemize} 

The conditioning strategy, on the one hand, might seem more natural, but there are issues.  Even in situations involving precise probabilities, conditioning can be problematic; Borel's paradox \citep[e.g.,][Ch.~15]{jaynes2003} is an example of this.  Related difficulties carry over into the imprecise probability setting: there are different strategies for carrying out the relevant conditioning operations, and certain ``paradoxes'' \citep[e.g.,][]{gong.meng.update} can pop up.  Even just in possibility theory, there are a number of different proposals of conditioning operations---e.g., \citet{zadeh1978}, \citet{hisdal1978}, \citet{nguyen1978.conditional}, \citet{dubois.prade.1984, dubois.prade.1990.conditioning}, and \citet{cooman.poss2}---that each have their own emphasis and priorities.  Most importantly, this process of first getting a joint approximation and then conditioning doesn't provide the answers that I'm looking for; see Remark~\ref{re:construction}.2.  So, if there's an alternative to conditioning, then I'm open to it.  

The focusing strategy, on the other hand, starts with the idea of reinterpreting the joint contour $(y,\theta) \mapsto \omega_h(y,\theta)$ as a collection of contours $\theta \mapsto \omega_h(y,\theta)$ indexed by $y$.  If this new interpretation could be justified, then the IM construction amounts to, as explained above, choosing---or {\em focusing on}---the contour corresponding to the observed $Y=y$.  In particular, no conditioning operation is required in this process.  The challenge, however, is that there's no immediate guarantee that $\theta \mapsto \omega_h(y,\theta)$ would meet the conditions of a possibility contour for each fixed $y$.  So, for the focusing strategy to be successful, some care is needed in the choice of $h$. Towards this, reconsider the situation where quantification of uncertainty about a generic $X$ was the goal, and recall that the Principle of Plausibility suggested letting the plausibility order in $\uprob_{Y,\Theta}$, expressed in terms of $x \mapsto \uprob(\{x\})$, determine the plausibility order in the consonant approximation.  In the present case, with an eye towards focusing, I want $\omega_h(y,\theta)$ to reflect a meaningful plausibility order in $\theta$ relative to $y$.  This suggests the choice 
\[ \eta(y,\theta) = c(y) \, \uprob_{Y,\Theta}(\{y,\theta\}), \quad (y,\theta) \in \YY \times \TT, \]
where the proportionality constant (in $\theta$) is a normalizing factor that's intended to cancel out any $y$-specific effects on the plausibility order induced directly by $\uprob_{Y,\Theta}(\{y,\theta\})$.  There is some ambiguity in how this normalizing factor is chosen but I propose 
\[ c(y) = \Bigl[\sup_{t \in \TT} \uprob_{Y,\Theta}(\{y,t\}) \Bigr]^{-1}, \quad y \in \YY. \]
First, this choice clearly cancels out all the $y$-specific effects in $\eta$, letting the plausibility order be determined by the relationship between $\theta$ and $y$ alone.  Second, as is easy to verify, this choice ensures that $\theta \mapsto \omega_\eta(y,\theta)$ is a possibility contour for each fixed $y$.  For more on this choice of normalization, see Remark~\ref{re:normalization} below. 

To summarize, my proposed IM construction is as follows.  Start with the given imprecise ``joint distribution'' $\uprob_{Y,\Theta}$ of $(Y,\Theta)$ and define the plausibility ordering 
\begin{equation}
\label{eq:eta.y}
\eta(y,\theta) = \frac{\uprob_{Y,\Theta}(\{y,\theta\})}{\sup_{t \in \TT} \uprob_{Y,\Theta}(\{y,t\})}, \quad (y,\theta) \in \YY \times \TT. 
\end{equation}
Next, define the outer consonant approximation $\uOmega_\eta$, with possibility contour $\omega_\eta$, as defined in Section~\ref{S:blocks} above.  Finally, define the IM by focusing the above construction on the observed $Y=y$; in particular, set the possibility contour for $\Theta$, given $Y=y$, as 
\begin{align}
\pi_y(\theta) & = \omega_\eta(y,\theta) \notag \\
& = \uprob_{Y,\Theta}\{ \eta(Y,\Theta) \leq \eta(y,\theta)\}, \quad \theta \in \TT, \label{eq:pi.y}
\end{align}
and the corresponding consonant upper probability 
\begin{equation}
\label{eq:consonant.upper}
\uPi_y(A) = \sup_{\theta \in A} \pi_y(\theta), \quad A \subseteq \TT. 
\end{equation}
It's important to point out the two key roles played by the ``partial prior'' for $\Theta$, which is baked into $\uprob_{Y,\Theta}$.  First, in \eqref{eq:eta.y} it's clear that those regions in $\TT$ that aren't supported by the partial prior are discounted.  This {\em regularization} effect makes $\eta(y,\theta)$ in the expression \eqref{eq:pi.y} is smaller for ``outlier'' $\theta$, which, in turn, at least intuitively, makes $\pi_y(\theta)$ smaller there too, hence a potential efficiency gain.  Second, the partial prior also has a {\em calibration} effect through the ``$\uprob_{Y,\Theta}$'' calculation in \eqref{eq:pi.y}, which is what leads to the IM's strongly validity properties.  

The notation ``$\uPi_y$'' above doesn't reflect the dependence on the plausibility order $\eta$, but I don't think this is necessary.  To see why, recall that the choice of plausibility order in \eqref{eq:eta} was well-motivated and, in fact, really the only viable option.  Here there's technically a choice of normalization but, again, what I've suggested in \eqref{eq:eta.y} above is the only choice that ensures $\theta \mapsto \pi_y(\theta)$ is a possibility contour function for each fixed $y$.  So the suggested $\eta$ in \eqref{eq:eta.y} is a central part of the IM construction, not an option whose particular choice needs to be highlighted in the notation.  Of course, the $\eta$ in \eqref{eq:eta.y} depends on $\uprob_{Y,\Theta}$ so its particular form will vary from one problem to the next.  Also, certain context-specific adjustments may be warranted for the sake of computational or statistical efficiency; that's the topic of Section~\ref{SS:efficiency} below. 

\begin{remark}
\label{re:construction}
There are a number of relevant technical points that can be made about the above construction, so I collect those points here. 
\begin{enumerate}
\item The above discussion focused on the case where $Y$ is discrete.  When $Y$ is continuous, then there's a risk that the contour $(y,\theta) \mapsto \uprob_{Y,\Theta}(\{y,\theta\})$ is identically 0; this risk is only serious in the (common) case involving a precise statistical model for $(Y \mid \Theta)$.  A strong argument can be made that all data is discrete, that the continuous models for $Y$ are all just mathematically convenient approximations, so the discrete-$Y$ case is all that really matters at a foundational level.  But that's not fully satisfactory since continuous-$Y$ models are very common in practice.  Fortunately, it's easy to extend the principled approach here in the discrete-$Y$ case to the continuous-$Y$ case by analogy: just use the continuous density function where the discrete mass function would go; see Equation \eqref{eq:baseline} below. 
\vspace{-2mm} 
\item I made the claim that a conditioning-based IM construction can lead to some unexpected and sub-optimal results.  Here's a quick illustration.  Suppose $Y$, given $\Theta=\theta$, has a $\gam(n, \theta)$ distribution with known shape $n$ and unknown scale $\theta$; suppose the prior for $\Theta$ is vacuous.  A conditioning-based procedure would use $\eta(y,\theta)$ equal to the likelihood function to get the joint possibility contour $\omega_\eta(y,\theta)$, then get the corresponding conditional contour for $\Theta$, given $Y=y$, via, e.g., the formula in Proposition~4.6 of \citet{cooman.poss2}.  Alternatively, the focusing-based approach takes $\eta(y,\theta)$ equal to the relative likelihood.  In both cases, there's a pivotal structure in $\eta(Y,\Theta)$, so no sophisticated Choquet integration is required.  Figure~\ref{fig:exp.contour.compare} shows a plot of the two contour functions based on the conditioning and focusing approaches, where $Y \approx 30$ is the observed value and $n=5$ is known.  Notice that the two contours don't have the same core, i.e., the maximum plausibility values are different.  The focusing- or relative likelihood-based contour is maximized at the maximum likelihood estimator $\hat\theta = Y/n \approx 6$.  Surprisingly, the conditioning- or likelihood-based contour is maximized at a different point, not the maximum likelihood estimator.  Since there's no justification for the maximum plausibility to be attained at a point other than the maximum likelihood estimator, I conclude that the conditioning-based IM construction is generally inferior to the recommended focusing-based construction. 
\vspace{-2mm}
\item Note that the $\uOmega_\eta$ just constructed is still an outer consonant approximation of $\uprob$.  It's just been constructed in such a way that $\uPi_y(A) = \uOmega_\eta(\{y\} \times A)$ is a consonant upper probability on $\TT$ for each $y$.  This sheds light on why I call it ``focusing.'' 
\vspace{-2mm} 
\item The IM construction above is {\em not} the same as first getting a ``conditional'' upper probability for $\Theta$, given $Y=y$, via the generalized Bayes rule \citep[e.g.,][Ch.~6.4]{walley1991}, and then finding a corresponding outer consonant approximation.  To see this, consider the case of a precise statistical model for $Y$, given $\Theta=\theta$, and a vacuous prior on $\Theta$.  The upper conditional distribution for $\Theta$, given $Y=y$, would also be vacuous \citep[e.g.,][Theorem~4.8]{gong.meng.update} and, hence, so would the outer consonant approximation.  The IM would be valid but useless, thanks to the extreme inefficiency.  As I demonstrate in Section~\ref{SS:vacuous}, however, construction of a valid and efficient IM is possible using the proposed approach. 
\end{enumerate} 
\end{remark}

\begin{figure}[t]
\begin{center}
\scalebox{0.65}{\includegraphics{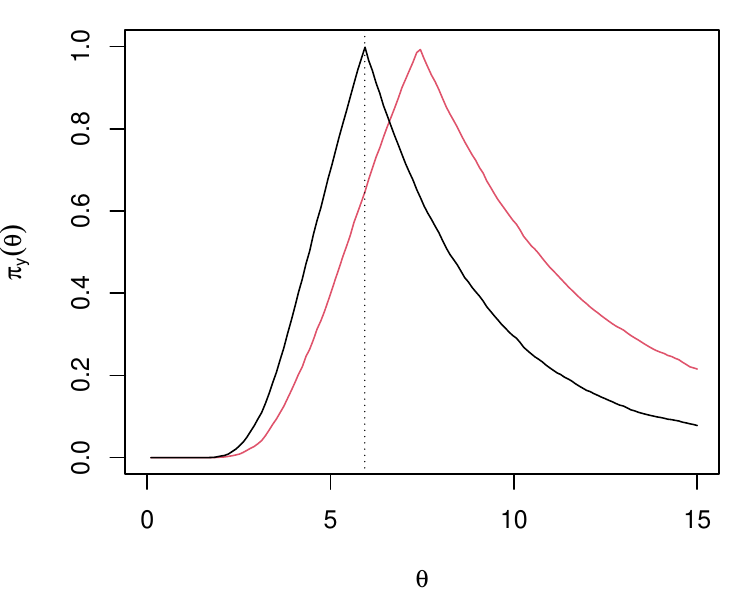}}
\end{center}
\caption{Plots of the possibility contour corresponding to the conditioning-based (red) and focusing-based strategies described in the text.  This is for an observation $Y \approx 30$ from a gamma distribution with shape $n=5$ and unknown scale $\Theta$.  Vertical line is at the maximum likelihood estimator, $\hat\theta = Y/n \approx 6$.}
\label{fig:exp.contour.compare}
\end{figure}


\subsection{Key properties}
\label{SS:properties}

There are a number of interesting and practically important properties that result from the above IM construction using outer consonant approximations.  Like in Part~I, here I'll split these into two categories: {\em statistical} and {\em behavioral} properties.  First, the statistical properties will closely resemble those commonly found in the frequentist (and Bayesian) statistics literature, i.e., error rate control for testing rules and confidence sets.  There are, however, some key differences between these results and those commonly found in the literature, deriving from my particular emphasis on an imprecise-probabilistic model formulation, which I'll explain.  Second, the behavioral properties are of a subjective nature and closely resemble those no-sure-loss/coherence properties emphasized in the imprecise probability literature.  This property is important because, at least intuitively, reliability and rationality should go hand-in-hand: if a framework for quantifying uncertainty is reliable/valid, then it shouldn't be irrational and, conversely, if the framework is rational, then it shouldn't be unreliable.  The results below together demonstrate that, indeed, the proposed IM framework achieves this dual reliability--rationality.  

\subsubsection{Statistical}
\label{SSS:statistical}

The driver behind the IM's statistical properties is the strong validity property that motivated my insistence that the IM have a consonant structure.  Since strong validity was a primary goal, it should be no surprise that this property holds.  And I basically already proved it in Proposition~\ref{prop:valid}. 

\begin{thm}
\label{thm:valid}
The IM defined above, with contour given by \eqref{eq:pi.y}, is strongly valid for inference on $\Theta$, relative to the posited model $(\lprob,\uprob)$, in the sense that 
\[ \uprob_{Y,\Theta}\{ \pi_Y(\Theta) \leq \alpha \} \leq \alpha, \quad \text{all $\alpha \in [0,1]$}. \]
\end{thm}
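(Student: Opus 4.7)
The plan is to recognize that Theorem~\ref{thm:valid} is essentially a direct corollary of Proposition~\ref{prop:valid}, once the right identifications are made. Specifically, I would take the generic space $\XX$ in Section~\ref{S:blocks} to be $\YY \times \TT$, take the generic uncertain variable $X$ to be the pair $(Y,\Theta)$, take the generic coherent upper probability $\uprob$ to be the posited $\uprob_{Y,\Theta}$, and take the plausibility-ordering function $h$ to be $\eta$ as defined in \eqref{eq:eta.y}. Under these identifications, the contour of the outer consonant approximation $\uOmega_\eta$ is, by \eqref{eq:contour},
\[
\omega_\eta(y,\theta) \;=\; \uprob_{Y,\Theta}\{\eta(Y,\Theta) \leq \eta(y,\theta)\},
\]
which is exactly $\pi_y(\theta)$ as defined in \eqref{eq:pi.y}. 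So the IM's contour and the contour of the outer consonant approximation of the joint model are literally the same function of $(y,\theta)$; the focusing step just reinterprets this function, for each fixed $y$, as a possibility contour in $\theta$.

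Next I would invoke Proposition~\ref{prop:valid}, which asserts that $\uprob\{\omega_h(X) \leq \alpha\} \leq \alpha$ for all $\alpha \in [0,1]$, provided $\omega_h$ is measurable. Substituting the identifications above yields
\[
\uprob_{Y,\Theta}\{\omega_\eta(Y,\Theta) \leq \alpha\} \;\leq\; \alpha, \qquad \alpha \in [0,1],
\]
and because $\omega_\eta(Y,\Theta) = \pi_Y(\Theta)$ pointwise, this is exactly the strong validity inequality in the theorem statement.

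The only genuine (though minor) checkpoint is that the measurability hypothesis of Proposition~\ref{prop:valid} is met in the present setting, i.e., that $(y,\theta) \mapsto \pi_y(\theta)$ is jointly measurable on $\YY \times \TT$. Under the standing coherence assumption on $\uprob_{Y,\Theta}$, $\eta$ is measurable by construction, and the map $t \mapsto \uprob_{Y,\Theta}\{\eta(Y,\Theta) \leq t\}$ is monotone in $t$; composing with $\eta$ preserves measurability, so the hypothesis is satisfied. I would note this briefly and nothing else is required.

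Thus the main ``obstacle'' is conceptual rather than technical: one has to be comfortable that it is legitimate to apply the generic outer-approximation result, proved for an abstract uncertain variable $X$, to the structured pair $(Y,\Theta)$ even though inference is ultimately about $\Theta$ alone. Once the focusing interpretation from Section~\ref{SS:construction} is in hand, this is immediate, and no further Choquet-type manipulations or appeals to the specific form \eqref{eq:eta.y} of the normalizing constant $c(y)$ are needed for the validity statement itself --- that normalization only matters for ensuring $\theta \mapsto \pi_y(\theta)$ has supremum $1$, not for the stochastic-dominance bound proved here.
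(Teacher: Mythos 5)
Your proposal is correct and is essentially the paper's own proof: the paper likewise observes that $\pi_y(\theta) = \omega_\eta(y,\theta)$ with $\eta$ as in \eqref{eq:eta.y} and concludes immediately from Proposition~\ref{prop:valid}. Your added remarks on measurability and on the irrelevance of the normalization $c(y)$ to the stochastic bound are accurate but not part of the paper's (one-line) argument.
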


\begin{proof}
Since $\pi_y(\theta) = \omega_\eta(y,\theta)$ for the function $\eta$ defined in \eqref{eq:eta.y}, the result follows immediately from Proposition~\ref{prop:valid} above.
\end{proof}

As seen in Part~I and indicated above, the strong validity property immediately leads to some practically relevant error rate control properties of statistical procedures derived from the IM output, such as hypothesis tests and confidence regions.  

\begin{cor}
\label{cor:valid}
Let $\theta \mapsto \pi_y(\theta)$ be the IM's contour function as defined in \eqref{eq:pi.y} above, with $A \mapsto \uPi_y(A)$ the corresponding possibility measure.  Then the following properties hold.
\begin{enumerate}
\item For any $A \subseteq \TT$ and any $\alpha \in [0,1]$, the test 
\[ \text{{\em reject the hypothesis ``$\Theta \in A$'' if and only if $\uPi_Y(A) \leq \alpha$}} \] 
controls the upper false-rejection probability at level $\alpha$, i.e., 
\begin{equation}
\label{eq:size}
\uprob_{Y,\Theta}\{ \uPi_Y(A) \leq \alpha, \, \Theta \in A \} \leq \alpha. 
\end{equation}
\item For any $\alpha \in [0,1]$, the $100(1-\alpha)$\% plausibility region 
\begin{equation}
\label{eq:conf.region}
C_\alpha(y) = \{\theta \in \TT: \pi_y(\theta) > \alpha\}, \quad y \in \YY, 
\end{equation}
controls the upper non-coverage probability at level $\alpha$, i.e., 
\begin{equation}
\label{eq:coverage}
\uprob_{Y,\Theta}\{ C_\alpha(Y) \not\ni \Theta \} \leq \alpha. 
\end{equation}
\end{enumerate} 
\end{cor}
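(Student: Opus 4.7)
The plan is to derive both parts of the corollary as essentially immediate consequences of the strong validity guarantee in Theorem~\ref{thm:valid}, by showing that in each case the event in question implies the event $\{\pi_Y(\Theta) \leq \alpha\}$ whose $\uprob_{Y,\Theta}$-probability is already bounded by $\alpha$.

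For part~1, the key observation is that the IM output is consonant, so $\uPi_y(A) = \sup_{\theta \in A} \pi_y(\theta)$ by construction. Hence on the event $\{\Theta \in A\}$ we automatically have $\pi_Y(\Theta) \leq \sup_{\theta \in A} \pi_Y(\theta) = \uPi_Y(A)$. It follows that the joint event $\{\uPi_Y(A) \leq \alpha, \, \Theta \in A\}$ is contained in the event $\{\pi_Y(\Theta) \leq \alpha\}$, so monotonicity of $\uprob_{Y,\Theta}$ and Theorem~\ref{thm:valid} together yield \eqref{eq:size}.

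For part~2, the argument is even more direct. By the definition of the plausibility region $C_\alpha(y)$ in \eqref{eq:conf.region}, the non-coverage event is precisely
\[ \{C_\alpha(Y) \not\ni \Theta\} = \{\pi_Y(\Theta) \leq \alpha\}, \]
so applying Theorem~\ref{thm:valid} gives \eqref{eq:coverage} without any further work.

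There is no genuine obstacle here; the corollary is really a bookkeeping exercise once Theorem~\ref{thm:valid} is in hand. The only subtle point is the set-containment step in part~1, which relies crucially on consonance (so that the pointwise contour controls the upper probability of the whole set $A$). This is a clean illustration of why the focus on consonant IMs in Section~\ref{S:faq} pays off: strong validity, a statement only about the one-dimensional random variable $\pi_Y(\Theta)$, is enough to deliver uniform error-rate control over all assertions $A \subseteq \TT$ and all nested plausibility regions simultaneously.
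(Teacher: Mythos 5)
Your proof is correct and takes exactly the route the paper intends: the paper's own proof simply states that both conclusions are "immediate consequences of Theorem~\ref{thm:valid}," and your argument supplies precisely the two implications (consonance giving $\pi_Y(\Theta) \leq \uPi_Y(A)$ on $\{\Theta \in A\}$ for part~1, and the identity $\{C_\alpha(Y) \not\ni \Theta\} = \{\pi_Y(\Theta) \leq \alpha\}$ for part~2) that make that claim true. No gaps.
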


\begin{proof}
Both conclusions are immediate consequences of Theorem~\ref{thm:valid}.
\end{proof} 

It's important to emphasize that the results in Corollary~\ref{cor:valid} aren't the usual frequentist error rate control properties.  The difference is that the probability calculations are with respect to the posited (upper) joint distribution of $(Y,\Theta)$.  Of course, in the special case where the posited model is based on specification of a vacuous prior for $\Theta$, the above conclusions do reduce to the usual frequentist properties.  More specifically, in the vacuous prior case, the result in \eqref{eq:coverage} can be rewritten as
\begin{equation}
\label{eq:coverage.vac}
\sup_\theta \prob_{Y|\theta}\{ C_\alpha(Y) \not\ni \theta\} \leq \alpha, 
\end{equation}
which is the familiar textbook formula for controlling the non-coverage probability at level $\alpha$.  Beyond just being a mathematical generalization, this has practical consequences as well.  The point is that it makes no sense to require strong uniform bounds on the non-coverage probability as in \eqref{eq:coverage.vac} if genuine (albeit partial) prior information is available and can be encoded in $\uprob_{Y,\Theta}$.  This not only makes the conclusions drawn based on the analysis more meaningful, it also creates an opportunity for efficiency gain.  For example, the uniform bound in \eqref{eq:coverage.vac} is the most difficult to achieve, so, at least intuitively, $C_\alpha(y)$ would have to be relatively large (in some stochastic sense) in order to achieve it.  By incorporating available prior information into both the IM construction {\em and} the properties it's required to satisfy, the set $C_\alpha(y)$ has an opportunity to be smaller (in the same stochastic sense) and, therefore, can provide sharper and more efficient inference.  

To help shed light on why it would make sense to describe the set $C_\alpha(y)$ as a ``confidence'' region, I present an alternative definition in terms of the IM's lower probability $\lPi_y$. This depends crucially on the underlying consonance structure.  The set $C_\alpha(y)$ in \eqref{eq:conf.region} can be re-expressed as 
\[ C_\alpha(y) = \bigcap \{A \subseteq \TT: \lPi_y(A) \geq 1-\alpha\}. \]
In other words, $C_\alpha(y)$ represents the smallest $A \subseteq \TT$ to which the IM assigns at least $1-\alpha$ lower probability, belief, or {\em confidence}.  This complements the already-meaningful interpretation based on the original expression: $C_\alpha(y)$ is the collection of candidate values $\theta$ of $\Theta$ that are individually sufficiently plausible in the sense that $\pi_y(\theta) > \alpha$.  

From the consonant and strongly valid IM for $\Theta$ constructed above, it is straightforward to construct another consonant and strongly valid IM for any relevant feature $\Phi = \phi(\Theta)$ of $\Theta$.  This is just a direct application of what \citet{zadeh1975a, zadeh1978} referred to as the {\em extension principle}; see, also, \citet[][Sec.~3.2.3]{hose2022thesis}.  In particular, the IM for $\Psi$ has a contour function given by 
\[ \pi_y^\phi(\varphi) := \sup_{\theta: \phi(\theta)=\varphi} \pi_y(\theta), \quad \varphi \in \phi(\TT). \]
Then the corresponding IM's upper probability for $\Phi$ is determined by optimization:
\[ \uPi_y^\phi(B) = \sup_{\varphi \in B} \pi_y^\phi(\varphi), \quad B \subseteq \phi(\TT). \]
Note that this is exactly the same result obtained by directly considering the assertion $A=\{\phi(\Theta) \in B\}$ based on the original IM for $\Theta$.  That strong validity of the IM for $\Theta$ implies the same of the new IM for $\Phi$ is easy to see.  If $\phi$ is a one-to-one mapping, then clearly $\pi_y^\phi(\phi(\theta)) = \pi_y(\theta)$, in which case strongly validity is obvious.  More generally, because of the supremum in the definition, $\pi_y^\phi(\phi(\theta)) \geq \pi_y(\theta)$, so 
\[ \uprob_{Y,\Theta}\{ \pi_Y^\phi(\phi(\Theta)) \leq \alpha\} \leq \uprob_{Y,\Theta}\{ \pi_Y(\Theta) \leq \alpha \} \leq \alpha. \]

Finally, the property in Part~1 of Corollary~\ref{cor:valid} is actually stronger than indicated there, but the extra strength is subtle and deserves to be addressed separately.  An equivalent way to define strong validity (see Definition~3 in Part~I) of $y \mapsto \uPi_y$ is 
\begin{equation}
\label{eq:uniform}
\uprob_{Y,\Theta}\{\uPi_Y(A) \leq \alpha \text{ for some $A$ that contains $\Theta$} \} \leq \alpha, \quad \alpha \in [0,1]. 
\end{equation}
There's a similar expression in terms of the lower probability, but I'll not need it here.  The difference is that, in \eqref{eq:size}, the set $A$ is fixed in advanced, whereas the ``for some $A$...'' in \eqref{eq:uniform} suggests a sort of uniformity in $A$ is being achieved. In particular, the bound in \eqref{eq:uniform} implies that the IM testing procedure controls the false-rejection rate even when the investigator peeks at the data first before defining the hypothesis to be tested.  More generally, these considerations would be relevant in the context of selective inference, i.e., where the data are used to determine which features of $\Theta$ are to be tested.  I'll save these details to be investigated elsewhere.

\subsubsection{Behavioral}
\label{SSS:behavioral}

As discussed briefly above, my primary goal with these investigations into the IM's behavioral properties is simply to demonstrate that the statistical reliability prioritized above doesn't put the data analyst at risk of severe irrationality.  In other words, I aim to justify the statement that, if an uncertainty quantification procedure is reliable, then it's use can't be grossly irrational.  The result below is similar to the corresponding result in Part~I, but with a key difference: here I'm using a particular IM construction that depends explicitly on the posited $\uprob$. So the result below is more insightful than that in Part~I because the connection between validity and coherence comes directly from the specification of $\uprob$ and the outer approximation-based IM construction.  

Based on the specification $\uprob_{Y,\Theta}$ for the joint distribution of the pair $(Y,\Theta)$, a marginal specification of the prior beliefs about $\Theta$ can be derived.  For notational convenience, let me write this marginal specification as 
\[ \uprob_\Theta(A) = \uprob_{Y,\Theta}(\YY \times A), \quad A \subseteq \TT. \]
There's a corresponding $\lprob_\Theta$ too, defined in the obvious way.  Then the IM construction, which relies crucially on the input $(\lprob_{Y,\Theta},\uprob_{Y,\Theta})$ and, indirectly, the marginal specification $(\lprob_\Theta,\uprob_\Theta)$, could be interpreted as an {\em updating rule}.  That is, the IM construction described above offers a rule by which the quantification of prior uncertainty, $(\lprob_\Theta,\uprob_\Theta)$, derived from $(\lprob_{Y,\Theta},\uprob_{Y,\Theta})$, is updated to a new quantification of uncertainty, $(\lPi_y,\uPi_y)$, that incorporates the observed data $Y=y$ in a particular way.  From this perspective, it's important to ask if the updating rule puts the user at risk of certain irrational judgments.  These questions go back at least to \citet{definetti1937}.  The rationality notion that I'll be concerned with here is what I'm calling {\em half-coherence}; see Remark~\ref{re:coherent}.  I'll say that the updating rule $\{(\lprob_\Theta, \uprob_\Theta), y\} \mapsto (\lPi_y, \uPi_y)$ is half-coherent if 
\begin{equation}
\label{eq:coherent}
\inf_{y \in \YY} \lPi_y(A) \leq \lprob_\Theta(A) \quad \text{and} \quad \sup_{y \in \YY} \uPi_y(A) \geq \uprob_\Theta(A) \quad \text{for all $A \subseteq \TT$}. 
\end{equation}
The intuition behind this is as follows.  Suppose, for example, that the second inequality fails to hold for some $A$, so that $\uPi_y(A)$ is strictly less than $\uprob_\Theta(A)$ for all observations $y$.  In this case, you know that the price at which I'm willing to sell gambles on the event ``$\Theta \in A$'' will go down as soon as $Y=y$ is observed, no matter what $y$ is, so you'll surely be in a better position if you wait until $y$ is revealed to make your purchase.  This doesn't make me a sure loser in the usual sense---I could still win, but you're strictly better off if you purchase the gamble after $Y$ is observed instead of before. It'd be silly for me to give you a risk-free strategy to improve your circumstances, hence condition \eqref{eq:coherent}. 

What I refer to above as half-coherence is effectively the same as what \citet[][Lemma~4.1]{gong.meng.update} prove is achieved by the generalized Bayes rule.  This is a stronger requirement than the no-sure-loss property \citep[e.g.,][Def.~3.3]{gong.meng.update} which, in my notation, says 
\[ \inf_{y \in \YY} \lPi_y(A) \leq \uprob_\Theta(A) \quad \text{and} \quad \sup_{y \in \YY} \uPi_y(A) \geq \lprob_\Theta(A) \quad \text{for all $A \subseteq \TT$}. \]
If, say, the latter of the two inequalities above fails, then you can first sell gambles on ``$\Theta \in A$'' to me for $\lprob_\Theta(A)$ and then buy them back from me at a lower price after $Y$ is observed, hence making me a sure loser.  

\begin{thm}
\label{thm:coherent}
The updating rule determined by the IM construction described in Section~\ref{SS:construction} above is half-coherent in the sense of \eqref{eq:coherent}. 
\end{thm}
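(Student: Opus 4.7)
The plan is to establish the upper-probability half of \eqref{eq:coherent} directly from the outer-approximation property of $\uOmega_\eta$, and then obtain the lower-probability half by conjugacy. The key observation is that the IM's contour is, by construction, literally the section of the joint consonant approximation's contour: $\pi_y(\theta) = \omega_\eta(y,\theta)$ for all $(y,\theta) \in \YY \times \TT$. This means the collection of IM upper probabilities $\{\uPi_y\}_{y \in \YY}$ is tied pointwise to $\uOmega_\eta$ on product sets of the form $\YY \times A$, which is exactly what we need to compare against the marginal $\uprob_\Theta$.

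First I would fix $A \subseteq \TT$ and apply the outer-approximation inequality \eqref{eq:outer} from Proposition~\ref{prop:outer}, with $E = \YY \times A$, to get
\[ \uprob_\Theta(A) \;=\; \uprob_{Y,\Theta}(\YY \times A) \;\leq\; \uOmega_\eta(\YY \times A). \]
Next, using that $\uOmega_\eta$ is consonant with contour $\omega_\eta$, I would rewrite the right-hand side as a double supremum
\[ \uOmega_\eta(\YY \times A) \;=\; \sup_{(y,\theta) \in \YY \times A} \omega_\eta(y,\theta) \;=\; \sup_{y \in \YY} \sup_{\theta \in A} \omega_\eta(y,\theta), \]
and then identify the inner supremum with $\uPi_y(A)$ by invoking the definition $\pi_y(\theta) = \omega_\eta(y,\theta)$ together with $\uPi_y(A) = \sup_{\theta \in A} \pi_y(\theta)$. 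Chaining the two displays yields $\uprob_\Theta(A) \leq \sup_{y} \uPi_y(A)$, which is the second inequality in \eqref{eq:coherent}.

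For the first inequality, I would use conjugacy. Applying the just-proved bound to the complement $A^c$ gives $\uprob_\Theta(A^c) \leq \sup_y \uPi_y(A^c)$, and then
\[ \inf_{y \in \YY} \lPi_y(A) \;=\; 1 - \sup_{y \in \YY} \uPi_y(A^c) \;\leq\; 1 - \uprob_\Theta(A^c) \;=\; \lprob_\Theta(A), \]
completing the half-coherence conclusion. I do not anticipate a real obstacle here: the statement essentially rides on the Principle of Representation, in that outerness of $\uOmega_\eta$ and consonance together do all the work. The only place requiring a moment of care is the identification of $\sup_\theta \omega_\eta(y,\theta)$ with $\uPi_y(A)$ when restricted to $\theta \in A$, but this is immediate from the definitions in \eqref{eq:pi.y} and the normalization \eqref{eq:eta.y} that was chosen precisely so that $\theta \mapsto \pi_y(\theta)$ is a bona fide possibility contour for each fixed $y$.
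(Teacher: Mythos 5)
Your proof is correct and follows essentially the same route as the paper's: both reduce $\sup_y \uPi_y(A)$ to $\uOmega_\eta(\YY\times A)$ via the identity $\pi_y(\theta)=\omega_\eta(y,\theta)$ and then invoke the outer-approximation inequality against $\uprob_{Y,\Theta}(\YY\times A)=\uprob_\Theta(A)$. The only difference is that you explicitly supply the lower-probability half by conjugacy, which the paper leaves implicit after verifying the upper-probability inequality.
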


\begin{proof}
This follows immediately and directly from the IM construction.  I'll do the verification for the upper probability.  For any $A \subseteq \TT$, 
\begin{align*}
\sup_{y \in \YY} \uPi_y(A) & = \sup_{y \in \YY} \sup_{\theta \in A} \pi_y(\theta) \\
& = \sup_{y \in \YY} \sup_{\theta \in A} \omega_\eta(y,\theta) \\
& = \sup_{(y,\theta) \in \YY \times A} \omega_\eta(y,\theta) \\
& = \uOmega_\eta(\YY \times A) \\
& \geq \uprob_{Y,\Theta}(\YY \times A) \\
& = \uprob_\Theta(A),
\end{align*}
where the inequality is due to the fact that $\uOmega_\eta$ is an outer consonant approximation of $\uprob_{Y,\Theta}$.  Then the right-most inequality in \eqref{eq:coherent} holds, hence half-coherence.  
\end{proof}

Half-coherence itself doesn't make the IM's uncertainty quantification ``good'' or meaningful in any practical sense, but failing to satisfy \eqref{eq:coherent} would be a clear sign that something ``bad'' is happening.  It's the combination of statistical and behavioral properties that gives the IM solution its dual reliability--rationality appeal.  

A result similar to that in Theorem~\ref{thm:coherent} was proved in Part~I, but there's a key difference.  In Part~I, the theorem states that validity (Definition~\ref{def:validity}) implies half-coherence, but the reason why it's coherent is somewhat mysterious.  Of course, the IM above is itself valid, so the conclusion already follows from Theorem~2 in Part~I.  The difference, however, is that here I'm considering a particular construction of a strongly valid IM that makes explicit use of the underlying model $\uprob_{Y,\Theta}$.  So, now it's clear {\em why} half-coherence holds: the basic properties of $\uprob_{Y,\Theta}$ and the outer approximation don't allow otherwise. 

\begin{remark}
\label{re:coherent}
To be clear, half-coherence in \eqref{eq:coherent} is {\em not} the same as the notion of coherence between conditional and unconditional upper probabilities/previsions in, say, \citet[][Sec.~6.5.2]{walley1991}. As the name suggests, half-coherence is just one of the two-part necessary and sufficient conditions for coherence, namely, Walley's (C8).  Informally, Walley's second condition, (C9), states that 
\begin{equation}
\label{eq:coherent.2}
\lprob_\Theta(A) \leq \max\Bigl\{\lPi_y(A), \sup_{x \neq y} \uPi_x(A) \Bigr\}, \quad \text{for all $y \in \YY$ and all $A \subseteq \TT$}. 
\end{equation}
The intuition behind this condition is described in Part~I, so I won't repeat it here.  The key point is that \eqref{eq:coherent.2} will hold for the IM described above, but only in a trivial way, in cases where $y \mapsto \uPi_y(A)$ is continuous.  Otherwise, there's nothing special about the particular IM construction that would ensure \eqref{eq:coherent.2} holds, and there are cases where it will fail to hold \citep[][Sec.~6.5.4]{walley1991}. 
Only being able to verify half-coherence and the other half \eqref{eq:coherent.2} when it holds trivially is a bit disappointing, but it's a concession that I'm willing to make for the efficiency gains it affords; see Figure~\ref{fig:walley.binom} and its caption. 
\end{remark}

\begin{figure}[t]
\begin{center}
\scalebox{0.65}{\includegraphics{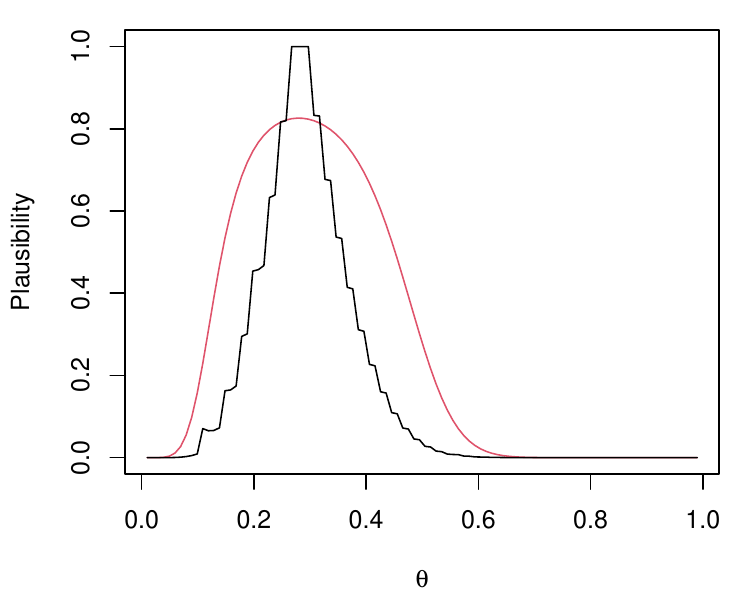}}
\end{center}
\caption{Suppose $(Y \mid \Theta=\theta) \sim \bin(n,\theta)$, vacuous prior on $\Theta$. Plot shows two ``plausibility contours'' based on $n=25$ and $y=7$: red is for the IM construction in \citet{walley2002} and black is for the IM constructed in Section~\ref{SS:vacuous}. The former has certain near-validity guarantees that I'll not describe here and is fully coherent, but it's not consonant so this plot is not a ``plausibility contour'' per se.  Tthe latter is strongly valid in the sense of Definition~\ref{def:strong.validity} but only half-coherent. My strongly valid IM is more efficient than Walley's, i.e., it has a more tightly concentrated contour, which implies narrower confidence intervals, so there's a practical price one pays for full coherence.}
\label{fig:walley.binom}
\end{figure}

\subsection{Marginalization}
\label{SS:marginal}

In the present setup, $\Theta$ is the unknown parameter in a posited statistical model.  It's often the case that inference on $\Theta$ itself is the objective, but it's also common for interest to be in some feature $\Phi = f(\Theta)$ for a known function $f$.  While {\em estimating} the feature $\Phi$ is easy, the task of {\em inference} on $\Phi$ is surprisingly challenging, for classical statisticians.  An advantage of ``probabilistic inference,'' e.g., Bayesian and fiducial, is that marginalization is at least conceptually straightforward---just apply the familiar probability calculus to get the corresponding marginal posterior distribution for $\Phi$, given $Y=y$.  Despite the conceptual simplicity, an underappreciated point is that whatever statistical properties are satisfied by the posterior for $\Theta$ need not carry over to the marginal posterior for $\Phi$.  A particularly striking example of this was first presented in \citet{stein1959}.  Part~III of the series focuses on the question of efficient marginal inference, so I'll not devote much space to this issue here.  The point is that the possibilistic marginalization that's suited for consonant IMs is both conceptually straightforward like the aforementioned probabilistic marginalization and preserves the IM's validity property.  

Let $\theta \mapsto \pi_y(\theta)$ be the IM's plausibility contour.  A hypothesis $B$ about a feature $\Phi = f(\Theta)$ corresponds to a hypothesis about $\Theta$ itself, i.e., 
\[ \Phi \in B \iff \Theta \in A_{f,B} := \{\theta \in \TT: f(\theta) \in B\}. \]
So, if $\uPi_y^f$ denotes the marginal/derived IM's consonant upper probability for $\Phi = f(\Theta)$, then it was explained in \eqref{eq:consonant.upper} how to evaluate this function:
\[ \uPi_y^f(B) = \uPi_y(A_{f,B}) = \sup_{\theta: f(\theta) \in B} \pi_y(\theta). \]
Taking $B=\{\phi\}$ to be a singleton gives the contour function for the marginal IM:
\begin{equation}
\label{eq:marginal.contour}
\pi_y^f(\phi) = \uPi_y^f(\{\phi\}) = \sup_{\theta: f(\theta)=\phi} \pi_y(\theta), \quad \phi \in f(\TT). 
\end{equation}
Then it's easy to check that 
\[ \sup_\phi \pi_y^f(\phi) = 1 \quad \text{and} \quad \uPi_y^f(B) = \sup_{\phi \in B} \pi_y^f(\phi), \quad B \subseteq f(\TT), \]
hence, the marginal IM is consonant too.  Interesting parallels can be drawn between the above style of marginalization and the more familiar probabilistic marginalization.  Indeed, both are based on applying some operation to a scalar function: integration of a density function in the probabilistic case and optimization of a contour function in the present imprecise-probabilistic case.  

Being able to carry out marginalization is only half the battle; the other half is demonstrating that the desired validity property enjoyed by the original IM for $\Theta$ is preserved in the marginal IM for $\Phi$.  This is crucial because it's exactly in the context of marginalization where issues with probabilistic inference become clear.  

\begin{cor}
Suppose that the IM for $\Theta$ is strongly valid relative to the posited model in the sense of Theorem~\ref{thm:valid}, and let $\Phi = f(\Theta)$ be a relevant feature.  Then the marginal IM defined above, with contour \eqref{eq:marginal.contour}, is also strongly valid, i.e., 
\[ \uprob_{Y,\Theta} \bigl\{ \pi_Y^f\bigl( f(\Theta) \bigr) \leq \alpha \bigr\} \leq \alpha, \quad \alpha \in [0,1]. \]
\end{cor}

\begin{proof}
Note that $\pi_y^f\bigl( f(\theta) \bigr) \geq \pi_y(\theta)$ for all $\theta$ and apply Theorem~\ref{thm:valid}. 
\end{proof}

This general recipe above leads to a valid marginal IM.  But if this strategy is able to preserve validity under no conditions, then one would guess that it's conservative.  That is, this basic marginalization generally doesn't lead to efficient---where ``efficient'' is in the sense of Section~\ref{SS:efficiency} below---marginal inference.  This loss of efficiency can even be seen from the above proof. That is, the supremum in \eqref{eq:marginal.contour} implies the marginal IM's contour can never be smaller, and typically will be larger, than that of the original IM.  And this is despite being focused on a general lower-complexity (e.g., lower-dimensional) unknown.  Therefore, efficient marginalization requires some extra care and, since this is relevant to almost all practical statistical problems, I'll devote my attention in Part~III of the series to marginalization.

\subsection{Efficiency considerations}
\label{SS:efficiency}

The results above are completely general and, therefore, very enlightening.  That is, the proposed framework provides a straightforward recipe for carrying out (imprecise) probabilistic inference under arguably the most general of statistical model assumptions while exactly controlling statistical error rates---no asymptotic approximations---and meeting the basic coherence requirements.  But a general strategy like this, one that achieves certain properties in every case, will rarely be most efficient.  Fortunately, in the typical cases one often encounters in practice, there are opportunities to improve efficiency through suitable {\em dimension reduction} steps, not unlike those recommended in \citet{imcond, immarg, imbook} and elsewhere.  I'll talk more about the dimension reduction details below and in the subsequent sections.  

Before moving forward, I need to say a bit more about the {\em efficiency} objective.  What I mean here by ``efficiency'' is intuitively clear but difficult to describe mathematically.  Situations in which efficiency is relatively easy to formulate are those in textbooks where the aim is to perform a specific task, such as point estimation, and optimality is expressed as solving a (constrained) optimization problem, e.g., minimum variance unbiased estimation.  In the present case, however, the aim of ``(imprecise-)probabilistic inference'' is intended to be sufficiently broad to cover most, if not all aspects of uncertainty quantification.  This bigger-picture focus has value, but there's no free lunch: spreading focus around to various aspects of uncertainty quantification necessarily makes formulation of a mathematically precise description of IM ``optimality'' or ``efficiency'' more challenging.\footnote{One idea would be to construct IMs that are optimal for various different statistical tasks \citep[see, e.g.,][Ch.~4]{imbook} and then fuse/patch them together in some suitable way, but this can't work for at least two reasons.  First, patching together separately-optimal IMs could lead to incoherence; see \citet{schervish1996} for an example where he tries to patch together p-values derived from two separate most-powerful tests.  Second, the patched-together IM can't be optimal in the sense that its separately optimal IMs components are.}  Fortunately, while a fully satisfactory mathematical formulation of IM efficiency is currently lacking, a lot can be done with a clear intuition.  

The focus is to keep $\theta \mapsto \pi_y(\theta)$ as small as possible in some (high level) sense, subject to the strong validity constraint.  It's necessary to say ``in some sense'' because, of course, being able to attain a minimum value at a given $y$ for all $\theta$ simultaneously can't be expected.  This focus makes sense from multiple points of view.
\begin{itemize}
\item First, from a statistical perspective, the magnitude of $\pi_y$ is directly related to the size of the IM's confidence regions and its p-values for tests for a given $y$, in particular, smaller $\pi_y$ suggests smaller confidence regions and smaller p-values.  Therefore, this magnitude is indirectly related to the expected volume of these confidence regions and the power of the tests.  So, aiming to make $\pi_y$ as small as possible is like trying to maximize efficiency of the IM's derived statistical procedures.
\vspace{-2mm}
\item Second, from an imprecise-probabilistic perspective, the magnitude of $\theta \mapsto \pi_y(\theta)$ is directly related to the size of the credal set $\cred(\uPi_y)$. That is, large $\pi_y$ makes $\cred(\uPi_y)$ large and vice versa, so a focus on minimizing the magnitude of $\pi_y$ is equivalent to an aim to limit the imprecision in the (imprecise-)probabilistic inference.  This aligns with the Principle of Expressiveness in Section~\ref{S:blocks} above. 
\end{itemize}

Perhaps not surprisingly, the steps needed to boost efficiency are largely problem- and model-specific.  But there is a general principle that can be applied across all cases to help improve efficiency.  The basic idea is to minimize the complexity of the Choquet integral calculation in \eqref{eq:omega}, the one that defines the contour $\omega_\eta(y,\theta)$ and, in turn, $\pi_y(\theta)$.  This complexity is controlled by, among other things, the dimension of its range of integration, so the principle below can be interpreted as an aim to mitigate the curse of dimensionality. By reducing this complexity/dimensionality, computation becomes more manageable without risking loss of statistical efficiency.  Moreover, at least in some cases (see Section~\ref{SS:vacuous}), reducing the complexity leads to a quantifiable improvement in statistical efficiency.

\begin{princom}
Reduce the dimension of the relevant variables as much as possible {\em before} defining the IM via the contour \eqref{eq:pi.y}. 
\end{princom}

Of particular interest are cases that involve a precise statistical model for $Y$, given $\Theta=\theta$, and partial prior information about $\Theta$.  There, classical notions like minimal sufficiency can be used to suitably reduce dimension and, hence, the complexity, with no associated loss of statistical efficiency.  Further dimension-reduction techniques, beyond that suggested by minimal sufficiency, will also be considered below (and in Part~III of the series). 
In certain ``extreme'' cases, apparently drastic complexity reduction steps are needed to achieve efficiency, which I'll discuss in Sections~\ref{SS:vacuous}--\ref{SS:complete}. 

There are some obvious connections between the Principle of Minimal Complexity and more familiar principles in the statistics literature.  In particular, both the {\em Sufficiency} and {\em Conditionality Principles} \citep[e.g.,][]{birnbaum1962} are rooted in the idea of reducing dimension/complexity where possible.  This connection will be seen more clearly in the sections that follow.  There's also an admittedly looser connection to the (controversial) {\em Likelihood Principle} but I'll save this explanation for Section~\ref{SS:complete}; see, also, Section~\ref{SS:vacuous}.  It's no coincidence that the Principle of Minimal Complexity is most closely related to the {\em Efficiency Principle} in \citet{imbook, sts.discuss.2014}.  In fact, the former is arguably more basic/fundamental because the latter's call for efficiency were always met through one or another complexity-reduction step \citep[e.g.,][]{imcond, immarg}.

\section{Precise statistical model}
\label{S:precise}

\subsection{General partial prior}
\label{SS:partial}

Consider the case where there exists a precise statistical model of the form $\prob_{Y|\theta}$ for the distribution of $Y$, depending on the value $\theta$ of $\Theta$.  Then a joint distribution for $(Y,\Theta)$ obtains by effectively multiplying the statistical model---treated as a conditional distribution in this case---by the prior distribution for $\Theta$.  Of course, uncertainty in the prior specification is encoded by an imprecise prior distribution with upper probability $\uprior = \uprob_\Theta$ on $\TT$.  Here I'll consider the case where the credal set $\credal = \cred(\uprior)$ is neither a singleton (complete prior information) nor everything (vacuous prior information); these two extreme cases will be treated separately in the next two subsections.  

What's special about the precise statistical model case is two-fold: first, it helps to shed light on what the construction does and why it works; second, it readily provides opportunities for dimension reduction and, in turn, potential efficiency gains.  Following the general developments leading to \eqref{eq:eta.y}, consider the baseline 
\begin{equation}
\label{eq:baseline}
\eta(y,\theta) = \frac{p_\theta(y) \, \uprior(\{\theta\})}{\sup_{t \in \TT} \{ p_t(y) \, \uprior(\{t\})\}}, \quad (y,\theta) \in \YY \times \TT, 
\end{equation}
where $p_\theta(y)$ is the likelihood function from the precise statistical model $\prob_{Y|\theta}$.  Then I'll proceed to construct a strongly valid IM for inference on $\Theta$ using the general formula \eqref{eq:pi.y}, which determines $y$-dependent possibility contour function 
\[ \pi_y(\theta) = \uprob_{Y,\Theta}\{ \eta(Y,\Theta) \leq \eta(y,\theta)\}, \quad \theta \in \TT, \]
with $\eta$ as in \eqref{eq:baseline}.  Since the nuance of the particular statistical application is mostly captured by the plausibility order \eqref{eq:baseline}, a few remarks about this formula are in order.  First, note that expression \eqref{eq:baseline} applies whether the $(Y \mid \Theta)$ model is discrete or continuous (see Remark~\ref{re:construction}.1).  Second, this expression looks superficially similar to Bayes's formula, i.e., proportional to likelihood times prior, although the interpretation of the ``prior'' portion and the normalization is different here in the imprecise case.  Third, if I set $r_y(\theta) = p_\theta(y) / \sup_{t \in \TT} p_t(y)$ to be the relative likelihood, bounded between 0 and 1, then \eqref{eq:baseline} can also be interpreted as the combination of two possibility measures according to a common fuzzy set intersection rule \citep[e.g.,][]{dubois.prade.1988}.  Finally, \eqref{eq:baseline} makes clear the {\em regularization} effect of incorporating $\uprior$: it clearly down-weights $\eta(y,\theta)$ in low-prior-support regions, which creates an opportunity for efficiency gain.  

\begin{remark}
\label{re:normalization}
One more technical remark is in order.  Recall the discussion of how to normalize the $\eta(y,\theta)$ function in Section~\ref{SS:construction}.  In particular, in the precise-model context, my recommendation was a supremum-based normalization, i.e., 
\[ c(y)^{-1} = \sup_{t \in \TT} \{ p_t(y) \, q(t) \}. \]
An alternative, Choquet-integral-based normalization would take 
\begin{align*}
c(y)^{-1} = \uprior L_y, 
\end{align*}
where $L_y(\theta) = p_\theta(y)$ is the likelihood function.  The appeal of this choice is its compatibility with the most familiar cases:
\begin{itemize}
\item with a vacuous prior, $\uprior L_y = \sup_t L_y(t)$ and then the corresponding $\eta(y,\theta)$ in \eqref{eq:eta.y} is the familiar relative likelihood;
\vspace{-2mm}
\item with a complete prior, $\uprior L_y = \int L_y(t) \, \prior(dt)$ and then the corresponding $\eta(y,\theta)$ in \eqref{eq:eta.y} is the Bayesian posterior density.
\end{itemize}
The problem, however, is that this choice of normalization doesn't ensure that the resulting $\pi_y(\theta)$ in \eqref{eq:pi.y} is a possibility contour in $\theta$ for each fixed $y$.  I've considered this choice carefully and the only two intuitively reasonable options seem to be the supremum and Choquet-integral-based normalization strategies.  Since only the former ensures the mathematical structure needed, the choice between the two is clear.
\end{remark}

Related to the proposed construction above, there are two important practical questions that arise immediately.  I'll state and address these both next.  

\begin{question}
{\em How can/should the partial prior information be quantified?}  There are lots of options to consider, including some classical ideas, such as contamination neighborhoods \citep[e.g.,][]{wasserman1990, huber1981, berger1984}.  I think that possibility measure priors can also come in handy.  For example, it's not out of the question that certain moments can be elicited from subject matter experts, in which case those possibility measure priors based on Markov's or Chebyshev's inequalities \citep[e.g.,][Sec.~4]{dubois.etal.2004} would be reasonable choices.  A discussion of how a prior possibility measure incorporating structural information about sparsity could be constructed was given in Part~I, and I'll revisit this specifically in a follow-up paper in the series.  To be clear, however, I can't recommend a ``default'' prior---it's the data analyst's responsibility to make this judgment.  The point is that the IM's properties are relative to the prior specification, and stronger properties can be achieved with stronger assumptions.  I can't tell the practitioner how strong of assumptions they can/should make, the user has to take some responsibility/ownership in their data analysis.\footnote{A contributing factor to the replication crisis in science is that statistics education has given the scientific community the impression that we've taken care of the data analysis, that their responsibility is just to pick an appropriate test to use.  I'm optimistically hoping that by {\em forcing} the scientists to take an active role in scientific inference---by having to make decisions about what, if anything, they specifically know and deserves to be incorporated in the analysis---then that will have an overall positive effect.} But note that {\em I'm not requiring} the user to have any prior information at all. He's free to proceed with a vacuous prior, with $\uprior(\{\theta\}) \equiv 1$, if indeed there's no prior information available or he doesn't know how to quantify it, so my proposed framework puts no constraints on the data analyst---in fact, it's far more flexible than the existing frequentist and Bayesian schools.  
\end{question}

\begin{question}
{\em How can the IM's output be computed?}  In many cases, the partial prior $\uprior$ for $\Theta$ can be expressed in terms of a possibility measure with a prior contour $q$.  Then the IM's contour function, which is given by a Choquet integral, takes a simpler looking form according to Proposition~7.14 in \citet{lower.previsions.book}:
\[ \pi_y(\theta) = \int_0^1 \sup_{\vartheta: q(\vartheta) > \alpha} \prob_{Y|\vartheta}\{ \eta(Y,\vartheta) \leq \eta(y,\theta)\} \, d\alpha. \]
From this expression, there's a very natural Monte Carlo-driven strategy available: 
\begin{equation}
\label{eq:partial.MC}
\pi_y(\theta) \approx \int_0^1 \max_{\vartheta_r: q(\vartheta_r) > \alpha} \Bigl[ \frac1M \sum_{m=1}^M 1\{\eta(Y_r^{(m)}, \vartheta_s) \leq \eta(y,\theta)\} \Bigr] \, d\alpha,
\end{equation}
where $\vartheta_r,\ldots,\vartheta_R$ is a fixed grid that spans all of (the relevant parts of) $\TT$ and $Y_r^{(1)},\ldots,Y_r^{(M)}$ are iid samples from $\prob_{Y|\vartheta_r}$, for $r=1,\ldots,R$.  Other more efficient strategies may be possible in certain cases.  See Section~\ref{SS:examples} for a few illustrations.  For further details on computation, see \citet{hose.hanss.martin.belief2022} and Section~\ref{S:discuss}.  
\end{question}

Often there's a non-trivial minimal sufficient statistic, $S(Y)$, for the model $\prob_{Y|\theta}$, of dimension no larger than that of $Y$ and no smaller than that of $\theta$.  With a slight abuse of notation, it follows from the factorization theorem that 
\[ p_\theta(y) = p_\theta(s) \, p(y \mid s), \quad s=S(y). \]
In the above display, $p_\theta(s)$ is the likelihood function for $\theta$ based on the marginal distribution of $S(Y)$, which depends on $\theta$, and $p(y \mid s)$ is the conditional density/mass function for $Y$, given $S(Y)=s$, which, by definition of sufficiency, doesn't depend on $\theta$.  Then the baseline \eqref{eq:baseline} immediately simplifies to 
\[ \eta(y,\theta) = \frac{p_\theta(s) \, \uprior(\{\theta\})}{\sup_{t \in \TT} \{ p_t(s) \, \uprior(\{t\})\}}, \]
which only depends on $y$ through the value $s=S(y)$ of the minimal sufficient statistic.  With another slight abuse of notation, if I write this as $\eta(s,\theta)$, then it readily follows that the IM's possibility contour can be expressed as 
\begin{align*}
\pi_s(\theta) & = \uprob_{Y,\Theta}\{ \eta(S(Y),\Theta) \leq \eta(S(y),\theta)\} \\
& = \uprob_{S,\Theta}\{ \eta(S,\Theta) \leq \eta(s,\theta)\} \\
& = \sup_{\prior \in \credal} \int \prob_{S|\vartheta}\{\eta(S,\vartheta) \leq \eta(s,\theta)\} \, \prior(d\vartheta), \quad s=S(y), 
\end{align*}
where $\prob_{S|\vartheta}$ is the image of $\prob_{Y|\vartheta}$ under mapping $S$ and, of course, the left-hand side of the above display depends on $y$ only through $s=S(y)$.  The point is that the complexity of the right-hand side has been reduced---the Choquet integral is with respect to the ``marginal'' upper probability for the lower-dimensional $(S,\Theta)$ induced by that for $(Y,\Theta)$ and the mapping $S$ that defines a minimal sufficient statistic.  The corresponding strong validity property now takes the form 
\[ \uprob_{S,\Theta}\{ \pi_S(\Theta) \leq \alpha \} \leq \alpha, \quad \alpha \in [0,1], \]
and the relevant consequences of strong validity presented in Section~\ref{S:general} still hold, just now they're for the lower-dimensional IM indexed by a minimal sufficient statistic.  To be clear, the above reduction in simplicity via sufficiency didn't actually change the IM in the sense that the baseline ``$\pi_y(\theta)$'' equals the reduced ``$\pi_{S(y)}(\theta)$'' for all $(y,\theta)$.  But the spirit of this dimension reduction process can be used in other cases (see below) and there it will generally affect the IM contour, thus the potential for efficiency gain. 

What's described above is entirely expected, so the it's easy to overlook what's driving the dimension reduction.  The fact that $\eta(y,\theta)$ only depends on $y$ through $s=S(y)$ is an immediate consequence of the factorization theorem, so I don't need to say any more about that.  The key observation that I want to draw the reader's attention to is the fact that the same factorization being used to simplify $\eta$ is also being applied to the $\uprob$-probability calculation that defines the IM output.  When the direct dependence of $\eta(y,\theta)$ on ``$y \mid s$'' drops out, that dimension in the $\uprob$-probability computation collapses, thereby reducing the dimension/complexity.  In the end, the IM construction is based on an outer consonant approximation to the joint distribution of $(S,\Theta)$, rather than of $(Y,\Theta)$, and now it's clear why all the same validity properties hold.  

In other cases, there might be opportunities for dimension reduction beyond what minimal sufficiency itself allows.  For example, in Cauchy models with an unknown location parameter, the order statistics are minimal sufficient and yet the parameter is a scalar.  In some of those cases, it's possible to express the minimal sufficient statistic as the pair $\{S(Y),U(Y)\}$ where $U(Y)$ is ancillary in the sense that the  distribution of $U=U(Y)$, given $\Theta=\theta$, doesn't depend on $\theta$.  With my now-familiar abuse of notation, this leads to the factorization 
\[ p_\theta(y) = p_\theta(s \mid u) \, p(u) \, p(y \mid s, u), \quad s=S(y), \quad u=U(y). \]
Only the first term depends on $\theta$, everything else cancels in the ratio \eqref{eq:baseline}.  That is, 
\[ \eta(y,\theta) = \frac{p_\theta(s \mid u) \, \uprior(\{\theta\})}{\sup_{t \in \TT} \{ p_t(s \mid u) \, \uprior(\{t\})\}}, \]
and the left-hand side only depends on $y$ through $(s,u)$.  Direct (but {\em naive}) application of the IM construction described above leads to 
\begin{align*}
\pi_y^\text{naive}(\theta) & = \uprob_{Y,\Theta}\{ \eta(Y, \Theta) \leq \eta(y, \theta)\} \\
& = \sup_{\prior \in \credal} \int \prob_{S,U|\vartheta}\{ \eta(S, U, \vartheta) \leq \eta(s,u,\theta)\} \, \prior(d\vartheta) \\
& = \sup_{\prior \in \credal} \int \E\bigl[ \prob_{S|U,\vartheta}\{ \eta(S,U,\vartheta) \leq \eta(s,u,\theta) \} \bigr] \, \prior(d\vartheta) , \quad (s,u)=(S,U)(y), 
\end{align*}
where the expectation inside the integral is with respect to the marginal distribution of $U$, which doesn't depend on $\vartheta$; also note that the left-hand side depends on $y$ only through $(s,u)$.  This IM construction is fine in the sense that strong validity holds.  I indicated above, however, that this construction is ``naive''---that's because I missed an opportunity to apply the Principle of Minimal Complexity.  In particular, since $\eta$ depends only on the {\em conditional} density of $S$, given $(u,\theta)$, I had an opportunity to take $u$ as fixed and thereby reduce the dimension/complexity of the IM construction.  This lower-complexity construction goes as follows:
\begin{align*}
\pi_{s|u}(\theta) & = \uprob_{S,\Theta|U=u}\{\eta(S,u,\Theta) \leq \eta(s,u,\theta)\} \\
& = \sup_{\prior \in \credal} \int \prob_{S|u,\vartheta}\{\eta(S, u, \vartheta) \leq \eta(s, u, \theta)\} \, \prior(d\vartheta), \quad (s,u)=(S,U)(y),
\end{align*}
where ``$\pi_{s|u}(y)$'' is just the notation I'm using for the expression in the right-hand side.  The key difference between this and the IM construction above is that the integration over the $U$-space has been eliminated, thus lowering the complexity.  

The same strong validity property holds for this lower-complexity IM construction, so there's no risk.  Except for a few the measure-theoretic technicalities, the explanation of this is clear.  Indeed, for almost all values $u$ of $U$, define 
\[ f_\alpha(u) = \uprob_{S,\Theta|u}\{ \pi_{S|u}(\Theta) \leq \alpha \}, \]
and define $f_\alpha(u)$ arbitrarily on the complementary null set.  Then the proof of strong validity above applied to this case implies that 
\begin{equation}
\label{eq:valid.ae}
f_\alpha(u) \leq \alpha, \quad \text{for all $\alpha \in [0,1]$ and almost all $u$}. 
\end{equation}
From here, an application of Fubini's theorem followed by Fatou's lemma gives 
\begin{align*}
\uprob\{ \pi_{S|U}(\Theta) \leq \alpha \} & = \sup_{\prior \in \credal} \int \prob_{S,U|\theta}\{ \pi_{S|U}(\theta) \leq \alpha \} \, \prior(d\theta) \\
& = \sup_{\prior \in \credal} \int \E\bigl[ \prob_{S|U,\theta}\{ \pi_{S|U}(\theta) \leq \alpha \} \bigr] \, \prior(d\theta) \\
& = \sup_{\prior \in \credal} \E \Bigl[ \int \prob_{S|U,\theta}\{ \pi_{S|U}(\theta) \leq \alpha \} \, \prior(d\theta) \Bigr] \\
& \leq \E\{ f_\alpha(U) \}.
\end{align*}
The upper bound is no more than $\alpha$, thanks to \eqref{eq:valid.ae}, hence strong validity.  

Since this ``condition-on-an-ancillary'' solution is, again, entirely expected, it's worth emphasizing what's the driving force behind this dimension reduction.  After carrying out the factorization and canceling those factors that don't depend on $\theta$, what remains is a {\em conditional distribution} of $S$, given $U$.  Following the Principle of Minimal Complexity, I take that $U$ in the conditional distribution fixed and effectively construct consonant approximation to the conditional distribution of $S$, given $(U,\Theta)$.  Consequently, in this case, the dimension/complexity is reduced in two steps: first, the ``$y \mid (s,u)$'' dimension is collapsed and, second, a particular slice in the ``$u$'' dimension is focused in on.

Aside from facilitating the proof of strong validity, the result in \eqref{eq:valid.ae} is of independent interest.  This shows that the IM is giving strongly valid inference even when focused exclusively on the same $U(y)$-specific subset of the population that the observed data falls into.  This was Fisher's original goal of conditional inference: to make the inference even more relevant to the data at hand by focusing any necessary probability calculations on the corresponding relevant subset.  See Section~\ref{SS:examples} below for a couple examples. 

The two extreme cases---complete and vacuous prior information---in the following two subsections, respectively, involve similarly extreme dimension reduction considerations, not unlike what was done above in dealing with the factorization involving $(S,U)$.  Further factorization tricks like this for reducing dimension will be discussed below in contexts where nuisance parameters are present; see Part~III of the series. 

\subsection{Vacuous prior}
\label{SS:vacuous}

Section~\ref{SS:partial} considered a general case on the partial prior spectrum presented in Part~I.  Here I focus on one of the two extremes on that spectrum, namely, where the prior information for $\Theta$ is {\em vacuous} and the corresponding prior credal set $\credal$ consists of all possible prior distributions.  This is the model for prior ignorance implicitly adopted when it's assumed that $\Theta$ is simply ``unknown.''  My reason for focusing on this case isn't that I think it's realistic or practical.  In fact, it must be rare that an investigator would be genuinely and completely ignorant about the quantity of interest.  But this is an interesting case at least for technical and historical reasons. 

As a special case of that in \eqref{eq:baseline}, consider the baseline 
\begin{equation}
\label{eq:rel.lik}
\eta(y,\theta) = \frac{p_\theta(y)}{\sup_{t \in \TT} p_t(y)}, \quad (y,\theta) \in \YY \times \TT, 
\end{equation}
where, in this case, the prior contour appears to be absent because it's constant equal to 1, i.e., $\uprior(\{\theta\}) = 1$ for all $\theta$.  This is what prior ignorance means: all values of $\Theta$ are equally and fully possible.  The right-hand side above is familiar---it's the likelihood ratio statistic for testing $\Theta=\theta$ or, more succinctly, the {\em relative likelihood}.  To simplify the notation for what follows, I'll assume that all the dimension reduction steps described in Section~\ref{SS:partial} have been taken and I've renamed the lower-dimensional statistic $S$ as $Y$.  Direct (but, again, {\em naive}) application of the IM construction described above leads to 
\begin{align*}
\pi_y^\text{naive}(\theta) & = \uprob_{Y,\Theta}\{ \eta(Y,\Theta) \leq \eta(y,\theta)\} \\
& = \sup_{\vartheta \in \TT} \prob_{Y|\vartheta}\{ \eta(Y,\vartheta) \leq \eta(y,\theta)\}, \quad \theta \in \TT, 
\end{align*}
where the supremum results from taking a Choquet integral with respect to the vacuous prior.  The corresponding IM is strongly valid but, as the reader can anticipate, what makes this construction ``naive'' is that I failed to take an opportunity to reduce dimension and improve efficiency.  As before, what signals an opportunity to reduce dimension is that the baseline $\eta$ above only depends on the {\em conditional} density of $Y$, given $\theta$.  By taking $\theta$ as fixed in the above calculation I get a new and more efficient IM:
\[ \pi_y(\theta) = \prob_{Y|\theta}\{ \eta(Y,\theta) \leq \eta(y,\theta)\}, \quad \theta \in \TT. \]
The right-hand side in the above display is familiar: it's just the (exact) p-value for testing $\Theta=\theta$ based on the likelihood ratio statistic, so the corresponding IM is obviously strongly valid.  Computation in this case is easier than in the genuine partial prior case discussed above, thanks to the dimension reduction step.  Indeed, there's no optimization or integration as in \eqref{eq:partial.MC}, so a simple Monte Carlo approximation would be
\begin{equation}
\label{eq:vacuous.MC}
\pi_y(\theta) \approx \frac1M \sum_{m=1}^M 1\{\eta(Y^{(m)}, \theta) \leq \eta(y,\theta)\}, \quad \text{where} \; Y^{(1)},\ldots,Y^{(M)} \iid \prob_{Y|\theta}. 
\end{equation}
Of course, more sophisticated approaches can be taken on a problem-specific basis.  Numerical illustrations are shown in Section~\ref{SS:examples}. 

It's also obvious that the latter contour determines an everywhere more efficient IM than the one I called ``naive'' above:
\[ \pi_y(\theta) \leq \pi_y^\text{naive}(\theta), \quad \text{for all $\theta \in \TT$}. \]
The above inequality holds since $\pi_y^\text{naive}$ involves a supremum while $\pi_y$ doesn't.  In this case, clearly the naive IM that ignores the dimension-reduction opportunity is inadmissible.  But despite the inefficiency-creating supremum, the ``naive'' IM may not differ too much from the other IM.  The reason is that the likelihood ratio statistic is often an exact pivot, i.e., the distribution of $\eta(Y,\theta)$ doesn't depend on $\theta$. Also Wilks's theorem states that, under regularity conditions, the likelihood ratio statistic is an asymptotic pivot.  Therefore, if $Y=(Y_1,\ldots,Y_n)$ is an iid sequence indexed by $n$, and if certain regularity conditions are satisfied, then $\pi_y^\text{naive}(\theta) \approx \pi_y(\theta)$ for all $\theta$ when $n$ is sufficiently large.  This same pivot structure also aids with computation, since the Monte Carlo samples in \eqref{eq:vacuous.MC} wouldn't have to depend on the particular $\theta$ on the left-hand side.  

The IM construction above produces a possibility measure, $\uPi_y$.  Since possibility measures are coherent, there's a corresponding credal set $\cred(\uPi_y)$ that consists of all those ($y$-dependent) probability distributions on $\TT$ that are dominated by $\uPi_y$.  The well-known characterization of this credal set (cf.~Remark~\ref{re:inclusion}) states that 
\[ \Pi_y \in \cred(\uPi_y) \iff \Pi_y\{C_\alpha(y)\} \geq 1-\alpha \;\; \text{for all $\alpha \in [0,1]$}, \]
where $C_\alpha(y)$ is the upper-$\alpha$ level set of $\pi_y$, the IM's possibility contour.  In words, all the probability distributions compatible with the IM's output assign probability at least $1-\alpha$ to the IM's $100(1-\alpha)$\% plausibility regions.  This credal set contains some extreme probability distributions, e.g., point mass distributions on points in the core of $\uPi_y$, but also some less extreme ones.  Obviously, the highly-concentrated distributions in $\cred(\uPi_y)$ wouldn't be good approximations but there are some more diffuse elements in $\cred(\uPi_y)$ that might be decent approximations. What about some of the familiar probabilistic inference solutions, such as default-prior Bayes, fiducial, etc.? \label{page:fiducial.in.credal} Are they contained in the IM's credal set?  There are some relatively simple special cases (e.g., location models) where it's not too difficult to show, as in \citet[][Sec.~3.2]{dubois.etal.2004}, that the fiducial distribution, default-prior Bayes posterior distribution, etc.~are the most diffuse elements in the IM's credal set; see \citet{martin.isipta2023} for details.  But I'm not aware of a general definition of ``most diffuse,'' let alone results about when such an element exists beyond these special cases.  This is an interesting open question because, in my view, the appropriate {\em definition} of a confidence distribution is as the ``most diffuse'' element in the credal set of a strongly valid IM.  This would make clear what role the confidence distribution plays, namely, as a simple, familiarity-motivated probabilistic approximation to the strongly valid possibilistic IM. 

I'll end this section with a brief discussion of the {\em likelihood principle} \citep[e.g.,][]{basu1975, birnbaum1962, bergerwolpert1984}.  The likelihood principle states that inference on $\Theta$ ought to depend only on, say, the relative likelihood.  Although the IM constructed above is driven by the relative likelihood $\eta$ in \eqref{eq:rel.lik}, it doesn't satisfy the likelihood principle because it depends on the model $\prob_{Y|\theta}$, which determines but isn't determined by the relative likelihood.  Some argue that failure to satisfy the likelihood principle is a shortcoming, but I want to offer a different perspective here.  I claim that {\em IMs shouldn't satisfy the likelihood principle}, at least not by default. Strong validity is my top priority so I'm not willing to sacrifice on this.  My conjecture is that it's impossible to achieve strong validity and the likelihood principle simultaneously in this setting.  It is possible, however, to simultaneously achieve a relaxed version of strong validity and the likelihood principle; the proposal in \citet{walley2002} is one example, and the ``relaxed'' strong validity property is what he calls the {\em fundamental frequentist principle}.  But even if my conjecture is wrong, or if I was willing to relax the strong validity condition, then I still wouldn't want to satisfy the likelihood principle, and here's why.  If the IM could achieve both the likelihood principle and strong validity, then the latter must hold {\em uniformly} over all those models that determine the same relative likelihood function.  Consequently, the IM would be less efficient as a result of the constraint imposed by the likelihood principle.  In Walley's case, for example, the confidence intervals he obtains are significantly wider (see Figure~\ref{fig:walley.binom}) than the standard intervals, and that inefficiency can be traced back precisely to the Bayesian aspects of his solution that ensure the likelihood principle is met.  As Walley argues, this additional width is to be expected since the interval's coverage properties must hold, e.g., over all stopping rules.  But what if the data analyst {\em knows} what stopping rule was used?  Why should he accept a statistical solution that ignores this information and gives less efficient inference as a result? It should be up to the data analyst to decide whether he should be concerned about different models (e.g., binomial or negative binomial) with equivalent relative likelihoods: if he's concerned about this, then he can incorporate this imprecision into the analysis himself and have control over the resulting loss of efficiency; if he's not concerned, then he can enjoy the efficiency that's afforded to him by not having that concern.  Ultimately, it's the model assumptions, etc., that determine the data analyst's IM, and the framework should return the most efficient IM as possible based on his/her inputs. The framework shouldn't indirectly control the data analyst's model assumptions.  See \citet{martin.basu} for more on the IM's ability to balance the likelihood principle and the reliability necessary for scientific inference.

\subsection{Complete prior} 
\label{SS:complete}

Next, I consider the opposite end of the partial prior spectrum, namely, where the prior information is {\em complete} or precise.  This is the classical Bayesian setup where the credal set contains just one prior distribution: $\credal = \{\prior\}$.  Like prior ignorance, the complete prior case is unrealistic, since virtually no applications would have such strong prior information available.  But, again, this is interesting for technical and historical reasons. 

Following the general developments leading to \eqref{eq:eta.y}, consider the baseline setup 
\[ \eta(y,\theta) = \frac{p_\theta(y) \, q(\theta)}{\sup_{t \in \TT} p_t(y) \, q(t)}, \quad (y,\theta) \in \YY \times \TT, \]
where $p_\theta(y)$ is the likelihood portion and $q(\theta)$ is the prior density function.  The numerator can be re-expressed as 
\[ p_\theta(y) \, q(\theta) = q_y(\theta) \, p(y), \]
where $q_y(\theta)$ is the posterior density function based on Bayes's formula and $p(y)$ is the marginal density of $y$ derived from the joint distribution of $(Y,\Theta)$.  That marginal density doesn't depend on $\theta$, so it cancels in the ratio, leading to 
\[ \eta(y,\theta) = \frac{q_y(\theta)}{\sup_{t \in \TT} q_y(t)}, \quad (y,\theta) \in \YY \times \TT. \]
Since this depends only on the {\em conditional} distribution of $\Theta$, given $Y=y$, there's an opportunity for dimension reduction to be had here by taking $y$ fixed.  This is exactly like how the ancillary statistic $U=u$ was taken as fixed in the previous subsection.  Then the Principle of Minimal Complexity suggests an IM construction with contour 
\[ \pi_y(\theta) = \prior_y\{ \eta(y,\Theta) \leq \eta(y,\theta)\}, \quad \theta \in \TT, \]
where $y$ is fixed and $\prior_y$ is the usual Bayesian posterior distribution.  In fact, $\eta$ above can be replaced by the unnormalized posterior density, i.e., likelihood times prior, since normalizing constants cancel when $y$ is fixed.  Computationally, this IM construction is relatively straightforward.  That is, if samples $\Theta_1,\ldots,\Theta_M$ can be drawn from the posterior distribution $\prior_y$, then a Monte Carlo approximation of the IM contour is
\[ \pi_y(\theta) \approx \frac1M \sum_{m=1}^M 1\{\eta(y,\Theta_m) \leq \eta(y,\theta)\}, \quad \theta \in \TT, \]
and the IM's upper probability at a generic hypothesis $A$ can be readily approximated by maximizing the above Monte Carlo approximation over a grid of $\vartheta$ points in $A$.  

The IM just described is, of course, different from the Bayesian solution---it's an outer consonant approximation of the usual posterior distribution.  There are advantages to this possibilistic representation of the Bayesian solution and these benefits were, more or less, already recognized by Pereira \& Stern (and collaborators) in a series of articles about what they call the {\em e-value} (``e'' for evidence) and its use in Bayesian significance testing \citep[e.g.,][]{MR4426408, MR3188081, MR2383252}. Pereira \& Stern's claim is that their e-value has a p-value-like calibration with respect to both the posterior and the sampling distribution, thus making it suitable for significance testing in Bayesian and frequentist settings simultaneously.  This calibration is exactly the strong validity property that follows from the general IM construction described above, as I explain in more detail below.  Pereira \& Stern also made connections to possibility theory, but not to the extent that those connections are made in the present paper.  

While the strong validity property follows by the general construction given above, in this case it's easy to verify directly.  Indeed, $\pi_y(\theta)$ is just the posterior distribution function of the random variable $\eta(y, \Theta)$, evaluated at $\eta(y,\theta)$, so it follows from standard arguments that $\pi_y(\Theta)$, as a function of $\Theta \sim \prior_y$, is stochastically no larger than $\unif(0,1)$.  This implies a sort of {\em strong conditional validity}, i.e., 
\begin{equation}
\label{eq:conditional.strong.validity}
\prior_y\{ \pi_y(\Theta) \leq \alpha \} \leq \alpha, \quad \text{for all $\alpha \in [0,1]$ and almost all $y$}. 
\end{equation}
From this, the strong validity property as in \eqref{eq:strong.validity} following immediately from the iterated expectations formula.  Like in the previous subsection, however, the strong conditional validity result is much more than just a step towards proving \eqref{eq:strong.validity}.  The reader can easily verify that, in this case, the IM depends only on the posterior distribution and, therefore, only on the relative likelihood; and this, of course, implies that the IM satisfies the likelihood principle. Indeed, it's the result in \eqref{eq:conditional.strong.validity} that explains this compatibility of strong validity and the likelihood principle in the special case of complete prior information.

\subsection{General partial prior, revisited}
\label{SS:part.naive}

The reader may notice that the recommended applications of the Principle of Minimal Complexity suggested in the general partial prior solution described in Section~\ref{SS:partial} above had nothing to do with the partial prior.  These were just basic sufficiency and condition-on-ancillary-statistics recommendations that apply no matter what kind of prior information is available, if any.  One would expect that certain partial priors induce structure that can be leveraged to further reduce complexity in the Choquet integral.  Unfortunately, I don't currently know how to do much more than what was already suggested in Section~\ref{SS:partial}.  Here I want to briefly explain the little bit more that I didn't already share above and highlight the opportunities that I see for efficiency gain.  

After seeing the complexity-reduction strategy for the vacuous prior case as described in Section~\ref{SS:vacuous}, it's natural to consider naively applying the same strategy in the general partial prior case.  That is, if the available prior information is quantified by, say, a possibility measure $\uprior$ with contour function $q$ and $\eta(y,\theta)=\eta_q(y,\theta)$ is the baseline plausibility ordering is as in \eqref{eq:baseline}, which depends on $q$, then define the IM for $\Theta$ to have contour 
\[ \pi_y(\theta) = \pi_{y,q}(\theta) = \prob_{Y|\theta}\{ \eta_q(Y,\theta) \leq \eta_q(y,\theta)\}, \quad \theta \in \TT. \]
This is different from the proposal in Section~\ref{SS:partial} because here I'm not calculating a Choquet integral.  Instead, I just naively reduce complexity by fixing the value of $\Theta$ at the hypothesized value $\theta$.  Why would I do this?  The two basic reasons are (a)~it's simpler to compute and (b)~it does incorporate the prior information $q$.  More compelling is that (c)~this naive, complexity-reduced IM is still strongly valid, i.e., 
\[ \uprob_{Y,\Theta}\{ \pi_{y,q}(\Theta) \leq \alpha \} \leq \alpha, \quad \alpha \in [0,1], \]
where $\uprob_{Y,\Theta}$ is the upper joint distribution of $(Y,\Theta)$ under the partial prior with contour $q$.  Strong validity implies, in particular, that tests and confidence regions derived from the IM have provable error rate control guarantees.  

The downside, however, is that this IM construction apparently doesn't make full use of the information in $q$.  This makes sense, intuitively, given that it's based on skipping the key Choquet integral step, but it's difficult (for me) to see the statistical consequences of this skipped-step.  As I show numerically in Example~\ref{ex:binomial} below (and have checked in other examples not shown here), this naive, complexity-reduced IM doesn't commit to the partial prior information as much as the partial-prior IM solution proposed in Section~\ref{SS:partial}.  That is, there appears to be opportunities for the former to be more aggressive---and hence more efficient---when partial prior information is available.  So, despite the advantages of this naive, complexity-reduced partial-prior IM, in particular, the strong validity, I don't believe this solution is fully satisfactory.  More work is needed to understand how the structure in the partial prior can be leveraged to allow for a more tailored dimension reduction that leads to the greatest efficiency gains; see Section~\ref{S:discuss}.

\section{Examples}
\label{SS:examples}

\subsection{Basic illustrations}

The first three examples are very simple, aimed to illustrate the proposed solution and, in particular, to show more concretely what I have in mind when I speak of a ``partial prior.''  Some more sophisticated examples are given further below. 

\begin{ex}[Binomial]
\label{ex:binomial}
``The fundamental problem of practical statistics'' \citep{pearson1920} concerns the case when the observable data is modeled as $(Y \mid \Theta=\theta) \sim \prob_{Y|\theta}=\bin(n,\theta)$, with $n$ a known positive integer.  The density for this model satisfies 
\[ p_\theta(y) = \binom{n}{y} \theta^n (1-\theta)^{n-y}, \quad y \in \YY, \quad \theta \in \TT, \]
where $\YY = \{0,1,\ldots,n\}$ and $\TT = [0,1]$.  The goal is to quantify uncertainty about the unknown $\Theta$, the binomial rate.  For the purpose of illustration, I'll consider three different forms of prior information---vacuous, complete, and partial---and present plots to visualize the IM's output and how the results depend on the form of the prior information.  To be clear, I'm not suggesting any ``default'' approaches for incorporation of complete or partial prior information; this choice needs to be made by the data analyst in their individual application.  Of course, one can take the vacuous prior approach as a ``default'' but this is only advisable if there's genuinely no reliable prior information on which to base the analysis.  As will be clear below, there's efficiency to be gained by incorporating at least some partial prior information when available.

Consider a very basic form of partial prior information, which can be easily described in words as {\em I'm 90\% sure that $\Theta$ is no more than 0.6}.  While it's rare in applications to have genuine and complete prior information, it's hard to imagine any application where genuine prior information of the simple form just presented isn't available.  But despite the simplicity and ubiquity of partial prior information of this form, this still has not been satisfactorily handled in the statistics literature.  Quantifying this partial prior information as a possibility measure simple too: just define the contour 
\[ q(\theta) = 1(\theta \leq 0.6) + 0.9 \, 1(\theta > 0.6), \quad \theta \in \TT,  \]
and the corresponding possibility measure $\uprior$ via consonance.  This can be readily incorporated into the analysis as described above, either via the full-blown partial-prior IM construction in Section~\ref{SS:partial} or the naive complexity-reduced construction in Section~\ref{SS:part.naive}.  



For this illustration, I consider six different scenarios: three different $n$ values, namely, $n \in \{8,16, 32\}$, and two different maximum likelihood estimator values, namely, $\hat\theta \in \{0.50, 0.75\}$.  These two $\hat\theta$ values are chosen so that one is compatible with the partial prior information while the other is borderline incompatible.  Figure~\ref{fig:binomial} shows the plausibility contours for the three IMs---vacuous prior, partial prior, and naive complexity-reduced partial prior---in each of the six $(n,\hat\theta)$ combinations.  Here are some key observations:
\begin{itemize}
\item As the sample size $n$ increases, the plausibility contours get more tightly concentrated around the $\hat\theta$ value as we go from top to bottom in Figure~\ref{fig:binomial}.  
\item In the left-hand column, where the data and prior are compatible, there's a varying degree of efficiency gain with the partial prior information compared to the vacuous prior.  On the one hand, the full-blown partial-prior IM gets some efficiency gain in the range $(0.6,1)$ compared to the vacuous-prior IM, there's an apparent efficiency loss in the $(0,0.6]$ range.  With the naive complexity-reduced partial-prior IM, the efficiency gain on $(0.6, 1)$ is minor at best but without the efficiency loss on $(0,0.6]$.  
\item The right-hand column shows the case where data and partial prior are borderline incompatible; of course, the data and vacuous prior can't be incompatible.  This kind of prior--data conflict can have counter-intuitive consequences.  When the data is relative uninformative, i.e., when $n=8$, the partial prior IM gives more weight to the prior, as expected.  As the data becomes more informative, the partial-prior IM concludes that both the data and prior are possible, and returns a two-mode plausibility contour.  For even larger $n$ (not shown), as expected, the mode at $\theta=0.6$ disappears and the partial-prior and vacuous-prior IMs merge.  Interestingly, the naive complexity-reduced partial prior IM seems to ``split the difference'' between the full-blown partial-prior IM and the vacuous-prior IM.  
\end{itemize} 
To me, neither of the two partial-prior IM solutions are fully satisfactory.  The initial proposal bets more on the available prior information (left column of Figure~\ref{fig:binomial}), which creates opportunities for efficiency gain, but appears that it could be more aggressive.  The naive complexity-reduced proposal shies away from gambling on the prior information, but takes the less-risky bet of closely following the vacuous-prior IM.  This illustrates the point made in Section~\ref{SS:part.naive}, i.e., that there ought to be a {\em via media} that balances the strengths of one and the weaknesses of the other.  
\end{ex}

\begin{figure}[p]
\begin{center}
\subfigure[$n=8$, $\hat\theta=0.50$]{\scalebox{0.55}{\includegraphics{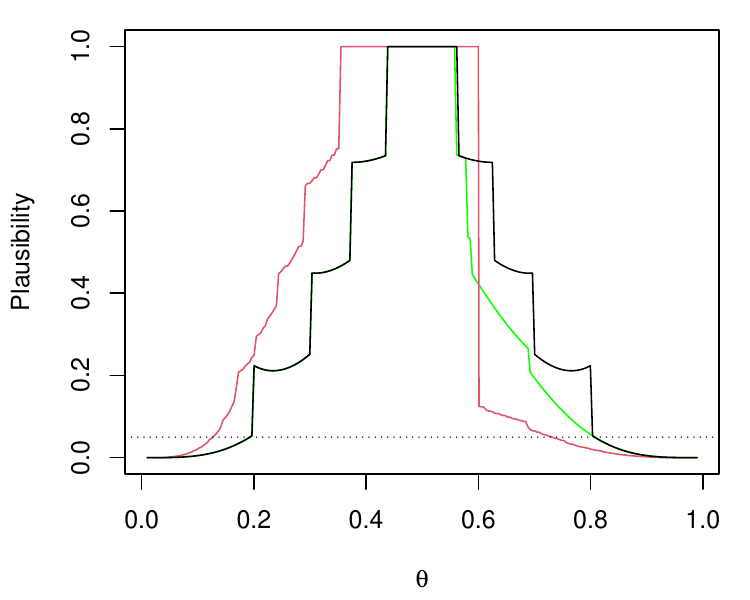}}}
\subfigure[$n=8$, $\hat\theta=0.75$]{\scalebox{0.55}{\includegraphics{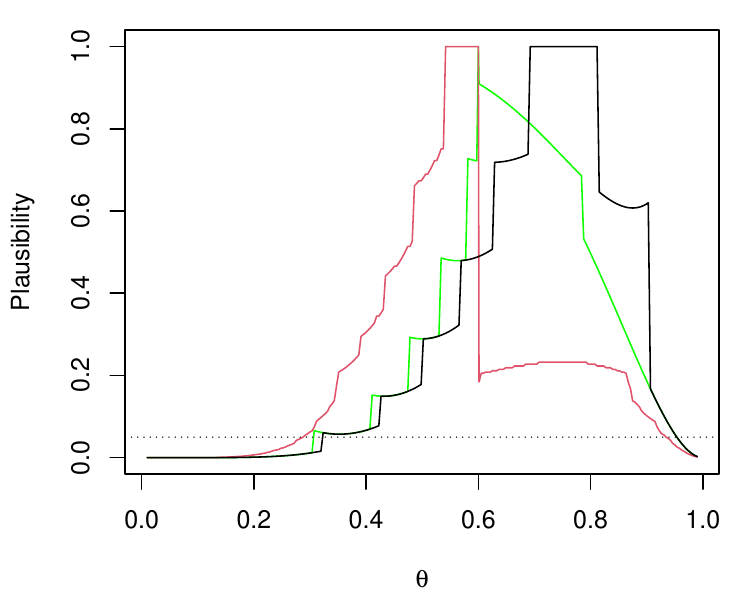}}}
\subfigure[$n=16$, $\hat\theta=0.50$]{\scalebox{0.55}{\includegraphics{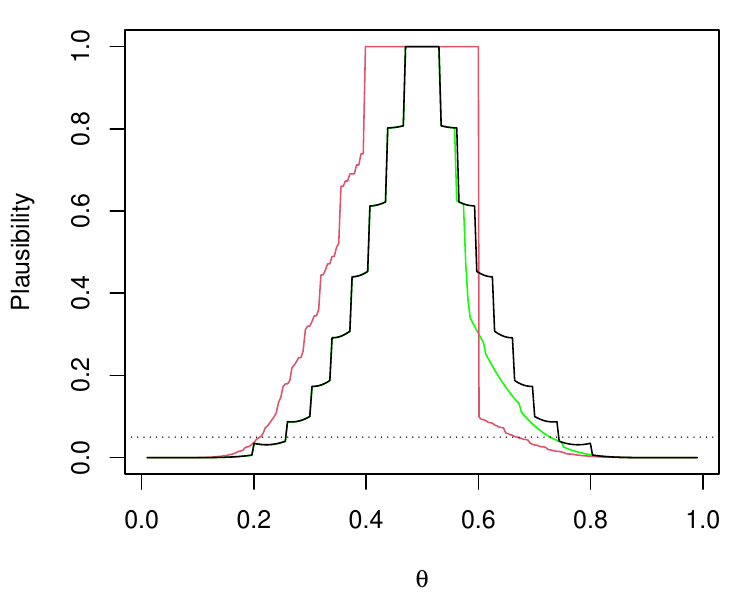}}}
\subfigure[$n=16$, $\hat\theta=0.75$]{\scalebox{0.55}{\includegraphics{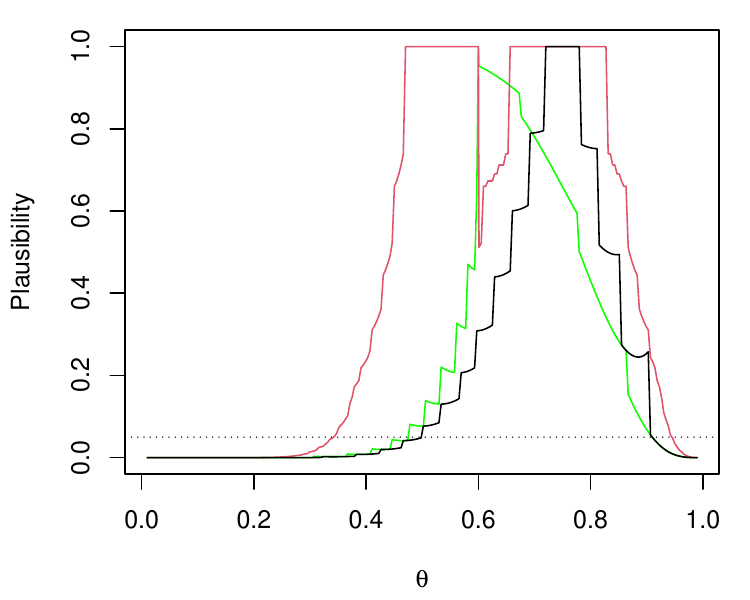}}}
\subfigure[$n=32$, $\hat\theta=0.50$]{\scalebox{0.55}{\includegraphics{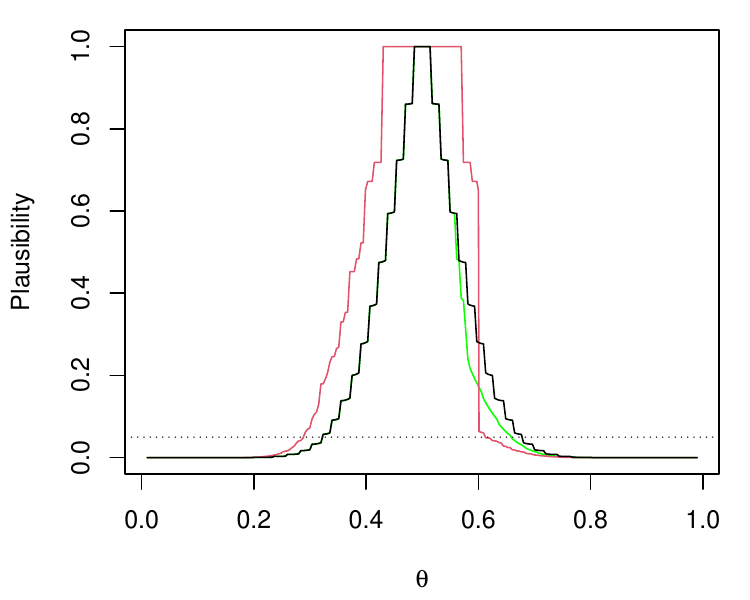}}}
\subfigure[$n=32$, $\hat\theta=0.75$]{\scalebox{0.55}{\includegraphics{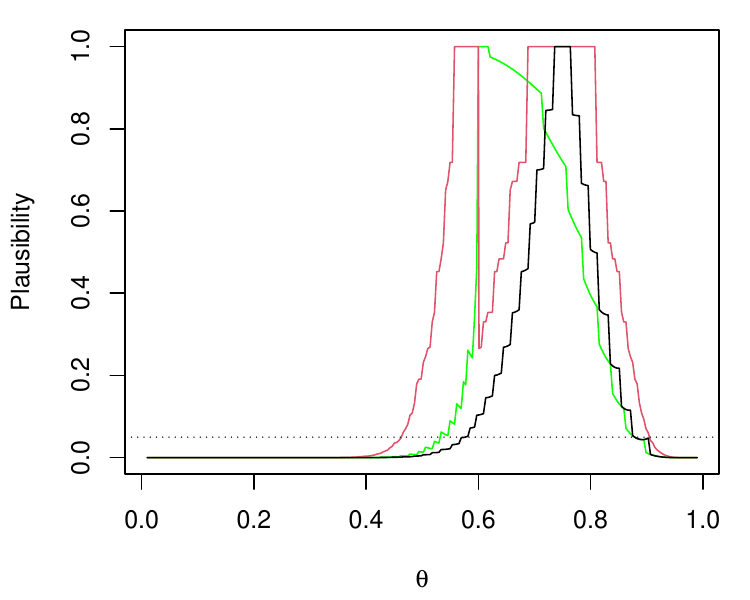}}}
\end{center}
\caption{Plausibility contours for the three IMs in each of the six $(n,\hat\theta)$ combinations in Example~\ref{ex:binomial}.  Black line is based on the vacuous prior, red line is based on the partial prior, and the green line is based on the naive complexity-reduced partial prior. In Panel~(b), the green line has a spike of height 1 at $\theta=0.6$, but this isn't visible in the plot.}
\label{fig:binomial}
\end{figure}

\begin{ex}[Normal]
Consider a simple case where $(Y \mid \Theta=\theta) \sim \prob_\theta=\nm(\theta, \sigma^2 n^{-1})$, where $\sigma > 0$ and $n \in \{1,2,\ldots\}$ are known. This is equivalent to starting with a collection of $n$ iid $\nm(\theta, \sigma^2)$ samples and reducing down via sufficiency.  As in the previous example, the goal here is to compare IMs based on vacuous, complete, and partial prior information.  Suppose there is reason to believe that $\Theta$ is near 0; this is common in cases where $\Theta$ represents, say, a difference between the mean responses corresponding to two similar treatments.  More concretely, suppose the available prior information says that, {\em a priori}, $\E(\Theta) = 0$ and $\E|\Theta| \leq K$, where $K > 0$ is a given constant that could, e.g., be elicited by asking subject matter experts how large they think the signal might be.  This prior information can be processed in various ways, but here's two:
\begin{itemize}
\item Take a precise prior that assumes $\Theta \sim \nm(0, \tau^2)$, where $\tau^2$ is chosen to match the elicited upper bound on the expected signal size, i.e., $\tau = K(\pi/2)^{1/2}$.  This is an embellishment on the prior information that was given, whose justification is based on making the computation of the Bayesian posterior distribution (and the complete-prior IM) straightforward. 
\item Take an imprecise prior that's consonant with contour 
\[ q(\theta) = \min\{1, K |\theta|^{-1} \}, \quad \theta \in \RR. \]
This is a so-called {\em Markov prior} based on the conversion of Markov's inequality into a class of imprecise probability distributions; see, e.g., \citet{dubois.etal.2004}.  This imposes far less structure on the prior distribution than that above.  It's also quite simple compared to the convoluted notion of ``weakly informative priors'' often advocated for in applied Bayesian analyses \citep[e.g.,][]{gelman.weak.2008}.  
\end{itemize} 
Figure~\ref{fig:normal} shows the plausibility contour for three different IMs: one that ignores the prior information, one that incorporates the complete prior in the first bullet point above, and one that incorporates the partial prior in the second bullet point.  There are two plots corresponding to two different observations, namely, $y=0.8$, which is consistent with the prior information, and $y=1.5$ which is just beyond what the prior would suggest is perfectly possible.  In the former case, the complete prior pulls the contour closer to the origin, whereas the partial prior IM keeps the peak at the observed $y$ while tightening the contour slightly compared to the vacuous prior IM.  In the latter case, both prior-dependent IMs pull the contour toward the origin, but the partial-prior IM shrinks less aggressively than the complete prior IM.
\end{ex}

\begin{figure}[t]
\begin{center}
\subfigure[$y=0.8$]{\scalebox{0.55}{\includegraphics{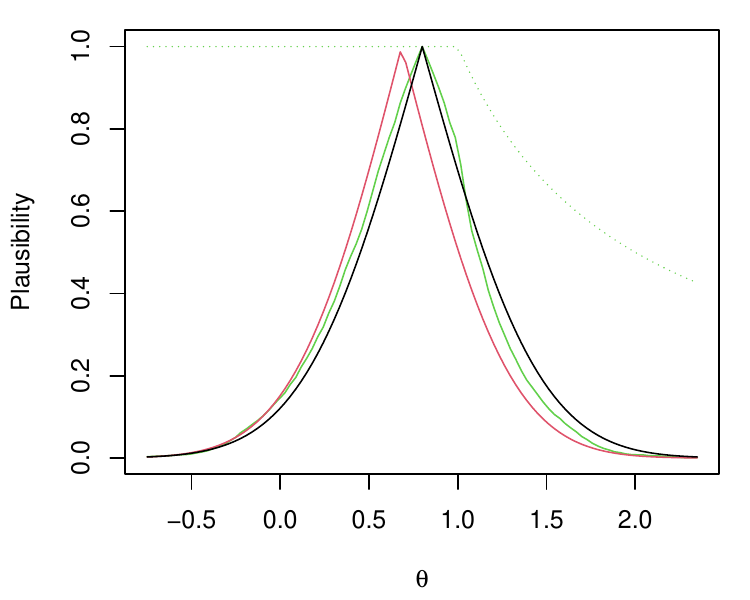}}}
\subfigure[$y=1.5$]{\scalebox{0.55}{\includegraphics{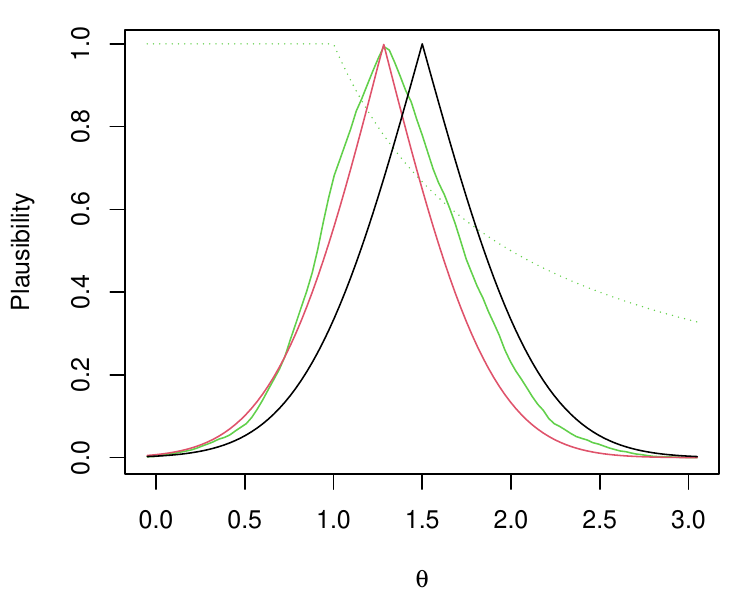}}}
\end{center}
\caption{Plots of the plausibility functions for the three IMs, based on different prior inputs, with $\sigma=2$, $n=15$, and $K=1$.  Black line is based on the vacuous prior, red line is based on the complete beta prior, and the green line is based on the partial prior; the dashed green line is the prior contour. The wiggliness in the plot of the partial prior IM contour is due to the Monte Carlo approximation.}
\label{fig:normal}
\end{figure}

\begin{ex}[Discrete uniform]
\label{ex:discrete.unif}
Suppose that, given $\Theta=\theta$, the data $Y=(Y_1,\ldots,Y_n)$ consists of iid random variables from a discrete uniform distribution, i.e., $\unif\{1,2,\ldots,\theta\}$, and the unknown parameter $\Theta$ takes values in $\TT=\{1,2,\ldots\}$, the (discrete) set of natural numbers.  Given $Y=y$, the likelihood function is 
\[ L_y(\theta) = \theta^{-n} \, 1(\theta \geq \hat\theta_y), \quad \theta \in \TT = \{1,2,\ldots\}, \]
where $\hat\theta_y$ is the sample maximum---the minimal sufficient statistic and the maximum likelihood estimator.  Start with the case of vacuous prior information about $\Theta$.  It's easy to show that the IM's plausibility contour has a closed-form expression:
\[ \pi_y(\theta) = \begin{cases}
( \hat\theta_y /\theta )^n & \text{for $\theta=\hat\theta_y, \hat\theta_y + 1, \ldots$} \\ 0 & \text{otherwise}. \end{cases} \]
Clearly this does not define a probability mass function since $\pi_y(\hat\theta_y)=1$; in fact, in the $n=1$ case, the series $\sum_{\theta=1}^\infty \pi_y(\theta)$ diverges.  Panels~(a) and (b) in Figure~\ref{fig:discrete.unif} display the IM's plausibility contour based on $\hat\theta_y = 10$ for $n=1$ and $n=2$, respectively.  The vertical lines are meant to emphasize that this is a function defined only on natural numbers, not on the entire real line.  With a sample of size $n=1$, the data is not especially informative, and this is reflected by how slow the plausibility contour decays as a function of $\theta$.  But with even just one more sample, i.e., $n=2$, it's possible to narrow down the range of sufficiently plausible values of $\theta$ considerably.  

Next, consider the potentially more interesting case where partial prior information about $\Theta$ is available.  As above, this is just one illustration of the kind of partial prior information that might be available, I'm not suggesting what follows as any sort of ``default'' choice for users.  The situation I have in mind is one where the investigator can elicit the function $\theta \mapsto \uprior(\Theta \geq \theta)$ and, for concreteness, suppose it's 
\[ \uprior(\Theta \geq \theta) = \frac{a}{a \vee \theta}, \quad \theta \in \TT, \quad \text{$a \in \TT$ fixed}. \]
The rationale behind this choice is as follows: recall that $\uprior$ serves as an upper bound on a precise prior distribution, so the above specification is saying simply that, {\em a priori}, I wouldn't pay more than \$$a/(a \vee \theta)$ for a gamble that pays \$1 if ``$\Theta \geq \theta$.''  One plausibility contour that's compatible with the above specification is 
\begin{equation}
\label{eq:discrete.unif.q}
q(\theta) = \frac{a}{a \vee \theta}, \quad \theta \in \TT. 
\end{equation}
This isn't the only choice of $q$ that's compatible with $\uprior$ above, but it's arguably the simplest---giving preference to the lower bound in the propositions ``$\Theta \geq \theta$'' is like Occam's razor.  The corresponding baseline plausibility ordering in \eqref{eq:eta.y} is given by 
\[ \eta(y,\theta) = \Bigl( \frac{\hat\theta_y}{\theta} \Bigr)^n \cdot \frac{a \vee \hat\theta_y}{a \vee \theta} \cdot 1(\theta \geq \hat\theta_y), \quad \theta \in \TT. \]
As a first step towards the partial prior-based IM plausibility contour, I get 
\begin{align*}
\prob_{Y|\vartheta}\{ \eta(Y,\vartheta) \leq \eta(y,\theta) \} & = \prob_{Y|\vartheta}\Big\{ \Bigl( \frac{\hat\theta_Y}{\vartheta} \Bigr)^n \frac{a \vee \hat\theta_Y}{a \vee \vartheta} \leq \Bigl( \frac{\hat\theta_y}{\theta} \Bigr)^n \frac{a \vee \hat\theta_y}{a \vee \theta} \Bigr\} \\
& = \sum_{u=1}^\vartheta 1\Big\{ \Bigl( \frac{u}{\vartheta} \Bigr)^n \, \frac{a \vee u}{a \vee \vartheta} \leq \Bigl( \frac{\hat\theta_y}{\theta} \Bigr)^n \, \frac{a \vee \hat\theta_y}{a \vee \theta} \Bigr\} \,  f_\vartheta(u), \quad \theta \geq \hat\theta_y, 
\end{align*}
where $f_\vartheta(u)$ is the mass function for $\hat\theta_Y$ under the iid $\unif\{1,2,\ldots,\vartheta\}$ model, i.e., 
\[ f_\vartheta(u) = \Bigl( \frac{u}{\vartheta} \Bigr)^n - \Bigl( \frac{u-1}{\vartheta} \Bigr)^n, \quad u=1,2,\ldots,\vartheta. \]
The expression in the penultimate display can be evaluated numerically for any $\vartheta$, hence the IM's plausibility contour can be evaluated using numerical integration:
\begin{align*}
\pi_y(\theta) & = \int_0^1 \Bigl[ \sup_{s: q(\vartheta) > s} \prob_{Y|\vartheta}\{ \eta(Y,\vartheta) \leq \eta(y,\theta) \}  \Bigr] \, ds 
\end{align*}
Panels~(c) and (d) in Figure~\ref{fig:discrete.unif} shows the plausibility contours for $n=1$ and $n=2$, respectively, with the partial prior depending on $a=7$.  The key observation here is that some efficiency is gained by being able to discount relatively large $\theta$-values thanks to the partial prior contour that vanishes (albeit slowly) in the tail.  
\end{ex}

\begin{figure}[t]
\begin{center}
\subfigure[vacuous prior, $n=1$]{\scalebox{0.55}{\includegraphics{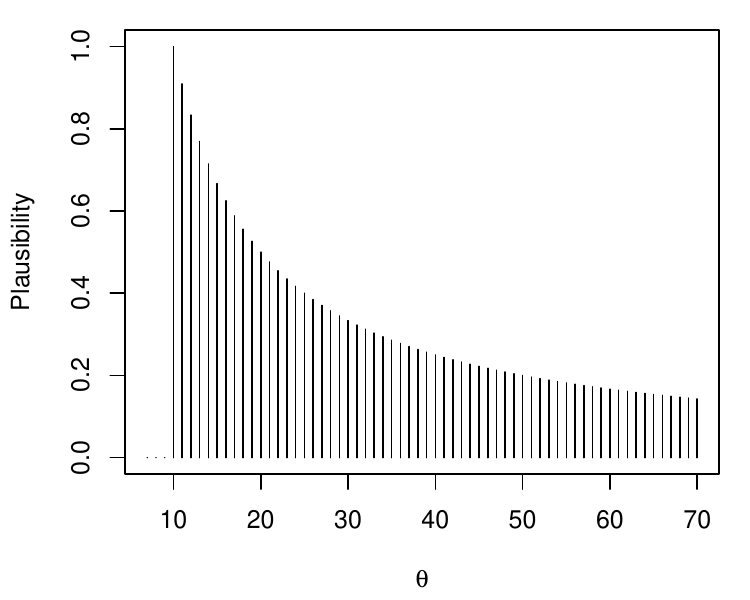}}}
\subfigure[vacuous prior, $n=2$]{\scalebox{0.55}{\includegraphics{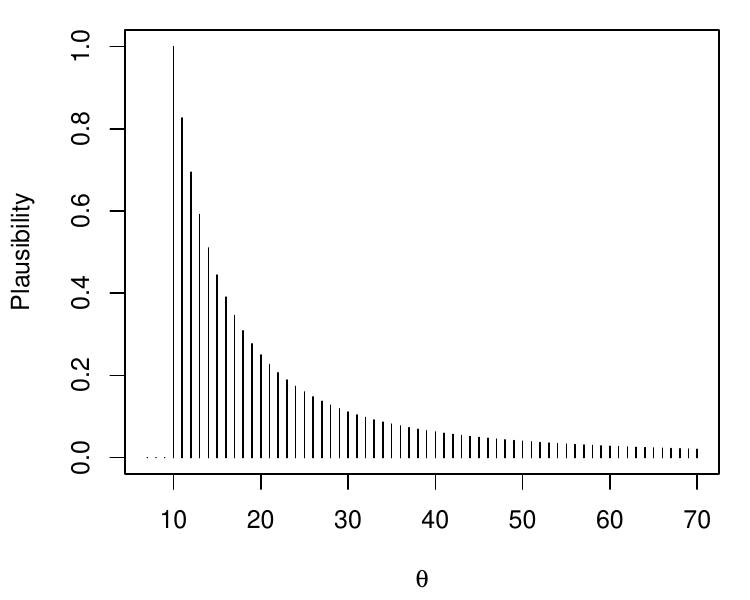}}}
\subfigure[partial prior, $n=1$]{\scalebox{0.55}{\includegraphics{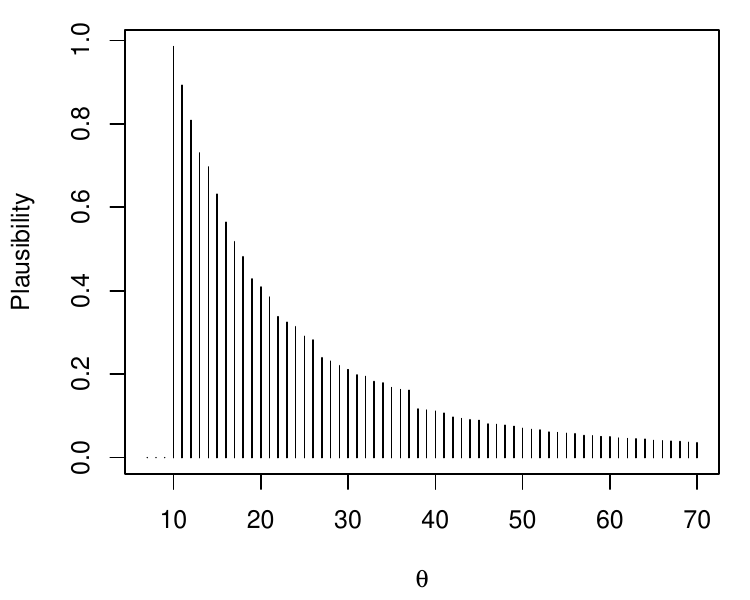}}}
\subfigure[partial prior, $n=2$]{\scalebox{0.55}{\includegraphics{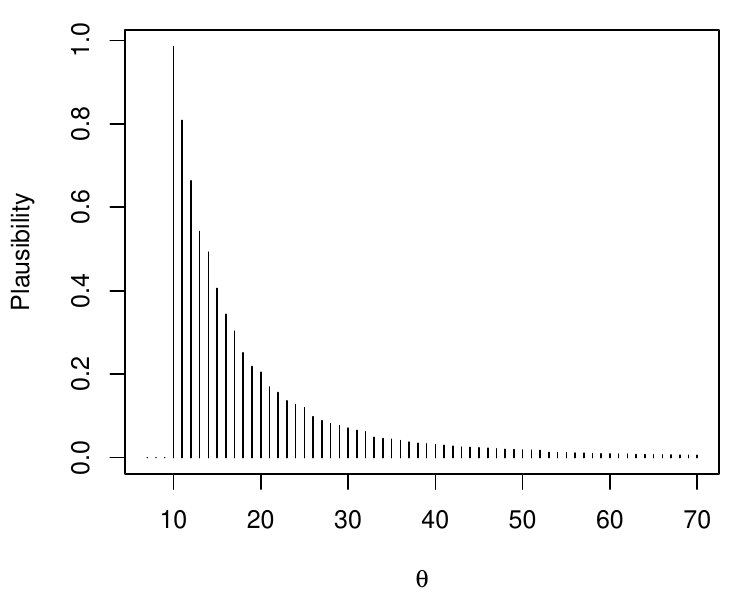}}}
\end{center}
\caption{Plots of the plausibility contour functions $\theta \mapsto \pi_y(\theta)$ in the discrete uniform case presented in Example~\ref{ex:discrete.unif}.  In Panels~(c) and (d), the partial prior being used has contour function $q$ in \eqref{eq:discrete.unif.q} depending on constant $a=7$.}
\label{fig:discrete.unif}
\end{figure}

The next two examples are special in the sense that, despite appearing rather simple at first look, there are lurking challenges.  Because of these lurking challenges, many statisticians would consider these two examples as foundational test cases---if a framework for statistical inference can't satisfactorily handle these (and perhaps other) cases, then the framework itself needs more work.  For example, in both of the examples below, the minimal sufficient statistic has dimension greater than that of the unknown parameter and, consequently, Fisher's original fiducial argument can't be applied; and since Fisher's proposed strategy for accommodating examples like these---namely, conditioning---also doesn't work here, these test cases raise serious questions about the general viability of Fisher's fiducial argument.  Generalized fiducial and default-prior Bayes but can only offer (asymptotically) approximate solutions.  The new likelihood-driven IM construction that I'm advocating here turns out to perform very well in these two examples, offering exact validity and efficiency right off the shelf.  For these two examples, and some of the others in the later section, I'll focus on the vacuous prior case since that's the easiest to compare against existing solutions in the literature. 

\begin{ex}[Uniform, linked endpoints]
\label{ex:uniform}
Let $X=(X_1,\ldots,X_n)$ be iid observables and assume a continuous uniform model $\unif\{a(\theta), a(\theta) + b(\theta)\}$, where the functions $a$ and $b > 0$ are known but the true $\Theta$ value is unknown.  Here I focus on the case where $b(\theta)$ is non-constant; the case with constant $b$ happens to be fundamentally different and will be considered in Example~\ref{ex:location} below.  The prototype is $\unif(\theta, \theta^2)$, with $\TT = (1,\infty)$, so that $a(\theta) = \theta$ and $b(\theta) = \theta^2 - \theta$.  Regardless of the form $a$ and $b$ take, the minimal sufficient statistic is $Y = (Y_1,Y_2)$, with $Y_1 = \min_i X_i$ and $Y_2 = \max_i X_i$.  In what follows, I'll treat the observable data as $Y$ and let $\prob_{Y|\theta}$ denote the distribution of $Y$ induced by that of $X$.  This is a non-regular problem because the distribution's support depends on the parameter, which explains the mismatch between the dimensions of $\Theta$ and of $Y$.  This non-regularity also poses some challenges for default-prior Bayesian analyses that are often based on Jeffreys prior, which depends on the Fisher information matrix that might not exist in such cases; see, e.g., \citet{bergerbernardosun2009} and \citet{shemyakin2014}.  Care was needed in \citet[][Ex.~4]{hannig.review} to develop an efficient generalized fiducial solution in this example, which, like the aforementioned default-prior Bayes solution, only offers asymptotically approximate confidence limits.  Interestingly, the proposed IM solution is easy, exactly valid, and efficient compared to these alternatives.

The likelihood function is simple in this case, though maximizing it for the purpose of finding the relative likelihood need not be straightforward; it depends on the form of $a$ and $b$.  In the prototypical version above, with $a(\theta)=\theta$ and $b(\theta)=\theta^2 - \theta$, the maximum likelihood estimator is $\hat\theta_y = y_2^{1/2}$, so the relative likelihood is 
\[ \eta(y, \theta) = \Bigl\{ \frac{y_2 - y_2^{1/2}}{\theta(1-\theta)} \Bigr\}^n \times 1\{y_1 \geq \theta, \, y_2 - y_1 \leq \theta^2-\theta\}, \quad \theta > 1. \]
There's no closed-form expression for the IM's possibility contour, but a Monte Carlo approximation is straightforward.  Indeed, under $\prob_\theta$, $Y_1$ isn't needed and $Y_2$ has the same distribution as $\theta + (\theta^2 - \theta) U_2$, where $U_2 \sim \bet(n,1)$, so a simple Monte Carlo approximation is, for $\theta \in [y_2^{1/2}, y_1]$, 
\[ \pi_y(\theta) \approx \frac1M \sum_{m=1}^M 1\bigl[ \theta + (\theta^2 - \theta)U_2^m - \{\theta + (\theta^2-\theta) U_2^m\}^{1/2} \leq y_2 - y_2^{1/2} \bigr], \]
where $\{U_2^m: m=1,\ldots,M\}$ are iid $\bet(n,1)$; $\pi_y(\theta) = 0$ otherwise.  This function is strictly decreasing on $\theta \in [y_2^{1/2}, y_1]$ so from here it's easy to compute $\uPi_y(A) = \pi_y(\inf A)$ for any relevant $A \subseteq \TT$ and to extract the IM confidence interval $[y_2^{1/2}, \pi_y^{-1}(\alpha)]$.  

An alternative IM for this problem was put forward in \citet{imunif} using the specialized tool developed in \citet{imcond}, in particular, their so-called localized IM construction.  This is a powerful general strategy, yielding valid and highly efficient inference, but the details are often non-trivial to work out.  \citet{imunif} showed that their construction in the $\unif(\theta,\theta^2)$ example is provably valid and produces confidence intervals that are no less efficient than the very good Bayesian and generalized fiducial confidence limits that only achieve the nominal coverage asymptotically.  An inspection of the solutions put forward here and in \citet{imunif} wouldn't suggest that there's any connection between the two but, remarkably, in my experiments I found that the two solutions are (at least numerically) the same.  So the thorough comparisons presented in \citet{imunif}, which show that their IM is provably valid and as efficient as existing methods, apply directly to the likelihood-based IM proposed here.  

As an illustration, I take the same data example from \citet[][Ex.~4]{hannig.review}, which was also used in \citet[][Sec.~3.3]{imunif}.  That is, the observed data is $y=(281.1, 9689.7)$ based on $n=25$, so that the maximum likelihood estimator is $\hat\theta_y = \sqrt{9689.7} \approx 98.4$.  Figure~\ref{fig:uniform} shows the possibility contours for the IM proposed in \citet{imunif} with that proposed here overlaid.  It's clear that the two solutions are the same, modulo Monte Carlo variation.   
\end{ex}

\begin{figure}[t]
\begin{center}
\scalebox{0.7}{\includegraphics{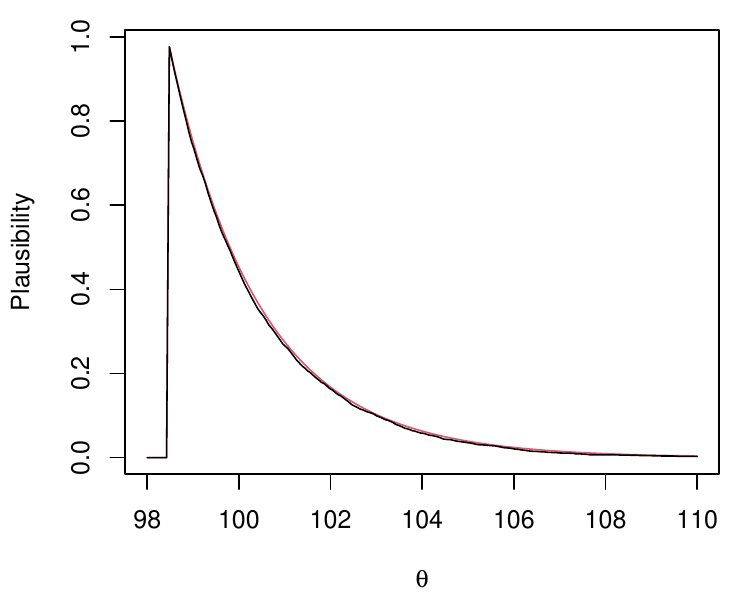}}
\end{center}
\caption{Possibility contours based on the old IM (red) in \citet{imunif} and the new IM here (black) for the $\unif(\theta,\theta^2)$ model, based on a sample of size $n=25$ with sufficient statistic $y=(281.1, 9689.7)$.}
\label{fig:uniform}
\end{figure}

\begin{ex}[Bivariate normal correlation]
\label{ex:bvn.cor}
Consider the classical standard bivariate normal model which has known mean and variance, taken to be 0 and 1, respectively, but with unknown correlation $\Theta \in \TT=[-1,1]$.  This example is considered in \citet{basu1964} and in many other textbook treatments.  More specifically, suppose that, $X=(X_1,\ldots,X_n)$ consists of $n$ iid random variable pairs $X_i = (X_{i1}, X_{i2})$ which are bivariate normal with mean 0, variance 1, and unknown correlation $\Theta \in \TT=[-1,1]$.  It's easy to verify that the minimal sufficient statistic is the pair $Y=(Y_1,Y_2)$ where 
\[ Y_1 = \sum_{i=1}^n (X_{i1}^2+X_{i2}^2) \quad \text{and} \quad Y_2 = \sum_{i=1}^n X_{i1}X_{i2}. \] 
Note, as in the previous example, the mismatch in dimension between the model parameter and the minimal sufficient statistic.  A solution based on the sampling distribution of, say, the maximum likelihood estimator would suffer a loss of information, but there's no clear guidance on what ancillary statistic one should condition on to recover the lost information.  Indeed, both marginal data sets $(X_{i1},\ldots,X_{n1})$ and $(X_{i2},\ldots,X_{n2})$ are legitimate ancillary statistics, so it's not clear how to proceed.  For this reason, care is needed in developing valid and efficient methods for inference on $\Theta$; this includes the default-prior Bayesian solution \citep[e.g.,][Ex.~2]{ong.mukerjee.2010},  likelihood-based solution \citep[][Ex.~4.3]{reid2003}, and generalized fiducial \citep{majumdar.hannig.2015}.  Again, the confidence limits derived from these solutions only offer approximate coverage probability guarantees asymptotically.  The IM solution presented here is relatively easy, exactly valid, and efficient compared to these alternative methods.  

The maximum likelihood estimator has no closed-form expression in this case, but it's easy to find numerically.  Once this is found, the relative likelihood can be evaluated numerically.  So the basic Monte Carlo strategy presented in \eqref{eq:vacuous.MC} can be applied here without any real difficulty to evaluate the IM's possibility contour, and its exact validity follows from the general theory.  For comparison, an alternative IM solution was presented in \citet{imcond} using the localization strategy mentioned briefly in the uniform example above.  The derivation of that solution is non-trivial, but it is exactly valid and was shown in simulations to be more efficient than other existing methods.  

For a numerical comparison, I'll consider a real data example on law school admissions presented in \citet{efron1982} consisting of $n=15$ data pairs $(Y_1,Y_2)$, with $Y_1 = \text{LSAT scores}$ and $Y_2=\text{undergrad GPA}$.  For this analysis, I'll standardize both variables so that the mean zero--unit variance is appropriate.  Of course, this standardization has no effect on the correlation, which is our object of interest.  In this case, the sample correlation is 0.776 and the maximum likelihood estimator is $\hat\theta = 0.789$.  Figure~\ref{fig:bvn.cor} shows the possibility contour plots for the two IM solutions, i.e., the one in \citet{imcond} and the one based on the direct, likelihood-based solution based on the formulation presented here in this paper.  The two IM contour functions are similar but, unlike in the Example~\ref{ex:uniform}, they're not exactly the same.  By definition, the new likelihood-based IM possibility contour is maximized at the maximum likelihood estimator, which is (asymptotically) efficient; the old IM solution gives a contour that's maximized at neither the maximum likelihood estimator nor the sample correlation, so this is difficult to justify.  For this and other reasons, I prefer the new likelihood-based IM solution.  In any case, the range of sufficiently plausible $\theta$ values based on the two solutions in Figure~\ref{fig:bvn.cor} is almost the same and, since the old solution are efficient, the same must be true of the new IM solution.  

To justify the above claim concerning the efficiency of new, likelihood-based IM solution, I carry out a brief simulation study.  I consider two different sample sizes, $n \in \{10, 25\}$, and five different true correlations, $\Theta \in \{0.05, 0.25, 0.50, 0.75, 0.90\}$.  For each $(n,\Theta)$ combination, I generate 1000 data sets and from each data set I extract a 95\% confidence interval based on the new IM solution and the $r^\star$ solution based on details laid out in \citet[][Ex.~4.3]{reid2003}.  Based on the 1000 confidence intervals for each method, I get empirical estimates of their coverage probability and expected length.  These results, summarized in Table~\ref{table:bvn}, suggest that the exactness of the IM solution leads to empirical coverage probabilities closer to the 0.95 target compared to the $r^\star$ intervals; in particular, the $r^\star$ intervals significantly under-cover across the $\Theta$-board in the $n=10$ case.  Naturally, since $r^\star$ isn't exact, the corresponding intervals tend to be shorter, but that's no justification for giving up on exactness.  In any case, the difference between the two methods' lengths closes quickly as $n$ increases.  So, the take-away message is that the IM solution is both exact and (nearly) as efficient as the non-exact $r^\star$ solution. 
\end{ex}

\begin{figure}[t]
\begin{center}
\scalebox{0.7}{\includegraphics{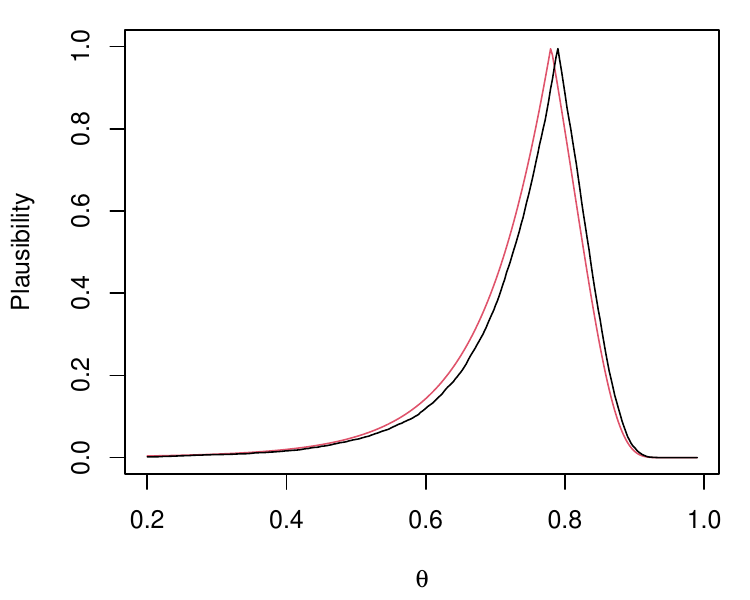}}
\end{center}
\caption{Possibility contours based on the old IM (red) in \citet{imcond} and the new IM here (black) for the standard bivariate normal correlation model, based on Efron's law school admissions data as explained in Example~\ref{ex:bvn.cor}.}
\label{fig:bvn.cor}
\end{figure}

\begin{table}[t]
\begin{center}
\begin{tabular}{cccccccc}
\hline
& & & & & $\Theta$ & & \\
\cline{4-8} 
$n$ & Metric & Method & 0.05 & 0.25 & 0.50 & 0.75 & 0.90 \\
\hline 
10 & Coverage & IM & 0.957 & 0.968 & 0.953 & 0.947 & 0.942 \\
& & $r^\star$ & 0.927 & 0.923 & 0.933 & 0.930 & 0.923 \\
& Length & IM & 1.004 & 0.979 & 0.888 & 0.619 & 0.242 \\
& & $r^\star$ & 0.947 & 0.919 & 0.818 & 0.540 & 0.238 \\
\hline
25 & Coverage & IM & 0.949 & 0.960 & 0.957 & 0.950 & 0.941 \\
& & $r^\star$ & 0.938 & 0.943 & 0.948 & 0.944 & 0.942 \\
& Length & IM & 0.729 & 0.695 & 0.574 & 0.311 & 0.117 \\
& & $r^\star$ & 0.695 & 0.662 & 0.545 & 0.309 & 0.127 \\
\hline
\end{tabular}
\end{center}
\caption{Empirical coverage probabilities and expected lengths for the IM- and $r^\star$-based 95\% confidence intervals in the bivariate normal correlation case described in Example~\ref{ex:bvn.cor}.}
\label{table:bvn}
\end{table}



\subsection{Examples involving conditioning}

The examples above were chosen specifically because conditioning on ancillary statistics either wasn't needed or wasn't possible.  The next few examples, on the other hand, consider cases where conditioning can/should be done.  These are meant to illustrate what conditioning offers and how it can be carried out within the proposed IM framework.  Since the focus here is specifically on conditioning, I'll assume for simplicity that the prior information is vacuous.  This vacuous-prior case also makes it easier to compare with existing solutions in the literature. 

\begin{ex}[First-order autoregression]
\label{ex:ar1}
Here is a simple and somewhat extreme example to highlight the differences that can arise between conditional and unconditional IM solutions; this is based on Example~7.2 in \citet{fraser2004}.  Let $Y=(Y_1,Y_2)$ denote the observables from the first-order autogressive model 
\[ Y_1 \sim \nm(0, \sigma^2) \quad \text{and} \quad (Y_2 \mid Y_1=y_1, \Theta=\theta) \sim \nm(\theta y_1, \sigma^2), \]
where $\sigma^2 > 0$ is known but $\Theta \in (-1,1)$ is unknown; my restriction to the interval $(-1,1)$ is just to avoid challenges associated with non-stationarity.  For simplicity, I'll assume the prior information about $\Theta$ is vacuous.  The likelihood function is 
\[ p_\theta(y) = p(y_1) \, p_\theta(y_2 \mid y_1) \propto e^{-(y_2 - \theta y_1)^2 / 2\sigma^2}, \]
and maximum likelihood over $\theta$, attained at $\hat\theta = y_2 / y_1$, is constant in $y$ and, therefore, can be ignored.  That is, the plausibility order is defined as 
\[ \eta(y, \theta) = e^{-(y_2 - \theta y_1)^2 / 2 \sigma^2}, \quad (y,\theta) \in \RR^2 \times (-1,1). \]
Since $Y_1$ is an ancillary statistic, there's a choice that has to be made:
\begin{itemize}
\item condition on the observed value of the ancillary statistic and carry out the IM construction using the conditional distribution of $S(Y)=Y_2$, given $U(Y)=Y_1$, as described above, or 
\vspace{-2mm}
\item ignore the fact that $Y_1$ is ancillary and carry out the IM construction using the joint distribution of $Y=(Y_1, Y_2)$. 
\end{itemize} 
In this simple case, the IMs based on both the conditional and unconditional perspectives can be worked out in closed-form, and the two contour functions are, respectively, 
\begin{align*}
\pi_{y_2|y_1}(\theta) & = 1 - {\tt pchisq}\Bigl( \frac{(y_2 - \theta y_1)^2}{\sigma^2}, {\tt df} = 1 \Bigr) \\
\pi_y(\theta) & = 1 - {\tt pchisq}\Bigl( \frac{(y_2 - \theta y_1)^2}{\sigma^2\{1 + \theta^2 + \theta/(1-\theta^2)\}}, {\tt df} = 1 \Bigr).
\end{align*}
As described above, both the conditional- and unconditional-based IMs are strongly valid, so the choice between them must be made based on other considerations.  One such consideration is that the conditional-based IM is tailored to the configuration---determined by $y_1$---of the sample, which is a general feature of conditional inference.  More pragmatically, a visualization might be helpful, so a plot of the two contour functions based on data $y=(0.0, 0.2)$ and $\sigma=0.1$ is shown in Figure~\ref{fig:ar1}.   This plot reveals that the unconditional-based IM contour function is sensitive to certain singularities, while the conditional-based IM is much more stable.  Clearly the conditional-based IM is better.  
\end{ex}

\begin{figure}[t]
\begin{center}
\scalebox{0.65}{\includegraphics{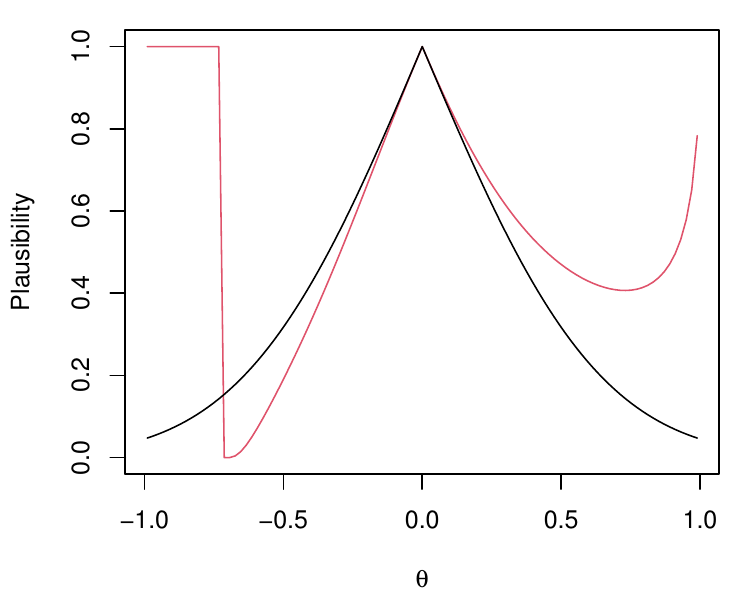}}
\end{center}
\caption{Plot of the conditional- (black) and unconditional-based (red) IM contours in the first-order autoregressive model in Example~\ref{ex:ar1}, with $y=(0, 0.2)$ and $\sigma=0.1$.}
\label{fig:ar1}
\end{figure}

\begin{ex}[Fisher's ``Problem of the Nile'']
\label{ex:nile}
Let $Y=(Y_1,Y_2)$, where $Y_1$ and $Y_2$ themselves are vectors $Y_j = (Y_{j1},\ldots,Y_{jn})$, where 
\[ (Y_{1i} \mid \Theta=\theta) \iid \expo(\theta^{-1}) \quad \text{and} \quad (Y_{2i} \mid \Theta=\theta) \iid \expo(\theta), \quad i=1,\ldots,n. \]
That is, there are $2n$ total samples, all mutually independent, but the $Y_{1i}$'s are exponential with mean $\theta^{-1}$ and the $Y_{2i}$'s are exponential with mean $\theta$.  This problem is a classic one in the conditional inference literature, see Example~1 in \citet{ghoshreidfraser2010}, commonly referred to as (a version of) Fisher's gamma hyperbola \citep[e.g.,][]{efron.hinkley.1978, reid2003, bn.cox.1994}.  Fisher described this model in the context of an  investigation into the fertility of land in the Nile river valley \citep[][p.~169]{fisher1973}.  

The factorization theorem implies that $\{\sum_{i=1}^n Y_{1i}, \sum_{i=1}^n Y_{2i}\}$ is a minimal sufficient statistic.  Equivalently, the pair $\{S(Y), U(Y)\}$, with 
\[ S(Y) = \Bigl( \frac{\sum_{i=1}^n Y_{2i}}{\sum_{i=1}^n Y_{1i}} \Bigr)^{1/2} \quad \text{and} \quad U(Y) = \Bigl(\sum_{i=1}^n Y_{1i} \cdot \sum_{i=1}^n Y_{2i} \Bigr)^{1/2}, \]
make up a minimal sufficient statistic and, moreover, $S(Y)$ is the maximum likelihood estimator and $U(Y)$ is ancillary.  The conditional distribution of $S=S(Y)$, given $U=U(Y)$, under the above model is known to be a generalized inverse Gaussian distribution \citep{bn1977, bn1983} with density 
\[ p_\theta(s \mid u) \propto s^{-1} \exp\{-u(\theta/s + s / \theta)\}, \quad s > 0. \]
This density can be used in the (conditional) IM plausibility contour construction above and, in particular, the suite of {\tt ginvgauss} functions in the R package {\tt rmutil} \citep{R:rmutil} can be used to carry out the probability calculations, either via Monte Carlo and {\tt rginvgauss} or by numerical integration and the density {\tt dginvgauss}.  

Figure~\ref{fig:nile} shows the IM plausibility contour based on data of size $n=20$ with maximum likelihood estimator $S=1$ and with ancillary statistic $U=u$, for $u \in \{10,20\}$.  The plots are on the same scale, so it's easy to see that the larger $u$ corresponds to a slightly more concentrated plausibility contour; so $u$ serves as a sort of ``informativeness'' measure for the sample, which is why, as Fisher argues, conditioning on its value makes intuitive sense.  As an alternative, like in Example~\ref{ex:ar1}, I could've ignored the conditioning and just used the basic likelihood function formulation to construct a (not-conditional) IM for $\Theta$.  Compared to the autoregressive example above, here there's no difference between the two contours---this is due to the scale parameter structure of the problem.  If there's any difference between the two contours, it's that the not-conditional version is a little more wiggly than the conditional version.  The reason for this is the conditional version is based on a lower-dimensional Monte Carlo integral approximation, so there's a Rao--Blackwellization that reduces the variability.  Just to be clear, however, the two contours are the same, the tiny difference in the plot is due to Monte Carlo variability.  
\end{ex}

\begin{figure}[t]
\begin{center}
\subfigure[$u=10$]{\scalebox{0.55}{\includegraphics{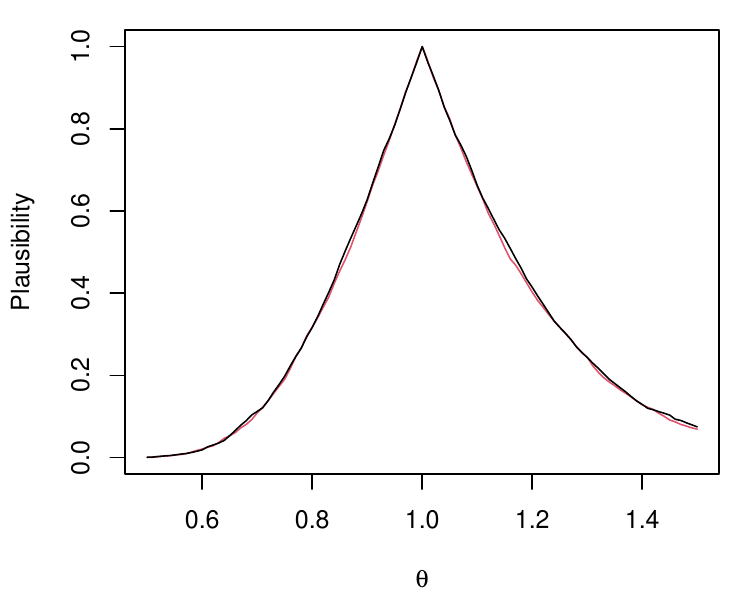}}}
\subfigure[$u=20$]{\scalebox{0.55}{\includegraphics{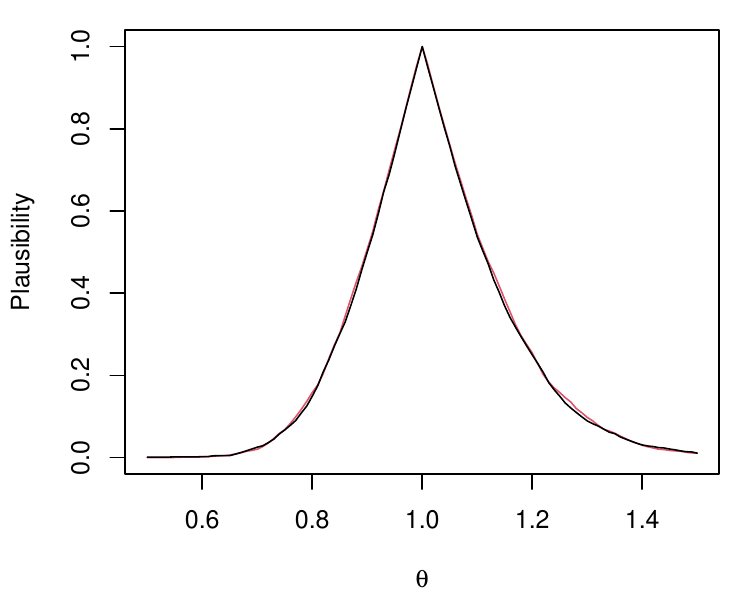}}}
\end{center}
\caption{Plot of the IM plausibility contour for the parameter $\Theta$ in the Nile/hyperbola model in Example~\ref{ex:nile}.  Both plots are based on $n=20$ and the observed maximum likelihood estimator equal to 1, which is where the contour peaks; all that varies between the panels is the value of the ancillary statistic $U=u$.  Black lines are the conditional IM contour, red lines are the not-conditional contour.}
\label{fig:nile}
\end{figure}

\begin{ex}[Location parameters, etc]
\label{ex:location}
For the last example in this subsection, I'd like to show something a little different.  To keep things relatively concrete, I'll consider a location parameter problem where $Y=(Y_1,\ldots,Y_n)$ consists of $n$ iid samples with $Y_i = \Theta + Z_i$, where the $Z_i$'s are iid with a continuous distribution having density $p$; then the density of $Y_i$, given $\Theta=\theta$, is $p_\theta(y) = p(y - \theta)$.  There are lots of examples of this type---in fact, the Nile problem in Example~\ref{ex:nile} is of this form, modulo suitable transformations \citep[][Ex.~3.2]{efron.hinkley.1978}.  Moreover, the location parameter structure is a special case of general kinds of transformation structures \citep[e.g.,][]{fraser1968, eaton1989}, and the IM-specific results summarized here hold in the general case too; for these group-theoretic details, see \citet{martin.isipta2023}. 

A key feature that the location parameter problem---and the aforementioned more general structures---is that there exists a natural/best ancillary statistic upon which to condition.  The exact form is problem-specific, but it generally takes the form of a difference, residual, etc.  For example, if the model is $\{\unif(\theta,\theta+1): \theta \in \RR\}$, then the minimal sufficient statistic is $(Y_{(1)}, Y_{(n)})$, the extreme order statistics, and this is equivalent to the pair $(S,U)=(Y_{(1)}, Y_{(n)}-Y_{(1)})$, where $S=S(Y)$ is the maximum likelihood estimator of the unknown $\Theta$ and $U=U(Y)$ is the natural ancillary statistic.  The difference $U$ quantifies the ``information'' in the same, with large values of the difference indicating sharper inference on $\Theta$; compare this to the role played by ``$U$'' in the Nile example.  All location parameter problems admit a similar decomposition $Y \mapsto (S,U)$ where $U$ is ancillary, though the dimension of the $U$ component can vary from one problem to the next.  Following the general discussion above, all that's needed to construct the valid and efficient IM solution is the conditional distribution of $S$, given $U=u$.  

What's ``different'' about what I want to show here is that there's another way to produce the IM solution in structured problems like these.  \citet{martin.isipta2023} demonstrated a connection between the IM solution and the default-prior Bayes (and fiducial) solutions.  Specifically, the IM's upper probability is just the probability-to-possibility transform applied to $\prior_y$, the Bayesian posterior distribution for $\Theta$, given $Y=y$, based on the default right Haar prior.  In location parameter problems, the right and left Haar priors are the same and equal to Lebesgue measure---an improper prior distribution.  Regardless of the prior impropriety, this formulation admits a proper posterior, so I can get the IM solution exactly like in the complete-prior case discussed in Section~\ref{SS:complete}.  That is, the IM's plausibility contour is given by 
\[ \pi_y(\theta) = \prior_y\{ q_y(\Theta) \leq q_y(\theta) \}, \quad \theta \in \TT, \]
where $q_y$ is the density of $\prior_y$.  This can be approximated via Monte Carlo as 
\[ \pi_y(\theta) \approx \frac1M \sum_{m=1}^M 1\{ q_y(\Theta_m) \leq q_y(\theta)\}, \quad \theta \in \TT, \]
where $\{\Theta_m: m=1,\ldots,M\}$ is an iid sample from the posterior $\prior_y$.  To be clear, this connection to the Bayes solution is just a coincidence resulting from the problem structure---I'm not assuming that the default prior is ``correct'' and the validity achieved by this IM is with respect to the vacuous prior assumption.  Moreover, whatever good properties the default-prior Bayes solution satisfies are inherited by the IM solution, e.g., the Bayesian credible intervals are exact confidence intervals in these structured problems.  

For a quick numerical illustration, I'll consider the Cauchy location problem, where $Y=(Y_1,\ldots,Y_n)$, given $\Theta$, is an iid sample of size $n$ from $\cau(\theta,1)$.  In this case, the minimal sufficient statistic is $Y$, so sufficiency alone provides no reduction in dimension.  However, there are various $(S,U)$ decompositions that all yield the same relevant conditional distribution, e.g., 
\[ S = Y_1 \quad \text{and} \quad U=\{Y_2-Y_1,\ldots,Y_n-Y_1\} \]
or 
\[ S = \text{MLE} \quad \text{and} \quad U=\{Y_1-S, \ldots, Y_n-S\}. \]
The conditional distribution is tedious to work out directly, but there are shortcuts, e.g., the so-called ``magic $p^\star$ formula'' \citep[e.g.,][]{efron1998, fisher1934, bn1983, reid2003} is exact for this and other structured inference problems.  However, the default-prior Bayes solution only requires the (unnormalized) posterior density
\[ q_y(\theta) \propto L_y(\theta) = \prod_{i=1}^n \frac{1}{1 + (Y_i - \theta)^2}, \]
and note that there's no need to directly consider conditioning on the $U$ component; this is because the Bayes solution effectively conditions on all of the observed $Y$.  For a visualization, I simulated $n=10$ samples from a $\cau(0,1)$ model and the plot of the corresponding IM plausibility contour is shown in; the observed maximum likelihood estimator is $\hat\theta=-0.18$.  Thresholding this contour function at level $\alpha \in (0,1)$---the plot shows cutoff $\alpha=0.05$---returns an exact $100(1-\alpha)$\% confidence interval for $\Theta$. 
\end{ex}

\begin{figure}[t]
\begin{center}
\scalebox{0.65}{\includegraphics{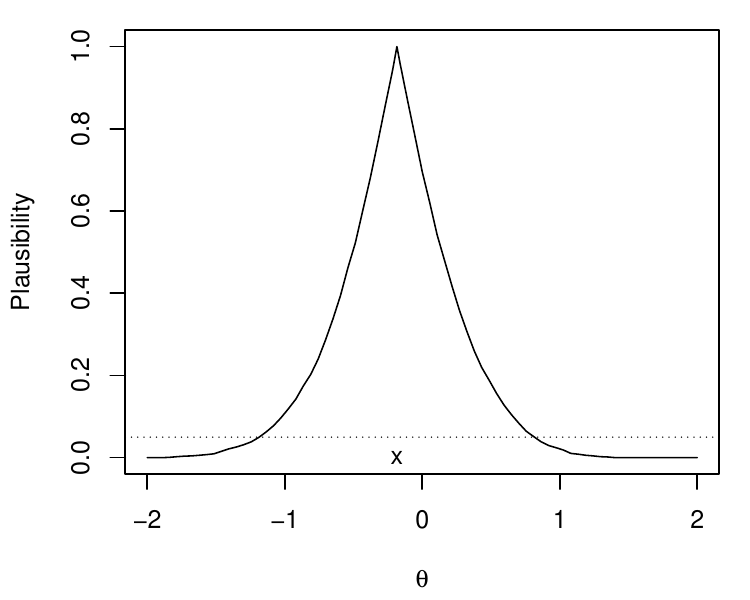}}
\end{center}
\caption{Plot of the plausibility contour function for $\Theta$ in the Cauchy location problem in Example~\ref{ex:normal2}, where $n=10$ and {\sf X} marks $\hat\theta = -0.18$.}
\label{fig:cauchy}
\end{figure}

\subsection{Examples involving multi-parameter models}

The examples presented so far all involve a single, scalar parameter.  Nothing about the methodology being proposed here is specific to scalar parameter models, but there are computational aspects that become more challenging to overcome in multi-parameter cases compared to the one-parameter case.  In particular, the grid on which the plausibility contour is to be evaluated grows exponentially in the dimension of the parameter which, in turn, makes the Monte Carlo approximation \eqref{eq:vacuous.MC} more expensive---there's more distinct data-generating distributions that need to be simulated from.  This is not such a burden in two-parameter cases like considered below, but can easily become too expensive when the parameter is 3 or more dimensions.  There would be various ways to possibly overcome the associated computational challenges, e.g., parallel computation, but a fairly general strategy is what's called {\em importance sampling} \citep[e.g.,][]{tokdar.kass.importance}.  The idea is to use Monte Carlo samples from one data-generating distribution to mimic expected value calculations with respect to another data-dependent distribution.  Specifically, let $\theta^\dagger$ be a distinguished value of the parameter, take samples $\{Y^{(m)}: m=1,\ldots,M\}$ from $\prob_{Y|\theta^\dagger}$, and then compute 
\[ \pi_y(\theta) \approx \frac{\sum_{m=1}^M w_m(\theta) \, 1\{\eta(Y^{(m)}, \theta) \leq \eta(y, \theta)\}}{\sum_{m=1}^M w_m(\theta)}, \]
where 
\[ w_m(\theta) = \frac{p_\theta(Y^{(m)})}{p_{\theta^\dagger}(Y^{(m)})}, \]
is the so-called ``importance weight'' that adjusts for the fact that the $Y^{(m)}$'s are generated from $\prob_{Y|\theta^\dagger}$ instead of $\prob_{Y|\theta}$ as in \eqref{eq:vacuous.MC}, and $p_\theta(\cdot)$ is the density of $\prob_{Y|\theta}$.  I'll employ this importance sampling strategy in each of the examples below, using the maximum likelihood estimator $\hat\theta_y$ based on the observed data $Y=y$ for the distinguished $\theta^\dagger$.  Note that, since the relative likelihood depends only on the model's minimal sufficient statistic, say, $S=S(Y)$; so if the sampling distribution of $S$ is known---like in Example~\ref{ex:normal2} below---then $Y^{(m)}$ above can be replaced by samples $S^{(m)}$ of the minimal sufficient statistic, and the importance weight can be adjusted accordingly.  


\begin{ex}[Normal]
\label{ex:normal2}
Consider a normal model $\{\nm(\theta_1, \theta_2^2): \theta_1 \in \RR, \theta_2 > 0\}$, where $\theta_1$ and $\theta_2$ denote the mean and standard deviation, respectively.  Let $Y=(Y_1,\ldots,Y_n)$ consist of $n$ iid random variables from this normal model with unknown $\Theta=(\Theta_1,\Theta_2)$.  The likelihood function, of course, is 
\[ L_y(\theta) \propto \theta_2^{-n} \exp\Bigl[ -\frac{n}{2\theta_2^2} \Bigl\{ \hat\theta_2^2 + (\theta_1 - \hat\theta_1)^2 \Bigr\} \Bigr], \]
where $\hat\theta_1 = n^{-1} \sum_{i=1}^n y_i$ and $\hat\theta_2 = \{ n^{-1} \sum_{i=1}^n (y_i - \hat\theta_1)^2\}^{1/2}$ are the maximum likelihood estimators of $\Theta_1$ and $\Theta_2$, respectively.  Then the relative likelihood is 
\[ \eta(y,\theta) \propto \Bigl( \frac{\hat\theta_2}{\theta_2} \Bigr)^n \exp\Bigl[ -\frac{n}{2\theta_2^2} \Bigl\{ \hat\theta_2^2 + (\theta_1 - \hat\theta_1)^2 \Bigr\} \Bigr], \]
where ``$\propto$'' only includes fixed constants independent of $(y,\theta)$.  The exact sampling distribution of $\eta(Y,\theta)$, given $\Theta=\theta$, is unavailable, so numerical methods are required.  But since the above expression only depends on data through the maximum likelihood estimators/minimal sufficient statistics, and since the sampling distribution of these is known in the present case, numerical evaluation of the corresponding plausibility contour can be slightly simplified.  For illustration, we consider a data set of size $n=10$ with $(\hat\theta_1,\hat\theta_2) = (0,1)$.  Then the corresponding (importance sampling-based) IM plausibility contour is displayed in Figure~\ref{fig:normal2}.  Notice, of course, that the contour peaks at the maximum likelihood values, and decreases as one moves away from the peak in any direction.  Note, also, that the spread of the contour in the $\theta_1$-direction is narrower for small $\theta_2$ values than it is for large $\theta_2$ values; this makes sense because inference on $\Theta_1$ is more challenging/less precise when $\Theta_2$ is large instead of small.  
\end{ex}

\begin{figure}[t]
\begin{center}
\scalebox{0.65}{\includegraphics{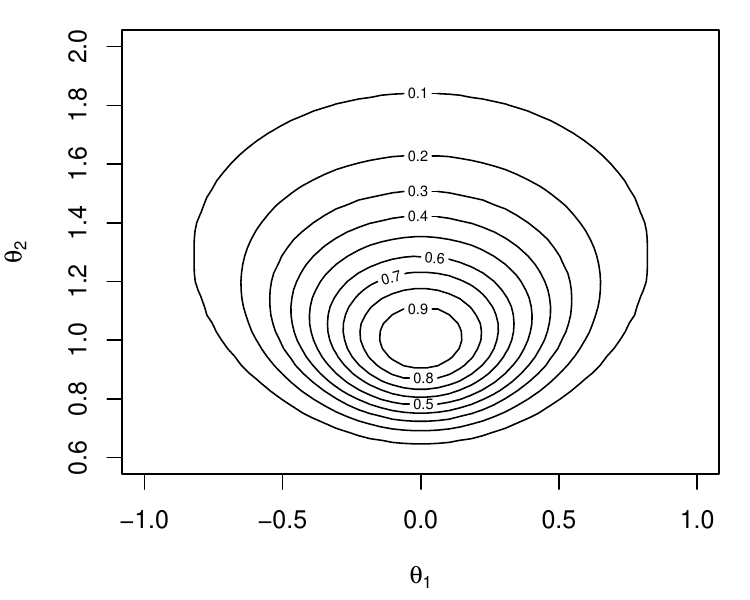}}
\end{center}
\caption{Plot of the plausibility contour function for $\Theta$, the mean and standard deviation of a normal model, from Example~\ref{ex:normal2}, where $n=10$.}
\label{fig:normal2}
\end{figure}

\begin{ex}[Gamma]
\label{ex:gamma2}
Suppose that $Y=(Y_1,\ldots,Y_n)$, given $\Theta=\theta$, is a vector of $n$ iid observations from $\gam(\theta_1, \theta_2)$, where $\theta=(\theta_1, \theta_2)$ represents an unknown and to-be-inferred shape and scale parameter pair.  For simplicity, I'll assume the prior information about $\Theta$ is vacuous, so the details in Section~\ref{SS:vacuous} are relevant here.  There's no closed-form expression for the relative likelihood, $\eta(y,\theta)$, in this example, but it's easy to evaluate numerically.  Consequently, the IM plausibility contour can readily be approximated using the importance sampling strategy described above.  For illustration, I simulated data of size $n=25$ from a gamma distribution with $\theta_1=7$ and $\theta_2=3$.  A plot of the (importance sampling-based) plausibility contour function is shown in Figure~\ref{fig:gamma2}.  The outer-most level corresponds to $\pi_y(\theta)=0.1$, so this is a 90\% plausibility region for $\Theta$, which clearly contains the true $\Theta$ in this case.  
\end{ex}

\begin{figure}[t]
\begin{center}
\scalebox{0.65}{\includegraphics{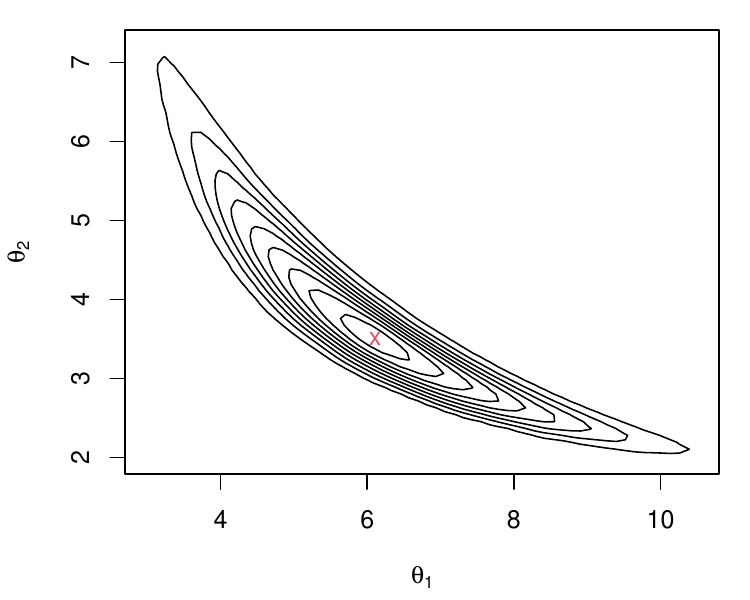}}
\end{center}
\caption{Plot of the plausibility contour function for $\Theta$, the shape and scale parameters of a gamma model, from Example~\ref{ex:gamma2}, where $n=25$ and the true parameter values are $\Theta_1=7$ and $\Theta_2=3$.  The red {\sf X} marks the maximum likelihood estimator.}
\label{fig:gamma2}
\end{figure}

\begin{ex}[Multinomial]
\label{ex:multinomial}
A random sample of size $n$ is taken from a population consisting of $K$-many categories, labeled $1,2,\ldots,K$.  The model specifies a probability vector $\theta=(\theta_1,\ldots,\theta_K)$ with 
\[ \prob_\theta(\text{observation is of category $k$}) = \theta_k, \quad k=1,\ldots,K. \]
The parameter space $\TT$ is the $K$-dimensional probability simplex, i.e., 
\[ \TT = \{ u \in \RR^K: u_k \geq 0, \, \textstyle\sum_{k=1}^K u_k = 1\}. \]
The observable data $Y=(Y_1,\ldots,Y_K)$ consists of a frequency table listing the number of observations in each of the $K$ categories; note that $\sum_{k=1}^K Y_k = n$.  The true probability vector $\Theta$ is unknown and to be inferred based on observations $Y=y$.  Since every discrete distribution on $\{1,\ldots,K\}$ can be described by such a $\Theta$ vector, I refer to this as the ``discrete non-parametric'' model.  For this reason, the multinomial model, while relatively simple, is of fundamental importance.  Many of the more general non-parametric developments, such as Bayesian non-parametrics via the Dirichlet process \citep[e.g.,][]{ferguson1973}, are built upon the multinomial model; see, also, the recent discussion paper published in the {\em Journal of the American Statistical Association} \citep{gong.jasa.mult, gong.jasa.rejoinder}.  

More specifically, let $(Y \mid \Theta=\theta) \sim \mult_K(n, \theta)$.  This determines a likelihood function $L_y(\theta) \propto \prod_{k=1}^K \theta_k^{y_k}$, for $\theta \in \TT$.  Assuming vacuous prior information for $\Theta$, for simplicity, the plausibility ordering is determined by the relative likelihood alone, 
\[ \eta(y, \theta) = \prod_{k=1}^K \Bigl( \frac{n \theta_k}{y_k} \Bigr)^{y_k}, \quad \theta \in \TT, \]
where I've plugged in the likelihood function maximizer, which is available in closed form.  From here it's straightforward to evaluate the contour function of the (strongly valid) IM for $\Theta$, via Monte Carlo plus importance sampling.  This is easy to implement but difficult to visualize---like the binomial problem in Example~\ref{ex:binomial} above, the vacuous-prior contour isn't smooth, so a contour plot on a two-dimensional slice looks very messy.  But various features of the IM for $\Theta$ can be visualized, e.g., the derived marginal plausibility contours for the individual components; that is, consider separately $\Theta_1 = f(\Theta)$ and $\Theta_2 = f(\Theta)$ and apply the marginalization formula in \eqref{eq:marginal.contour}.  For illustration, let $K=3$---so that $\Theta$ consists of three components with a linear constraint---and suppose that $y=(5,3,2)$ based on a sample of size $n=10$. Figure~\ref{fig:multinomial} plots the marginal plausibility contours for $\Theta_1$ and $\Theta_2$.  More efficient marginal inference on the components of $\Theta$ is possible, but this requires special considerations; see Part~III of the series. 
\end{ex}

\begin{figure}[t]
\begin{center}
\subfigure[$\hat\theta_1 = 0.5$]{\scalebox{0.55}{\includegraphics{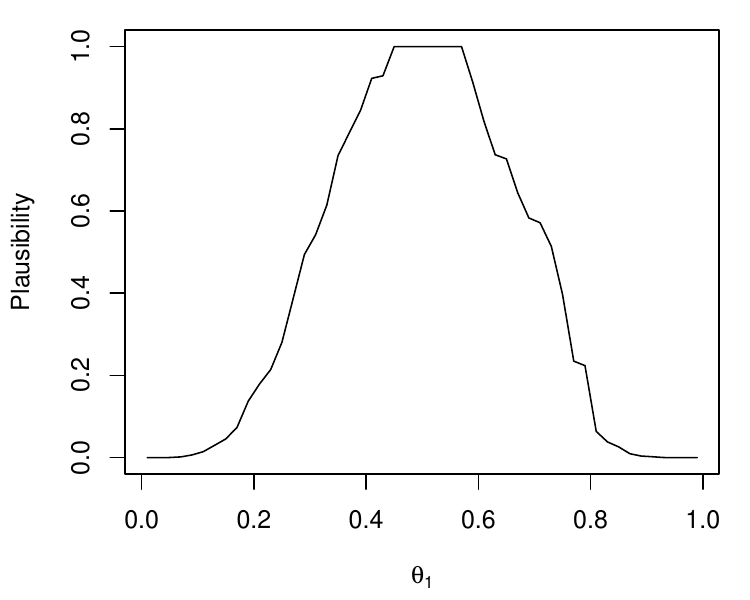}}}
\subfigure[$\hat\theta_2=0.3$]{\scalebox{0.55}{\includegraphics{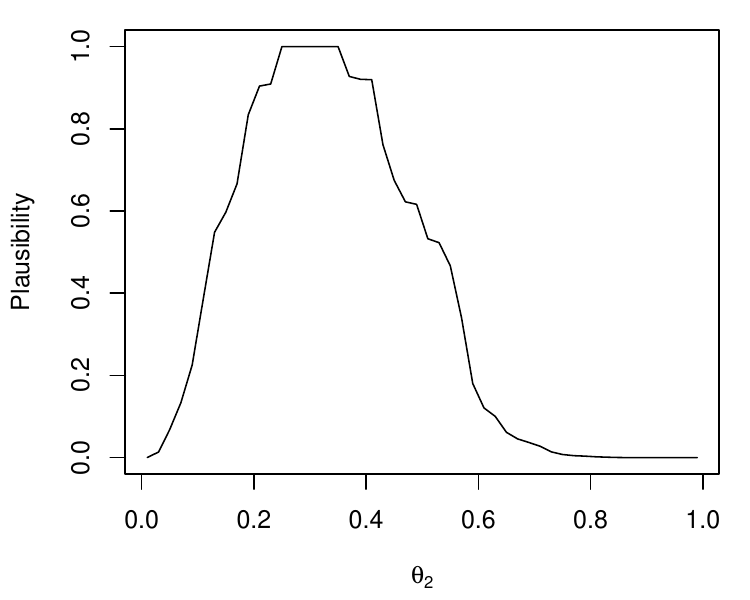}}}
\end{center}
\caption{Plots of the marginal plausibility contours for $\Theta_1$ and $\Theta_2$ for the multinomial model in Example~\ref{ex:multinomial} based on $K=3$ categories and data $y=(5, 3, 2)$.}
\label{fig:multinomial}
\end{figure}

\begin{ex}[Generalized linear models]
\label{ex:glm}
A powerful tool in the applied statisticians' toolbox is the family of {\em generalized linear models}, or GLMs for short \citep[e.g.,][]{McCullaghNelder:1989}.  This includes the usual Gaussian linear regression as a special case, as well as the commonly used {\em logistic regression model} for binary response variables.  For concreteness, here I'll focus on the logistic regression case.  Suppose the data consists of independent pairs $Y_i=(X_i, u_i)$, for $i=1,\ldots,n$, where $u_i \in \RR^d$ is an independent/explanatory variable---which I take to be non-random here---and $X_i \in \{0,1\}$ is the binary dependent/response variable.  The goal is to understand the relationship between the explanatory and response variables.  The logistic regression model assumes that 
\[ (X_i \mid u_i, \Theta=\theta) \ind \ber\{ H(u_i^\top \theta) \}, \quad i=1,\ldots,n, \]
where $H(z) = (1 + e^{-z})^{-1}$ is the logistic distribution function and $\Theta \in \TT \subseteq \RR^d$ is an unknown coefficient vector.  This determines a likelihood function 
\[ L_y(\theta) = \prod_{i=1}^n H(u_i^\top \theta)^{x_i} \{1 - H(u_i^\top \theta)\}^{1-x_i}, \quad \theta \in \TT. \]
Closed-form expressions aren't available here but, as explained in more detail above, it's possible to evaluate the IM's plausibility contour numerically via Monte Carlo.  

For illustration, consider the data in Table~8.4 of \citet[][p.~252]{ghosh-etal-book}, on the relationship between exposure to chloracetic acid ($u$) and the death of mice ($x$).  In particular, $u$ is the acid dosage and $x=1$ if the exposed mice dies and $x=0$ otherwise.  A total of $n=120$ mice are exposed, ten at each of the twelve dosage levels.  Figure~\ref{fig:logistic} shows the (vacuous-prior) IM's plausibility contour for the unknown $\Theta=(\Theta_1,\Theta_2)^\top$.  As expected, the results of this analysis suggests that the death probability is increasing in the dose level---the marginal contour for $\Theta_2$ concentrates on the positive side of 0.  For comparison, the 90\% confidence region based on the asymptotic normality of the maximum likelihood estimator is also shown.  In this case, the two 90\% regions are similar, which suggests that the IM's validity doesn't come at the cost of efficiency.
\end{ex}

\begin{figure}
\begin{center}
\scalebox{0.65}{\includegraphics{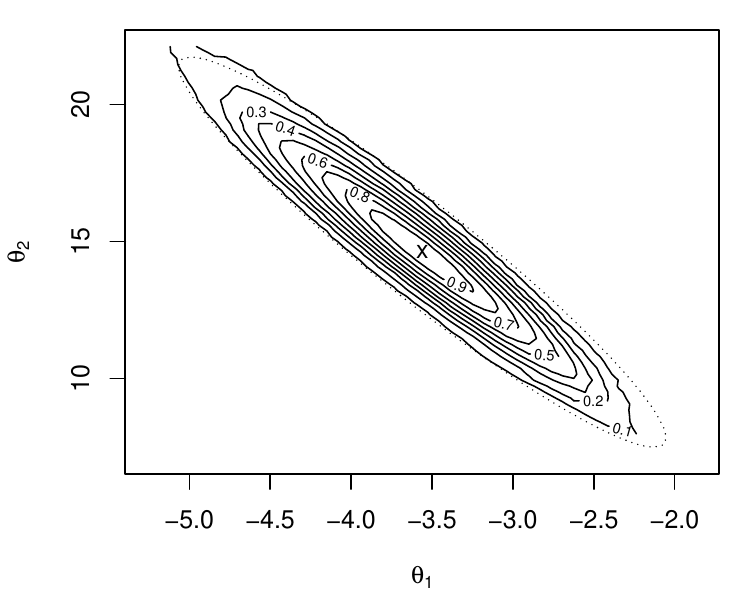}}
\end{center}
\caption{Plot of the plausibility contour for the two-parameter logistic regression illustration in Example~\ref{ex:glm}; dotted line gives the 90\% confidence region for $\theta$ based on asymptotic normality of the maximum likelihood estimator.  }
\label{fig:logistic}
\end{figure}

\section{Imprecise statistical model}
\label{S:imprecise}

The formulation and results in Section~\ref{S:general} cover very general kinds of models, but, so far, I've only directly considered cases where the statistical model is precise and imprecision enters the analysis through the availability of partial prior information about the uncertain quantity of interest.  That is, the starting point is a collection of {\em precise} probability distributions $\model := \{\prob_{Y|\theta}: \theta \in \TT\}$ and imprecision comes from the handling of partial prior information about the unknown $\Theta \in \TT$.  These cases are the most familiar to me and probably to many other statisticians, but these aren't the only cases.  In fact, as I explain below, arguably all real applications should have some degree of imprecision in the model itself, but this is rarely done because (possibly unjustified) assumptions are made to simplify the model and analysis.  In this section, I'll dig into a few such cases.  This will be a rather shallow dig, however, because these cases are important and deserve their own separate and more thorough investigation.  

Start with a common situation in sample surveys and other observational studies where data can be missing \citep[e.g.,][]{little.rubin.book.v2}.  Of course, the data analyst knows whether a data point $Y$ is observed or not, so there's an observable ``observed or missing'' indicator variable $Z$, with $Z=1$ if $Y$ is observed and $Z=0$ if $Y$ is missing.  By the law of total probability, the (marginal) distribution of $Y$ can be expressed as 
\begin{align*}
\prob_{Y|\theta}( B) & = \prob_{Y|\theta}(B \mid Z=1) \, \prob_{Z|\theta}(1) + \underbrace{\prob_{Y|\theta}(B \mid Z=0)}_{\text{unidentified}} \, \prob_{Z|\theta}(0), \quad B \subseteq \YY, 
\end{align*}
where $\prob_{Y|\theta}(\cdot \mid Z=1)$ and $\prob_{Z|\theta}(\cdot)$ are both (precise and) determined by the posited statistical model.  As \citet{manski.book} explains, the highlighted term is ``unidentified'' because, by definition, the values of $Y$ are unobserved when $Z=0$.  Therefore, there's an inherent degree of imprecision that can't be overcome by simply collecting more data.  Manski calls this a {\em partially identified} model, since the conditional distribution of $Y$, given $Z=1$, and the marginal distribution of $Z$ are identified, but the conditional distribution of $Y$, given $Z=0$, is unidentified.  So, without imposing any additional structure/assumptions, the model is genuinely imprecise, i.e., 
\begin{align*}
\lprob_{Y|\theta}(B) & = \prob_{Z|\theta}(1) \, \prob_{Y|\theta}(B \mid Z=1) \\
\uprob_{Y|\theta}(B) & = \prob_{Z|\theta}(1) \, \prob_{Y|\theta}(B \mid Z=1) + \prob_{Z|\theta}(0).
\end{align*}
The degree of imprecision or partial identification is controlled by the magnitude of $\prob_{Z|\theta}(1)$: smaller values mean more imprecision, a larger credal set; in particular, if the value is 1, then the model is fully identified and we recover the precise case discussed in the previous sections and, if it's 0, then the model is fully unidentified or vacuous.  

Naturally, if $Y$ is a vector $(Y_1,\ldots,Y_n)$ of independent data points, then it might be that some $Y_i$'s are observed and some are missing.  In that case, the above discussion applies to each individual $Y_i$, by introducing a corresponding $Z_i$, and then the model for the vector $Y$ is the credal set consisting of all product measures compatible with the imprecise marginals above; that this collection is closed and convex isn't too difficult to check, given that $\prob_{Y|\theta}(\cdot \mid Z=1)$ and $\prob_{Z|\theta}(\cdot)$ are fixed by the posited statistical model.  This upper joint distribution for $Y$, given $\Theta=\theta$, is what's needed to carry out the IM construction described in Section~\ref{SS:construction}, which I explain below.  

To avoid measure-theoretic technicalities, here I'll suppose that each $Y_i$ is discrete with the same finite/countable support $\YY$.  If $\Theta$ has a partial prior, then the ``joint distribution'' for $(Y,\Theta)$ is a credal set consisting of all precise joint distributions corresponding to the product of a $\prob_{Y|\theta}$ in the credal set described above and a probability $\prob_\Theta$ in the prior credal set.  Again, that this collection is closed and convex follows from the discussion above and the assumed closedness and convexity of the prior credal set.  Then the imprecise joint distribution of $(Y,\Theta)$ has a contour function that can be factored as 
\[ \uprob_{Y,\Theta}(\{y,\theta\}) = \uprob_{Y|\theta}(\{y\}) \, \uprob_\Theta(\{\theta\}), \]
the product of the two corresponding credal set contour functions. The first term on the right-hand side can be interpreted as an ``upper likelihood,'' which plays a fundamental role in the literature on statistical inference with imprecise models, e.g., \citet[][Ch.~8.5.3]{walley1991} and \citet{zhang.isr.2010}.  After a closer look at the upper-likelihood term, it's not difficult to see that this simplifies to 
\[ \uprob_{Y|\theta}(\{y\}) =  \prob_{Z|\theta}(1)^{N_1(y)} \prob_{Z|\theta}(0)^{n-N_1(y)} \prod_{i: z_i=1} p_\theta(y_i \mid z_i=1), \quad y=(y_i: z_i=1), \]
where $z_i$ is the not-missing/missing indicator, $p_\theta(\cdot \mid z_i=1)$ is the posited mass function for non-missing $Y_i$'s, and $N_1(y) = |\{i: z_i=1\}|$.  The joint contour will inevitably be normalized as a function of $\theta$, so any factors that depend only on $y$ can be ignored.  For example, if $\prob_{Z|\theta}$ doesn't depend on $\theta$, i.e., if data are {\em missing at random} (MAR), then the leading two terms in the above display can be dropped.  In that case, the missingness indicators are ancillary statistics, cf.~Section~\ref{SS:partial}, so they should be conditioned on like if, say, the sample size were drawn at random from some fully known distribution.  From this point, the IM construction described in Section~\ref{SS:construction} can proceed exactly as before, and the corresponding theoretical properties stated in Section~\ref{SS:properties} apply.  For example, in the discrete uniform problem described in Example~\ref{ex:discrete.unif}, if a simple coin flip---with possibly unknown probability, just not depending on $\Theta$---determines whether an observation is missing or not, then the results presented there don't change.  That is, the solution is just to ignore the missing data and base the IM solution on just the available non-missing data.  If what determines the missing/not-missing characteristic of an observation depends on the unknown $\Theta$, e.g., if extreme observations are more likely to be missing than non-extreme, then that creates some additional challenges.  One approach to overcome these challenges was presented in \citet{imcens} in the context of censored data (see below), but I'll address this more formally elsewhere.  

The all-or-nothing, missing-or-not case discussed above is a special case of a more general situation of what's called {\em coarse data} \citep[e.g.,][]{couso.dubois.2018, guillaume.dubois.2020, heitjan.rubin.1991, schreiber2000}.  An interesting, imprecise-probabilistic perspective is to treat the (coarse) data as set-valued observations that, by definition, contain the respective precise-but-unobservable data point.  A simple example of this case is rounding: all measurement devices have limited precision, so every numerical data point that's ever been recorded is really an interval determined by the measurement device's precision.  Another case is {\em censored} observations common in reliability and biomedical applications, for example, a patient's time of remission occurred sometime between two consecutive check-ups at the clinic \citep[e.g.,][]{km.book}.  Let $Y_i \subseteq \YY$ denote the coarsened version of the ``ideal'' data point $X_i$, with the link between them being that $X_i$ is an {\em almost sure selector} of $Y_i$, which means that $Y_i \ni X_i$ with probability~1 \citep[][Ch.~2.5]{nguyen.book}.  The missing-data problem described above corresponds to the special case where $Y_i$ equals either $\{X_i\}$ or $\YY$ and the missingness indicator $Z_i$ determines which of the two cases it is.  Under this coarsening framework, there would often be a posited model $\prob_{X|\theta}$ for the ideal data point $X_i$ and some description of the coarsening process $X_i \mapsto Y_i$, which may be stochastic and might depend on $\theta$.  These together determine a marginal distribution $\prob_{Y|\theta}$ for the observable coarsened data $Y$, whose probability mass function---with admittedly sloppy notation---is given by
\[ p_\theta(y_i) = \sum_{x_i \in y_i} p_\theta(y_i \mid x_i) \, p_\theta(x_i), \quad y_i \subseteq \YY. \]
This expression pools together the model for $X_i$ and a conditional distribution of $Y_i$, given $X_i$, which describes the coarsening process; in general, both of these pieces depend on $\theta$.  Without additional assumptions, one must deal directly with random sets and their distributional properties, which is equivalent to working with an imprecise model or a set of precise distributions for the unobservable $X_i$'s.  Those statisticians who frequently deal with coarsened or censored data might not recognize this need to consider random sets, and that's because the typical applications are presented with additional assumptions, implicitly or explicitly stated, that simplify the problem.  For example, the common {\em coarsening at random} (CAR) assumption implies that $p_\theta(y_i \mid x_i)$ is constant in $(x_i,\theta)$.  In that case, when considering likelihoods, as in this paper's approach, proportionality constants can be ignored, and the expression in the above display simplifies to 
\[ p_\theta(y_i) \propto \sum_{x_i \in y_i} p_\theta(x_i), \quad y_i \subseteq \YY, \]
which is just a model-based (precise) probability calculation, namely, that the ideal data $X_i$ happens to be contained in $y_i$.  This amounts to ignoring the coarsening process and, in turn, the random set aspect of the problem, because CAR assumes that the coarsening is non-informative.  Whether CAR is a justifiable assumption is determined by the context of the problem, not by the convenience it affords the data analyst.  

These problems, and the corresponding IM constructions, are more nuanced when covariates are involved, as is very often the case in applications.  The theory developed here covers such cases, but the details require more care and will be addressed in a follow-up paper.  In particular, causal inference \citep{imbens.rubin.book} is an important application involving structurally-missing data that will be investigated elsewhere.  

Related to the above discussion, a terrifying reality is that there surely are variables relevant to the phenomenon under investigation that were not measured and, therefore, entirely missing.  I say this is ``terrifying'' because ignoring the unmeasured variables implies that there's always a risk of getting trapped by Simpson's paradox---where the inference would be entirely different had the unmeasured variable been considered in the analysis.  In principle, an imprecise model can accommodate such cases by, say, considering simultaneously all the joint distributions for measured and unmeasured variables that are compatible with the posited model for the measured variables.  I'm not suggesting this as a practical strategy, however, because one can quickly get deep into the ``what if'' rabbit hole and end up unable to do anything with the data that were collected.  

I've focused here on imprecise statistical models that result from imprecision in the data itself.  But that's not the only way that imprecision in the model might appear.  For example, one might posit a (parametric) neighborhood model to achieve a certain degree of robustness to model misspecification bias.  This might be an attractive and perhaps a computationally simpler alternative to non-parametric approaches that achieve this flexibility by identifying the statistical model as the parameter.

\section{Conclusion}
\label{S:discuss}

This paper develops a general framework for constructing an IM for data-driven uncertainty quantification about unknowns---model parameters, future observables, and general functionals of the underlying distribution---that can accommodate very general kinds of models, including cases with partial prior information, and has certain statistical and behavioral guarantees.  More specifically, on the statistical side, the IM output is strongly valid in the sense that the random variable $\pi_Y(\Theta)$, the IM's plausibility contour, as a function of $(Y,\Theta)$ with distribution compatible with the posited imprecise probability model, is stochastically no smaller than uniform; this property holds exactly, no asymptotic approximations necessary.  This result has two important consequences:
\begin{itemize}
\item In terms of traditional statistical methods, i.e., procedures for making decisions, this result makes crystal clear how the IM's output can be used to construct such procedures with error rate control guarantees.  And don't forget that this error rate control property is {\em relative to the posited $\uprob_{Y,\Theta}$}, with the most stringent case corresponding to a vacuous prior.  So, when non-vacuous partial prior information is incorporated, the strong validity property is weaker, easier to satisfy, and, therefore, can be achieved by a more efficient IM with tighter contour.  This makes clear the trade-off: my IM will be valid no matter what, but if I can justify stronger model assumptions, then I can expect greater efficiency.  
\vspace{-2mm}
\item Perhaps more importantly, in terms of probabilistic reasoning, inference can go wrong when the event ``hypothesis $A$ is true but the IM's upper probability $\uPi_y(A)$ is small'' occurs.  In applications, of course, there's no way to know if $\uPi_y(A)$ is rightfully small or if the bad event above has occurred.  Since probabilistic reasoning only works if the (imprecise) probabilities it's based on are reliable, having some calibration guarantees is absolutely essential.  The results presented here show how this can be achieved exactly, not approximately, in virtually any application, any kind of model, and any degree of precision/imprecision.  
\end{itemize}
On the behavioral side, first, the IM output has the mathematical form of a coherent lower/upper probability, so there's an automatic ``avoids sure loss'' conclusion.  Second, the IM output also partially (and often fully) achieves a certain updating coherence property that relates to how prior beliefs are modified in light of observations.  These two coherence properties together imply a sort of internal rationality behind the proposed framework, in addition to the external rationality derived from its validity/reliability properties.  That the IM framework is unable to achieve both valdity and full-blown update coherence is not an issue for me.  My conjecture is that it's impossible to achieve both simultaneously, at least not without loss of efficiency, so since reliability is my top priority, I'd prefer to sacrifice (a little) on the behavioral side if it means validity and efficiency are in reach on the statistical side.  

The driving force behind these new developments is the construction of a suitable outer consonant approximation of the posited (imprecise) joint distribution for $(Y,\Theta)$.  The idea itself isn't new, but it's not been used in a statistical context like this before, at least not to my knowledge.  In retrospect, it's really an obvious solution: consonance is needed in order to achieve strong validity, and some kind of ``dominance'' is required to maintain a connection to the posited (imprecise) model, but the model itself isn't consonant, so just take an upper consonant approximation thereof!  An important consequence of this outer approximation is that it ends up being likelihood-driven, which has a number of advantages.  For one thing, it makes intuitive sense that the IM ought to rely heavily on the (relative) likelihood, since that's what contains all the relevant information in the data.  More than that, it's well-known that likelihood-based methods enjoy certain optimality properties, at least under regularity conditions, so the IM that's likelihood-driven can expect to share those same optimal statistical guarantees.  

Despite covering a lot of ground in this paper, there are still many unanswered questions.  Below is a short list of a few things I have in mind.  These will likely be topics of follow-up papers in this series.  

\begin{question}
In Section~\ref{SS:part.naive}, I explained how there's a gap in my/our present understanding of how to reduce the partial-prior IM's complexity to boost efficiency.  The vanilla strategy in Section~\ref{SS:partial} is strongly valid and fully incorporates the available prior information, but is generally too conservative, i.e., puts too much weight on the prior.  The naive, complexity-reduced partial-prior IM is simpler and still strongly valid but arguably sits too close to the vacuous-prior IM solution and, therefore, doesn't take full advantage of the available partial prior information.  The question is: how to leverage the partial prior structure and tailor a dimension-reduction strategy that is as efficient as possible?  Looking at the two extremes---vacuous and complete priors---we see the following pattern: in the vacuous prior case I fix the value of $\theta$ and, conversely, in the complete prior case I fix the value of $y$.  Then my intuition is that, in the partial prior case, one needs to fix the value of some feature of both $y$ and $\theta$.  This seems quite interesting, adding some nuance to the old-school conditional/unconditional debates that used to take place in statistics journals and at conferences.  And since the opportunities created by the IM's ability to incorporate partial prior information is a major motivation for this effort, I'd say this is the most important open question at the time I'm writing this.  
\end{question}

\begin{question}
The computations that I carried out\footnote{I'll eventually post my R codes for (at least a subset of) the examples presented in this paper on my website, \url{https://www4.stat.ncsu.edu/~rmartin/}, for anyone who's interested.} for the examples presented in this paper are all relatively simple---either closed-form expressions were available or I used the basic Monte Carlo (or importance sampling) strategy to evaluate the Choquet integral; see, also, \citet{hose.hanss.martin.belief2022}.  This isn't difficult to do in the low-dimensional problems presented here, but how to scale these up to higher-dimensional cases efficiently and without losing accuracy?  There are a number of different ideas that could be pursued, e.g., using better and more efficient Monte Carlo methods and designing problem-specific computational strategies one at a time.  Here especially I'd welcome contributions from others who have more experience on the computational side.  
\end{question}


\begin{question}
In Section~\ref{SS:vacuous}, I pointed out the incompatibilities between methods that satisfy the likelihood principle and those that are valid and efficient.  As I briefly argued there, if I know what stopping rule was used when the data were collected, my choice of statistical framework shouldn't prevent me from using that information if it can improve the efficiency of my inferences.  But these considerations raise an interesting question: if I really don't know what stopping rule was used, then I'd require that my inference be valid across all those data-generating distributions that lead to the same observed likelihood function, so how can I achieve this?  Not knowing the stopping rule boils down to imprecision in the model formulation and, therefore, a natural way to approach this would be to consider an imprecise model in the sense of Section~\ref{S:imprecise}.  A first attempt along this line is presented in \citet{martin.basu}.  
\end{question}

\begin{question}
The IM framework proposed here is, for good reasons (Sections~\ref{S:blocks}--\ref{S:general}), fully likelihood-based.  But there is an important class of problems where the likelihood function itself is intractable, or perhaps there is no likelihood function at all.  The former issue often emerges in complex problems involving hierarchical models or mixtures where the likelihood involves integrals that can't be written in closed-form; the latter emerges in cases, common in machine learning, where the quantity of interest isn't defines as a parameter of a statistical model but as the minimizer of some expected loss function.  If there is effectively no likelihood function available, then how can the IM framework proposed here be applied?  While there is good reason to build up an IM based on the likelihood function when it's available, fortunately, this isn't the only way to proceed with the IM construction.  At its core, the above construction treats the relative likelihood as a (justifiable) measure of the compatibility between a generic value $\theta$ of $\Theta$ and the observed data.  So the data analyst is free to choose a different measure of compatibility, whether this is out of necessity (as in the above explanation) or for the sake of novelty, computational efficiency, etc.  
This latter perspective is very much in line with the IM-based solutions put forward recently in \citet{imconformal.supervised, cella.martin.imrisk} so I think there are promising opportunities to extend the proposal here in this paper to those modern non-likelihood-based settings.  
\end{question}

\begin{question}
So far, I've not felt compelled to explore the asymptotic properties of IMs because strong validity properties are exact in finite samples.  But an asymptotic investigation would be valuable, if only to demonstrate that the same kind of limiting properties users expect still hold for IMs.  In particular, I'm envisioning an imprecise-probabilistic version of the so-called {\em Bernstein--von Mises theorem} showing that the IM's consonant plausibility function asymptotically merges with a suitable ``Gaussian'' plausibility function, e.g., maybe like the Gaussian random fuzzy sets in \citet{denoeux.fuzzy.2022}.  
\end{question}

\begin{question}
The focus here, and in virtually all of the previous IM-related work, is on inference.  But decision-theoretic considerations are interesting and within reach.  Some first developments are presented in \citet{imdec} but this only considered the case with vacuous prior information, so I believe that much more can be done.  This belief is based on an analogy with classical decision theory: there's not much one can say about admissibility or optimality directly, these results are usually established indirectly by connecting it to a Bayes decision problem.  What more/new can be done when the IM incorporates a {\em partial prior}? This adds an interesting new angle to this problem.
\end{question}

Finally, I'd like to briefly comment on the scope of these developments, to follow up on ``big-picture implications'' discussed in Section~3.4 of Part~I.  Data science is booming, as both government and industry are placing increased emphasis on data-driven decision-making.  I can't complain about the opportunities this has afforded those of us in quantitative fields, but I'm also embarrassed about the statistical community's major failure that this attention has shone a spotlight on.  By not resolving our subject's foundational questions, statisticians are forced to admit that there are at least two acceptable approaches/solutions to any given problem.  But if there's two, then why not three, four, ...?  The result is an anything-goes-as-long-as-it-works mentality, a veritable Wild-West of new methods claiming to be ``optimal'' in one sense or another.  This is an absolutely perfect environment for individual researchers publishing papers, and I can appreciate the creative spirit such an environment inspires, but what are the effects this has on the field of statistics and, more importantly, on science and society as a whole?  

The general public is far removed from optimality claims and the Bayesian-versus-frequentist debate, but these are still relevant.  Experts tout the power of data-analytic technology and, based on this, ask the public to ``trust the science.''  Then, time after time, in high-stakes settings, those conclusions that the public was asked to trust are later revised, accompanied by excuses that {\em the data were wrong}, etc.  It's no wonder that trust is lacking!  My complaint isn't that the conclusions are sometimes wrong---this is unavoidable in our line of business.  My complaint is that at least some of what the public is being asked to trust isn't scientific, it's the aforementioned Wild-West.  As a result, the experts exaggerate the degree of certainty or precision in the inferences they draw, in part because virtually all of the existing data-analytic methods insist on precision from start to finish, whether it's justified or not.  Presumably, the experts don't exaggerate precision intentionally, they just don't know how to accommodate imprecision and to make more cautious or conservative inference in a principled way.  Technicalities aside, I believe that what I'm proposing in this series of papers can address this challenge.  First, it provides data analysts with the opportunity to incorporate imprecision if/where it's warranted, and the imprecision they introduce propagates through the analysis and is reflected in more conservative inferences.  Second, thanks to the validity constraint, the IM's output is always imprecise, reflecting the fact that there are inherent precision limits when making justifiable data-driven decisions.  While I believe that more caution and levelheadedness is needed in data science, I understand that there are lots of sociological factors in play (e.g., there's no fortune and fame in making cautious predictions).  Levelheadedness is sure to prevail, and I hope that victory comes sooner rather than later.

\section*{Acknowledgments}

First, thanks to Leonardo Cella and Dominik Hose for helpful discussions and feedback on earlier drafts of this manuscript.  Second, thanks to the students who participated in my Fall 2022 special topics course\footnote{All the course materials, including lecture videos and slides, are publicly available at  \url{https://wordpress-courses2223.wolfware.ncsu.edu/st-790-001-fall-2022/}.}---entitled {\em Imprecise-Probabilistic Foundations of Statistics \& Data Science}---for bearing with me during my lectures on early versions of the material presented here.  Finally, thanks to the U.S.~National Science Foundation, grant SES--2051225, for partially supporting this effort.

\bibliographystyle{apalike}
\bibliography{/Users/rgmarti3/Dropbox/Research/mybib}

\end{document}